\newif\ifsubmit
\newif\ifblind      
\newif\ifcompact    
\newif\ifexabs      
\newif\ifitcs       
\newif\ifllncs
  \theoremstyle{plain}
  \newtheorem{theorem}{Theorem}[section]
  \newtheorem{lemma}[theorem]{Lemma}
  \newtheorem{claim}{Claim}
  \newtheorem{corollary}[theorem]{Corollary}
  \newtheorem{definition}[theorem]{Definition}
  \newtheorem{remark}[theorem]{Remark}
  \newtheorem{conjecture}[theorem]{Conjecture}
    \newtheorem{fact}[theorem]{Fact}
\newcommand{\ketbra}[2]{| #1 \rangle \langle #2 |}
\newcommand{\poly}{\mathrm{poly}}
\newcommand{\calS}{{\cal S}}
\newcommand{\td}{{\sf TD}}
\newcommand{\parties}{{\cal P}}
\newcommand{\ct}{{\sf ct}}
\newcommand{\sk}{{\sf sk}}
\newcommand{\share}{{\sf Share}}
\newcommand{\enc}{{\sf Enc}}
\newcommand{\keygen}{{\sf KeyGen}}
\newcommand{\dec}{{\sf Dec}}
\newcommand{\cnot}{{\sf CNOT}}
\newcommand{\Tr}{\mathsf{Tr}}
\newcommand{\reconstruct}{{\sf Reconstruct}}
\newcommand{\alice}{{\cal A}}
\newcommand{\bob}{{\cal B}}
\newcommand{\charlie}{{\cal C}}
\newcommand{\As}{{\cal A}}
\newcommand{\Bs}{{\cal B}}
\newcommand{\Cs}{{\cal C}}
\newcommand{\adversary}{{\sf Adv}}
\newcommand{\regX}{\mathbf{X}}
\newcommand{\regY}{\mathbf{Y}}
\newcommand{\secparam}{\lambda}
\newcommand{\prob}{\mathsf{Pr}}
\newcommand{\negl}{\mathsf{negl}}
\newcommand{\expt}{\mathsf{Expt}}
\newcommand{\aux}{\mathbf{Aux}}
\newcommand{\cP}{{\cal P}}
\newcommand{\I}{\mathbb{I}}
\newcommand{\pnote}[1]{}
\newcommand{\qipeng}[1]{}
\newcommand{\vipul}[1]{}
\newcommand{\vnote}[1]{}
\newcommand{\jiahui}[1]{}
\newcommand{\pnote}[1]{{\color{red} P:#1}}
\newcommand{\qipeng}[1]{{\color{red} Q:#1}}
\newcommand{\vipul}[1]{{\color{purple} V:#1}}
\newcommand{\vnote}[1]{{\color{purple} V:#1}}
\newcommand{\jiahui}[1]{{\color{blue} J:#1}}
\newcommand{\T}{{\sf T}}
\newcommand{\X}{{\sf X}}
\newcommand{\Z}{{\sf Z}}
\newcommand{\Pgate}{{\sf P}}
\newcommand{\Hgate}{{\sf H}}
\newcommand{\CNOT}{{\sf CNOT}}
\newcommand{\ans}{{\sf ans}}
\title{Unclonable Secret Sharing}
\author{
Prabhanjan Ananth\thanks{University of California, Santa Barbara. \texttt{prabhanjan@cs.ucsb.edu}. }
\and Vipul Goyal\thanks{NTT Research, Carnegie Mellon University. \texttt{vipul@cmu.edu}.}
    \and Jiahui Liu\thanks{{Massachusetts Institute of Technology.}
\texttt{{jiahuiliu@csail.mit.edu}}.}
\and Qipeng Liu\thanks{University of California, San Diego. \texttt{qipengliu0@gmail.com}.}
}
\date{}
\begin{document}

\maketitle

\newcommand{\USS}{{\sf USS}}
\newcommand{\uss}{{\sf{USS}}}

\newcommand{\UE}{{\sf UE}}
\newcommand{\adv}{\mathcal{A}}
\newcommand{\cO}{\mathcal{O}}

\newcommand{\advantage}{\mathsf{adv}}

\begin{abstract}
\noindent Unclonable cryptography utilizes the principles of quantum mechanics to address cryptographic tasks that are impossible to achieve classically. We introduce a novel unclonable primitive in the context of secret sharing, called unclonable secret sharing (USS). In a USS scheme, there are $n$ shareholders, each holding a share of a classical secret represented as a quantum state. They can recover the secret once all parties (or at least $t$ parties) come together with their shares. Importantly, it should be infeasible to copy their own shares and send the copies to two non-communicating parties, enabling both of them to recover the secret. 

Our work initiates a formal investigation into the realm of unclonable secret sharing, shedding light on its implications, constructions, and inherent limitations.
     \begin{itemize}
         \item {\bf Connections}: We explore the connections between USS and other quantum cryptographic primitives such as unclonable encryption and position verification, showing the difficulties to achieve USS in different scenarios. 
         
         \item {\bf Limited Entanglement}: In the case where the adversarial shareholders do not share any entanglement or limited entanglement, we demonstrate information-theoretic constructions for USS. 

         \item {\bf Large Entanglement}: If we allow the adversarial shareholders to have unbounded entanglement resources (and unbounded computation), we prove that unclonable secret sharing is impossible. On the other hand, in the quantum random oracle model where the adversary can only make a bounded polynomial number of queries, we show a construction secure even with unbounded entanglement.  
         
         Furthermore, even when these adversaries possess only a polynomial amount of entanglement resources, we establish that any unclonable secret sharing scheme with a reconstruction function implementable using Cliffords and logarithmically many {\sf T}-gates is also unattainable.  
     \end{itemize}

\qipeng{
Some directions: 
\begin{itemize}
    \item (Computational Assumptions) In the plain model, with unbounded shared entanglement
    \item Interpolate the impossibility, possibly in terms of the number of $T$ gates 
    \item Applications of unclonable secret sharing
\end{itemize}
}
\vipul{Here is a weaker notion of USS which actually suffices for our motivation in the introduction: one of the two reconstructing parties (say Bob) is actually honest and will use the honest reconstruction algorithm. The other one would use any arbitrary construction procedure. We require that if Bob recovers the original Secret, the secret recovered by Charlie must be unrelated to the original secret. So we might be able to obtain stronger constructions for this definition.}

\vipul{How about considering ``uncloneable computation"? Just consider a client storing two shares of her data on two servers: servers should be able to run 2PC on their shares and compute a function output. However even given their views in 2PC (and original shares), they shouldn't be able to clone their shares. Natural application is: Alice secret shares her data across two clouds. But rather than just retrieving the data at a later point, Alice is also interested in computing on the data (e.g., doing a keyword search). But Alice still doesn't want cloud providers to keep copies of her data. Overall, seems like unless we are able to compute on the uncloneable shares, applications of uncloneable secret sharing might be somewhat limited.}

\end{abstract}

\newpage
\setcounter{tocdepth}{2}

\ifllncs\else
\tableofcontents
\newpage
\fi

\section{Introduction}

Alice is looking for storage for her sensitive data. She decides to hire multiple independent cloud providers and secret shares her data across them. Later on, Alice retrieves these shares and reconstructs the data. Everything went as planned. However: what if the cloud providers keep a copy and sell shares of her data to her competitor, Bob? How can Alice make sure that once she retrieves her data, no one else can?

This is clearly impossible in the classical setting. The cloud providers can always keep a copy of the share locally and later, if Bob comes along, sell that copy to Bob. Nonetheless, this problem has been recently studied in the classical setting by a recent work of Goyal, Song, and  Srinivasan~\cite{goyal2021traceable} who introduced the notion of traceable secret sharing (TSS). In TSS, if (a subset of) the cloud providers sell their shares to Bob, they cannot avoid leaving a cryptographic proof of fraud with Bob. Moreover, this cryptographic proof could not have been generated by Alice. Hence, (assuming Bob cooperates with Alice), Alice can sue the cloud providers in court and recover damages. Thus, TSS only acts as a deterrent and indeed, cannot stop the cloud providers from copying the secret.


However, in the quantum setting, the existence of no cloning theorem offers the tantalizing possibility that perhaps one may be able to build an ``unclonable secret sharing'' (USS) scheme. Very informally, the most basic version of a USS can be described as follows:

\begin{itemize}
    \item Alice (the dealer) has a classical secret $m \in \{0,1\}^*$. She hires $n$ cloud providers   ${\cal P}_1,\ldots,{\cal P}_n$. 
    \item Alice computes shares $(\rho_1, \cdots, \rho_n)$, which is an $n$-partite state, from $m$ and sends the share $\rho_i$ to the party ${\cal P}_i$ (note that Alice does not need to store any information like a cryptography key on her own). 
    \item Given $(\rho_1, \cdots, \rho_n)$, it is easy to recover $m$. But given any strict subset of the shares, no information about $m$ can be deduced (i.e., it is an $n$-out-of-$n$ secret sharing scheme).
    \item The most important is the unclonability. For every $i \in [n]$, the party ${\cal P}_i$ computes a bipartite state $\sigma_{\regX_i \regY_i}$. It sends the register ${\regX_i}$ to Bob and $\regY_i$ to Charlie. Assuming that the message $m$ was randomly chosen to be either $m_0$ or $m_1$ (where $(m_0,m_1)$ is chosen adversarially), the probability that both Bob and Charlie can guess the correct message must be upper bounded by a quantity negligibly close to $\frac{1}{2}$.
\end{itemize}

In other words, the parties ${\cal P}_1,\ldots,{\cal P}_n$ must be unable to locally clone their shares such that both sets of shares allow for reconstruction. Indeed, as we mentioned, this is the most basic version of USS. Even this basic setting has a practical significance: the servers which store Alice's shares may not intentionally communicate her shares with each other, because they belong to companies with conflict of interest; but a malicious Bob may still buy a copy of Alice's share from each of them.

One can consider more general settings where, e.g., we are interested in threshold (i.e., $t$-out-of-$n$) USS or, where a subset of the $n$ parties might collude in attempting to clone their shares. One can also consider the setting where the parties ${\cal P}_1,\ldots,{\cal P}_n$ share some entanglement (allowing them to use quantum teleportation). 

Unclonable cryptography leverages the power of quantum information and empowers one to achieve primitives which are clearly impossible in classical cryptography. While a lot of efforts have been made towards various unclonable cryptographic primitives including but not limited to quantum money~\cite{BB84,AC12,Zha17,Shm22,liu2023another}, copy-protection~\cite{Aar09,CLLZ21,AL20}, tokenized signatures~\cite{BDS16,CLLZ21,Shm22} and unclonable encryption (UE)~\cite{Got02,BL20,AK21,AKLLZ22,AKL23}, the question of unclonable secret sharing had not been studied prior to our work. Secret sharing is one of the most fundamental primitives in cryptography and as such, we believe that studying unclonable secret sharing is an important step towards laying the foundation of unclonable cryptography. Our contribution lies in initiating a systematic study of USS. 

\paragraph{Connection to Unclonable Encryption.}
The classical counterparts of unclonable encryption and (2-out-of-2) unclonable secret sharing are very similar. For instance, both one-time pad encryption and 2-out-of-2 secret sharing rely on the same ideas in the classical setting. One may wonder if UE and USS share similar a relation. UE resembles standard encryption with one additional property: now ciphertext is unclonable, meaning no one can duplicate a ciphertext into two parts such that both parts can be used separately to recover the original plaintext. At first glance, it might seem like UE directly implies a 2-out-of-2 USS. To secret share $m$, the dealer (Alice) would generate a secret key $sk$, and compute ciphertext $\rho_{\ct}$, which encrypts the classical message $m$. One of the shares will be $\rho_{\ct}$ while the other will be $sk$. Since $\rho_\ct$ is unclonable, this may prevent two successful reconstructions of the original message. 

However, the above intuition does not work if the two parties in (2-out-of-2) USS share entanglement. In UE, the ciphertext $\rho_{\ct}$ is a split into two components and sent to Alice and Bob. Later on, the secret key $sk$ is sent (without any modification) to both Alice and Bob. However, in USS, the secret key $sk$ corresponds to the second share and might also be split into two register such that one is sent to Alice and the other to Bob. This split could be done using a quantum register which is entangled with the quantum register used to split the cipher text $\rho_{\ct}$. It is unclear if such an attack can be reduced to the UE setting, where there is no analog of such an entangled register. In fact, we show the opposite. We show that in some settings, USS implies UE, thus showing that USS could be a stronger primitive.

\paragraph{Connection to Instantaneous Non-Local Computation.} It turns out that the positive results on instantaneous non-local computation imply negative results on USS in specific settings. The problem of instantaneous non-local computation~\cite{Vai03,beigi2011simplified,speelman15,ishizaka2008asymptotic,GC20} is the following: Dave and Eve would like to compute a unitary $U$ on a state $\rho_{\mathbf{X} \mathbf{Y}}$, where Dave has the register $\mathbf{X}$ and Eve has the register $\mathbf{Y}$. They need to do so by just exchanging one message simultaneously with each other. Non-local computation has connections to the theory of quantum gravity, as demonstrated in some recent works~\cite{may2019quantum, May23}. Suppose there is a unitary $U$ for which non-local computation is possible then this rules out a certain class of unclonable secret sharing schemes. Specifically, it disallows certain reconstruction procedures that are functionally equivalent to $U$. In more detail, consider a USS scheme that is defined as follows: on input a message $m$, it produces shares on two registers ${\bf X}$ and ${\bf Y}$. The reconstruction procedure\footnote{In general, a reconstruction procedure need not output a copy of the secret twice but using CNOT gates, we can easily transform any reconstruction procedure into one that outputs two copies of the secret.} takes as input the shares and outputs $m$ in both registers ${\bf X}$ and ${\bf Y}$. Any non-local computation protocol for such a reconstruction procedure would violate the security of the USS scheme. Investigating both positive and negative results of USS schemes could shed more light on the feasibility of non-local computation. In this work, we adapt and generalize techniques used in the literature on non-local computation to obtain impossibility results for USS. \\

\noindent USS also has connections to position verification, a well-studied notion in quantum cryptography that has connections to problems in fundamental physics. We discuss this in the next section. 
 
\subsection{Our Results}

In this work, our primary emphasis will be on $n$-out-of-$n$ unclonable secret sharing schemes as
even though they are the simplest, they give rise to numerous intriguing questions. Our results are twofold, as below. 

\subsubsection{Results on Information-Theoretic USS}

\begin{figure}[!hbt]
\centering
\begin{tikzpicture}
  \node[anchor=north east] at (10,2) {Information-Theoretic};
  \draw (6,1.5) rectangle (10,2);

  \node (uss1) at (0,0) {$\uss_{1}$};
  \node (ue) at (3,0) {$\UE$};
  \node (uss2) at (6, 0) {$\uss_2$};
  \node (dots) at (7.5, 0) {$\cdots$};
  \node[below] at (dots.south) {trivial};
  \node (ussn) at (10, 0) {$\uss_{\omega(\log \lambda)}$};
  \node[below] at (ussn.south) {\begin{tabular}{c} construction (d) \\ \Cref{sec:IT-scheme-many-components} \end{tabular}};
  \draw[->] (uss1) -- node[midway, below] { \begin{tabular}{c} \Cref{sec:USS_implies_UE} \\ (a) \end{tabular}} (ue);
  \draw[->] (ue) -- node[midway, below] { \begin{tabular}{c}\Cref{sec:UE_implies_USS2} \\ (c) \end{tabular} }  (uss2);
  \draw (uss2) -- (dots);
  \draw[->] (dots) -- (ussn);
  \draw[->] (ue) to [out=120,in=60,looseness=0.6] node[midway] {$\times$} node[midway, above] {\begin{tabular}{c} \Cref{sec:general_impossibility} \\ (b) \end{tabular}} (uss1);
  \draw[->] (ussn) to [out=135,in=45,looseness=0.3] node[midway] {\begin{tabular}{c} (e) \\ ?  \end{tabular}} (ue);
\end{tikzpicture}
\caption{Relations between USS and UE in the information-theoretic regime.}
\label{fig:IT-relations}
\end{figure}
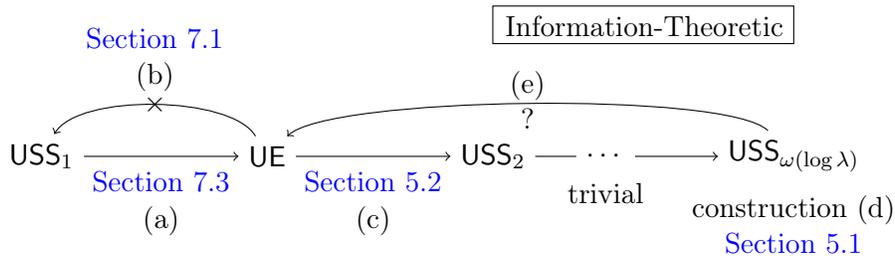

We first examine the connections between USS and UE and constructions of UE in the information-theoretic regime. The first part of our results can be summarized by \Cref{fig:IT-relations}. In the figure, $\uss_1$ stands for information-theoretic USS, secure against adversarial parties sharing \emph{unbounded} amount of entanglement; we will explain why we call it  $\uss_1$ later on. 
We first show that, even if we restrict adversaries in $\uss_1$ to have a polynomial amount of entanglement, it implies UE. \vipul{USS1 is defined is having unbounded amount of entanglement while the theorem below says bounded. We should probably not define USS1 as having unbounded.} \qipeng{added an explaination.}
\begin{theorem}[direction (a) in \Cref{fig:IT-relations}, \Cref{{sec:USS_implies_UE}}]
    Information-theoretic USS that is secure against adversarial parties $\cal{P}$ sharing \emph{polynomial} amount of entanglement implies UE. 
\end{theorem}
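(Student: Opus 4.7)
The plan is to construct $\UE$ from $\uss$ by masking the message with a classical one-time pad and then sharing the padded result via $\uss$. Concretely, given $\uss = (\share, \reconstruct)$ with $n$ parties, I would define $\keygen(1^\secparam)$ to sample a uniform classical $sk \in \{0,1\}^{|m|}$, $\enc(sk, m)$ to output $\rho_{\ct} := \share(m \oplus sk)$ (viewed as the tensor of the $n$ quantum shares), and $\dec(sk, \rho_{\ct})$ to return $\reconstruct(\rho_{\ct}) \oplus sk$. Correctness is inherited from $\uss$, and since $sk$ is a fresh uniform pad, $m \oplus sk$ is uniform and independent of $m$, so the distribution of $\rho_{\ct}$ does not depend on $m$ and one-time semantic security comes for free.

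The heart of the proof is the unclonability reduction. Let $\adv = (\adv_0, \adv_1, \adv_2)$ be any $\UE$ adversary (splitter, Bob, Charlie) that wins the cloning game with some advantage $\varepsilon$. I would build a $\uss$ adversary from $\adv$ as follows. The $n$ $\uss$ parties $\cP_1, \ldots, \cP_n$ first consume $O(|sk|)$ qubits of their shared entanglement, e.g.\ by measuring shared GHZ states in the computational basis, to agree on a common uniform classical string $sk$. They then forward $\adv$'s UE challenge $(m_0, m_1)$ as the $\uss$ challenge pair $(m_0 \oplus sk, m_1 \oplus sk)$. Upon receiving their shares of $m_b \oplus sk$, they jointly simulate $\adv_0$ on the concatenated share register using their remaining entanglement; the output is partitioned into the $\regX$ and $\regY$ halves according to $\adv_0$'s structure, and each half is routed along with a classical copy of $sk$ to Bob and Charlie respectively. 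The $\uss$-Bob then runs $\adv_1$ on $(\regX, sk)$ and the $\uss$-Charlie runs $\adv_2$ on $(\regY, sk)$, producing guesses $b_B, b_C$ whose joint distribution matches the $\UE$ game exactly. Hence the constructed $\uss$ adversary wins with advantage $\varepsilon$, and the assumed $\uss$ security forces $\varepsilon$ to be negligible.

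The main obstacle will be the joint simulation of $\adv_0$: the $\uss$ parties are spatially separated and cannot communicate directly, yet they must collectively apply a polynomial-size quantum circuit across their shares. I would handle this by gate teleportation, using one shared EPR pair per non-local Clifford gate and a shared magic state per non-local non-Clifford gate, with all resulting Pauli and Clifford corrections tracked as classical bits. Since the parties cannot exchange these bits mid-protocol, they piggyback on the $\regX_i$ and $\regY_i$ registers sent to Bob and Charlie, who then assemble them and apply the appropriate classically-controlled Clifford correction before invoking $\adv_1$ or $\adv_2$. Because $\adv_0$ has polynomial size, the total entanglement consumed is polynomial, which matches the resource regime in which we are assuming $\uss$ to be secure and completes the reduction.
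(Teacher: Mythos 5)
There are two genuine gaps, and the first is fatal to the construction itself. In your scheme the ciphertext is $\share(m \oplus sk)$, i.e.\ \emph{all} $n$ shares handed to the single splitting adversary, while the key is only a classical pad on the message. That adversary can simply run $\reconstruct$ on the complete set of shares (correctness guarantees this succeeds), obtaining the \emph{classical} string $m_b \oplus sk$, copy it, and forward the copy to both Bob and Charlie; once they receive $sk$ they each unpad and output $b$, winning the cloning game with probability $1-\negl(\lambda)$. Unclonability of $\USS$ says nothing here, because it only constrains parties holding individual shares, whereas anyone holding all shares must be able to reconstruct. The paper's construction (\Cref{sec:USS_implies_UE}) avoids exactly this by making the $\UE$ key act as a \emph{quantum} one-time pad $\X^a\Z^b$ on one of the two shares (with the classical string masked), so the ciphertext alone does not permit reconstruction of anything copyable before the key is revealed.

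The second gap is in your reduction: you ask the $n$ non-communicating $\uss$ parties to ``jointly simulate'' the splitter $\adv_0$ on the concatenated share register using only polynomial pre-shared entanglement. This is precisely instantaneous non-local computation of an arbitrary polynomial-size channel with polynomial entanglement, which is not known to be possible --- and if it were, one could apply it directly to $\reconstruct$ and break $\USS$ itself (cf.\ the exponential-entanglement attack in \Cref{sec:general_impossibility}), rendering the theorem's hypothesis vacuous. Your proposed fix via gate teleportation with deferred, piggybacked corrections only works for Clifford circuits: pushing an unknown Pauli through a $\T$ gate produces an adaptive $\Pgate^a$ correction that depends on another party's measurement outcome and must be applied before subsequent gates; without mid-circuit communication the best one can do is guess, succeeding with probability $2^{-t}$ for $t$ non-Clifford gates (this is exactly the paper's own attack in \Cref{sec:impossibility_lowT}, which is why it is restricted to $O(\log\lambda)$ $\T$ gates), and a generic $\adv_0$ has polynomially many. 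The paper's proof sidesteps non-local computation entirely: $\cP_2$ teleports its share to $\cP_1$ with $\ell$ EPR pairs, so $\cP_1$ locally holds $(\rho_1, \X^a\Z^b\rho_2\Z^b\X^a)$, which is distributed exactly as a $\UE$ ciphertext with the teleportation outcomes $(a,b)$ playing the role of the key; $\cP_1$ then runs the entire $\UE$ adversary locally, and $\cP_2$ forwards the (masked) classical key to Bob and Charlie in the challenge phase. Designing the $\UE$ scheme so that the reduction needs only one teleportation plus purely local computation is the idea your proposal is missing.
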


This leads us to ponder whether $\uss_1$ and UE share equivalence, like their classical counterparts do. Perhaps surprisingly, we show that this connection is unlike to hold. We prove that $\uss_1$ does not exist in the information-theoretic setting. Since there is no obvious evidence to refute UE in the IT setting and many candidates were proposed toward information-theoretic UE, our impossibility stands in sharp contrast to UE. 
\begin{theorem}[direction (b) in \Cref{fig:IT-relations}, \Cref{sec:general_impossibility}]
Information-theoretic USS that is secure against adversarial parties $\cal{P}$ sharing \emph{unbounded} amount of entanglement with each other, does not exist. 
\end{theorem}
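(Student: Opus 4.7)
Our plan is to reduce the claimed impossibility to known positive results on instantaneous non-local quantum computation, a connection already sketched in the introduction. The informal statement we will use is that any bipartite unitary can be implemented by two non-communicating parties sharing sufficient (possibly exponential) entanglement, as established by Vaidman and refined by Beigi--K\"onig and others. Given any candidate USS scheme, we will show that the parties $\mathcal{P}_1, \ldots, \mathcal{P}_n$ with unbounded shared entanglement can exploit such a non-local protocol to deliver the secret $m$ to both Bob and Charlie, thereby contradicting $\uss_1$ security.

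First, we normalize the reconstruction procedure. By Stinespring dilation, we model $\reconstruct$ as a unitary on the share registers tensored with fresh ancillas; then we compose with CNOTs from the (classical) output register onto two fresh registers $\mathbf{X}'$ and $\mathbf{Y}'$, yielding a unitary $\tilde U$ that on honest share states produces $|m\rangle_{\mathbf{X}'}|m\rangle_{\mathbf{Y}'}$ tensored with a junk register. The cloning attack is then tantamount to implementing $\tilde U$ in such a way that $\mathbf{X}'$ arrives at Bob and $\mathbf{Y}'$ at Charlie, with no direct Bob--Charlie communication.

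Next we sketch the attack. Each $\mathcal{P}_i$ uses its portion of the unbounded pre-shared entanglement to split its share $\rho_i$ into a pair of entangled registers, sending one to Bob (as $\mathbf{X}_i$) and the other to Charlie (as $\mathbf{Y}_i$), together with halves of the auxiliary resource state prescribed by Vaidman's non-local computation protocol. Any accompanying classical metadata (for instance Pauli correction bits coming from teleportation-style subroutines used to pool the shares into the input wires of the protocol) is encoded into these same quantum messages. After receiving all $\mathbf{X}_i$'s and $\mathbf{Y}_i$'s respectively, Bob and Charlie should hold exactly the joint input-plus-resource state required by Vaidman's protocol for $\tilde U$; they each apply the local half of that protocol and then measure $\mathbf{X}'$ and $\mathbf{Y}'$ in the computational basis, simultaneously recovering $m$.

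The main obstacle will be fitting the $n$-party, one-shot USS adversary model into the intrinsically two-party framework of instantaneous non-local computation. Concretely, we must specify how each $\mathcal{P}_i$ lays out its piece of the resource state and classical metadata so that the aggregate state held by Bob and Charlie after the one round of messages is precisely the input expected by Vaidman's protocol, and we must argue via a hybrid / composition argument that the composite procedure succeeds with probability tending to the honest reconstruction correctness as the entanglement budget grows without bound. Once these pieces are in place, USS correctness forces both Bob and Charlie to output the true $m$ with overwhelming probability, violating the unclonability guarantee.
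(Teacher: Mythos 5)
Your high-level plan---exploiting the positive results on instantaneous non-local computation (Vaidman, Beigi--K\"onig) built from teleportation and port-based teleportation---is the same family of techniques the paper uses, but the attack as you describe it has a genuine structural gap. You hand Bob and Charlie ``the joint input-plus-resource state required by Vaidman's protocol'' and have them ``each apply the local half of that protocol.'' Every instantaneous non-local computation protocol, Vaidman's included, consists of local operations \emph{plus one round of simultaneous communication between the two executing parties}; Bob and Charlie in the USS game receive registers from the shareholders but can never exchange anything with each other, so once they hold the inputs and resource halves they cannot run the protocol (zero-round non-local computation of a general unitary is exactly what unclonability is supposed to forbid). The fix, and what the paper actually does, is to push the communicating stage onto the shareholders themselves: $\cP_2,\ldots,\cP_n$ teleport their shares to $\cP_1$ (keeping their Pauli keys), and the padded joint state is passed down a chain of port-based teleportations in which each $\cP_i$ undoes \emph{its own} pad on every port and continues on every port, with $\cP_n$ finally running $\reconstruct$ on all ports. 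Everything forwarded to Bob and Charlie is then purely classical---the port indices and the list of candidate reconstruction outcomes---so it can be duplicated and sent to both, and their job reduces to non-interactive post-processing (following the path of indices). In the two-party case this is precisely the re-interpretation of a one-round INLC protocol in which $\cP_1,\cP_2$ play the first (communicating) stage and Bob, Charlie play the two post-communication stages; your sketch does not make this identification, and without it the reduction does not type-check.

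Second, you explicitly flag ``fitting the $n$-party model into the two-party INLC framework'' as the main obstacle but leave it unresolved, and this is exactly the nontrivial content of the paper's proof. You cannot collapse $\cP_2,\ldots,\cP_n$ into a single virtual INLC player: their first-stage operation would have to act jointly on data held by mutually non-communicating parties (their individual Pauli keys and shares), which the model forbids. The paper's chained port-based teleportation, where each party corrects only its own pad on all ports at the cost of a tower-exponential amount of pre-shared entanglement, is what makes the $n$-party attack go through; the theorem needs this for every $n$ (any connected entanglement graph), not just $n=2$, so proving the bipartite case alone would not suffice.
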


Facing the above impossibility, it seems like USS in the IT regime comes to a dead end. To overcome the infeasibility result, we investigate USS against adversarial parties with specific entanglement configurations. We consider the case where every pair of ${\cal P}_i$ and ${\cal P}_j$ either shares unbounded entanglement or shares no entanglement. In this case, we can define an entanglement graph, of which an edge $(i, j)$ corresponds to entanglement between ${\cal P}_i$ and ${\cal P}_j$. Then, we propose the natural generalization and define $\uss_d$ for any $d > 1$: 
\begin{description}
    \item $\uss_d$: Information-theoretic USS, secure against adversarial parties sharing entanglement whose entanglement graph has at least $d$ connected components. 
\end{description}

The above definition captures the case that there are $d$ groups of parties; there is unlimited entanglement between parties in the same group and no entanglement between parties in different groups. This notation is not only for overcoming the barrier, but also has practical interest: parties from different groups are geographically separated or have conflict of interest, maintaining entanglement between them is either too expensive or impossible. Note that the characterization of entanglement is only for adversarial parties, whereas honest execution of the scheme does not need any pre-shared entanglement. We also like to note that aforementioned $\uss_1$ is also captured by the above definition when $d=1$. 

It is easy to see that the existence of $\uss_d$ implies $\uss_{d+1}$ for any $d \geq 1$, as having less entanglement makes attacking more difficult. However, since $\uss_1$ is impossible, can we construct $\uss_d$ for some $d$? We complete the picture of USS and UE by presenting the following two theorems. 
\begin{theorem}[direction (c) in \Cref{fig:IT-relations}, \Cref{sec:UE_implies_USS2}]
    $\UE$ implies $\uss_2$ in the information-theoretic setting. As a corollary, it implies $\uss_d$ for any $d > 1$ in the IT setting. 
\end{theorem}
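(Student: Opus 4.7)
The plan is to use the construction sketched in the introduction. To share a classical secret $m$, the dealer samples a UE key $sk \leftarrow \UE.\keygen(1^{\secparam})$, computes $\rho_\ct \leftarrow \UE.\enc(sk, m)$, and outputs share $1 := \rho_\ct$ to ${\cal P}_1$ and share $2 := sk$ to ${\cal P}_2$. Reconstruction simply runs $\UE.\dec(sk, \rho_\ct)$. Correctness and the ordinary secrecy requirement—that no proper subset of the shares reveals anything about $m$—follow directly from UE: the share $sk$ alone is sampled independently of $m$, and the share $\rho_\ct$ alone hides $m$ by UE's IND-security.

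The unclonability argument against a $\uss_2$ adversary is where the no-shared-entanglement hypothesis between ${\cal P}_1$ and ${\cal P}_2$ is essential. Let ${\cal P}_1$ process $\rho_\ct$ into a bipartite state $\rho_{\regX_1 \regY_1}$ and ${\cal P}_2$ process the classical $sk$ into a bipartite state $\sigma_{\regX_2 \regY_2}$. Because the two providers lie in different connected components of the entanglement graph, the joint state delivered to Bob and Charlie is the tensor product $\rho_{\regX_1 \regY_1} \otimes \sigma_{\regX_2 \regY_2}$. Bob's measurement $B$ acts only on $(\regX_1, \regX_2)$ and Charlie's measurement $C$ only on $(\regY_1, \regY_2)$; in particular, each party's guess depends on $\sigma$ only through its respective marginal $\sigma_{\regX_2}$ or $\sigma_{\regY_2}$, both of which are CPTP images of the \emph{classical} string $sk$ and can therefore be reproduced locally by any party holding $sk$.

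This observation directly yields a UE monogamy-of-encryption reduction with adversary $(A_1, A_2, A_3)$: $A_1$ receives $\rho_\ct$, runs ${\cal P}_1$, and forwards $\regX_1$ to $A_2$ and $\regY_1$ to $A_3$; after $sk$ is revealed to both, each independently simulates ${\cal P}_2$ on $sk$, $A_2$ traces out its $\regY_2$ register and outputs $B(\regX_1, \regX_2)$, while $A_3$ traces out its $\regX_2$ register and outputs $C(\regY_1, \regY_2)$. Conditioned on $(sk, \regX_1, \regY_1)$ the two USS guesses are already independent and depend on $\sigma$ only through the two marginals, so replacing one joint sample of $\sigma$ with two fresh independent samples preserves the joint winning probability exactly. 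UE unclonable-indistinguishability then caps this joint winning probability by $\tfrac{1}{2} + \negl(\secparam)$. The corollary for $\uss_d$ with $d > 1$ is immediate since a $\uss_d$ adversary class for $d \geq 2$ is a strict subclass of the $\uss_2$ adversary class (more connected components means less entanglement).

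The one place that deserves explicit care is the resampling step: it is valid precisely because ${\cal P}_2$'s channel ingests only classical data and is disconnected from ${\cal P}_1$. If entanglement across the cut were allowed, $\sigma$'s marginals would be coupled to the state held by ${\cal P}_1$ and could not be locally resimulated—which is exactly the obstruction underlying the $\uss_1$ impossibility in direction (b). This clean dichotomy makes the hypothesis of the theorem sharp and is the main conceptual point the write-up must highlight.
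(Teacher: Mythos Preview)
Your reduction has a genuine gap at the resampling step. The claim that ``replacing one joint sample of $\sigma$ with two fresh independent samples preserves the joint winning probability exactly'' is false. It is true that Bob's \emph{marginal} output distribution depends on $\sigma$ only through $\sigma_{\regX_2}$ (and likewise for Charlie), but the \emph{joint} distribution of $(b_{\bob},b_{\charlie})$ depends on the correlations in $\sigma_{\regX_2\regY_2}$. A toy example: ${\cal P}_2$ forwards $sk$ together with a fresh shared random bit $r$ to both sides, and the given $B,C$ each XOR their guess with $r$. With the true (correlated) $\sigma$ the two guesses shift in lockstep; after your decorrelation they shift independently and the simultaneous-success probability strictly drops. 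One can arrange parameters so that the USS adversary wins with probability noticeably above $1/2$ while your decorrelated UE adversary wins with probability below $1/2$, so the reduction as written does not go through. Your independence assertion ``conditioned on $(sk,\regX_1,\regY_1)$ the two USS guesses are already independent'' is simply not correct when $\sigma_{\regX_2\regY_2}$ is non-product.

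The paper closes this gap differently. Instead of deferring the simulation of ${\cal P}_2$ to $\bob'$ and $\charlie'$, the UE cloner $\adv'$ itself runs the adversaries on the key-holding side \emph{once for every possible key value} $x\in\{0,1\}^{\ell}$, producing registers $(\regX^{(x)},\regY^{(x)})$ for each $x$, and ships all the $\regX^{(x)}$'s to $\bob'$ and all the $\regY^{(x)}$'s to $\charlie'$. When the true key $k$ arrives, each selects the run indexed by $k$; because that run was executed once, its $\regX$- and $\regY$-halves carry exactly the right correlations and the USS success probability is reproduced verbatim. The $2^{\ell}$-fold blow-up is precisely why the theorem is stated in the information-theoretic regime. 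Your resampling shortcut, if it worked, would give an \emph{efficient} reduction---which would be a strengthening---but it would need an additional lemma of the form ``without loss of generality the optimal ${\cal P}_2$ outputs a product state (equivalently, just forwards $sk$),'' and you have not supplied one.

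A secondary point: you treat only the $2$-out-of-$2$ case. The paper's construction for general $n$ first additively shares $m$ and then arranges the UE keys and ciphertexts cyclically so that any edge $(i^*,i^*{+}1)$ missing from the entanglement graph separates some key from its matching ciphertext; the enumeration trick is then applied to the entire component containing the key.
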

\begin{theorem}[construction (d) in \Cref{fig:IT-relations}, \Cref{sec:IT-scheme-many-components}] \label{thm:uss_it_construction_intro}
    $\uss_d$ exists for every $d = \omega(\log \lambda)$ in the information-theoretic setting, where $\lambda$ is the security parameter. 
\end{theorem}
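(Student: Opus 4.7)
The plan is to construct $\uss_d$ directly from the BB84 conjugate-coding primitive, combined with XOR secret sharing of the basis information. Set $n = d$ parties, let $\ell = d = \omega(\log \secparam)$, and fix a universal hash $h\colon \{0,1\}^\ell \to \{0,1\}$. To share a secret bit $m$, the dealer samples $r,\theta \in \{0,1\}^\ell$ uniformly, publishes $c = m \oplus h(r)$ as an $n$-out-of-$n$ XOR share across the parties, XOR-shares $\theta$ similarly, and gives party $i$ the BB84 qubit $\Hgate^{\theta_i}\ket{r_i}$. Reconstruction pools the XOR shares to recover $(\theta,c)$, measures each qubit in basis $\theta_i$ to recover $r$, and outputs $m = c \oplus h(r)$; longer secrets are handled by enlarging $\ell$ and the output length of $h$, or by parallel repetition.

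Correctness is immediate. For unclonability, I would fix an adversary that partitions the $n$ parties into $d$ entanglement-connected components. Classical XOR shares can be duplicated for free, so both Bob and Charlie ultimately recover $(\theta,c)$ exactly, no matter how the components choose to relay their classical data. Hence the only nontrivial part of the attack is splitting the $\ell$ BB84 qubits so that, given $\theta$, both receivers can recover $r$ (equivalently $h(r)$, equivalently $m$). Within each component, the parties can concentrate their qubits onto a single machine via teleportation, reducing the attack to $d$ non-communicating quantum machines, the $i$-th holding a subset of the $\ell$ qubits.

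The crux of the argument is then the observation that any such $d$-machine restricted cloning strategy is a special case of the monolithic BB84 monogamy-of-entanglement game on $\ell$ qubits: a single adversary holding all $\ell$ qubits can always decline to entangle across the partition, thereby simulating any $d$-component strategy. Invoking the Tomamichel--Fehr--Kaniewski--Wehner monogamy-of-entanglement bound, the probability that both Bob and Charlie recover $r$ is at most $(\cos^2(\pi/8))^\ell = \negl(\secparam)$ for $\ell = \omega(\log \secparam)$. A standard quantum leftover-hash argument (as used in the Broadbent--Lord analysis of UE) then upgrades this to negligible guessing advantage on $m = c \oplus h(r)$, completing the proof.

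The main obstacle I anticipate is fully formalizing this last reduction, in particular arguing that the intra-component entanglement together with the free classical communication of XOR shares do not break the monogamy bound. My plan is to purify the entire multi-component adversary --- local components, shared entanglement, and all classical pre/post-processing --- into a single isometry acting on the BB84 register plus an ancilla, so that the TFKW inequality applies verbatim to the qubit register and the classical shares are absorbed into the ``public'' side of the MoE game. A secondary, more routine obstacle is the hash-extraction step, which I would resolve via a standard quantum universal-hashing / leftover hash lemma to pass from ``cannot simultaneously guess $r$'' to ``cannot simultaneously distinguish $m=0$ from $m=1$''.
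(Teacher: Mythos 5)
Your construction is essentially the paper's (BB84 qubits, one per party, with the basis information and a classical mask distributed in the clear), but the security argument has a genuine gap at the step you label as routine. Reducing the $d$-component adversary to the monolithic $\ell$-qubit monogamy-of-entanglement game and invoking TFKW only gives you \emph{search} hardness: the probability that Bob and Charlie \emph{simultaneously} output all of $r$ is at most $0.855^{\ell}$. There is no ``standard quantum leftover-hash argument'' that upgrades this to simultaneous indistinguishability of $h(r)$ (or of the parity) in the plain model. Broadbent--Lord obtain indistinguishability-secure unclonable encryption only in the QROM; information-theoretically, in the plain model, they (and all follow-ups) achieve only search security, and the statement you need --- that no cloning adversary can make both receivers guess a hashed/XOR'd bit of $r$ with probability noticeably above $1/2$ --- is precisely the open problem of IT unclonable encryption, which this paper explicitly flags as unresolved (the XOR-repetition bound for general cloning strategies is stated as an open conjecture). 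Note also that the joint search bound says nothing about either receiver individually: the splitter may hand all $\ell$ qubits to Bob, who then recovers $r$ and $h(r)$ perfectly, so any extraction argument must instead establish that \emph{at least one} of Bob or Charlie is nearly ignorant of the masked bit --- information the TFKW joint bound does not provide.

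The paper avoids this by exploiting exactly the structure your reduction throws away. Because each adversarial party holds a single unentangled qubit, the splitting channel is a tensor product across qubits. The paper first proves a per-index dichotomy (via the single-qubit MoE bound $0.855$): for each $i$, either Bob's or Charlie's reduced states for $r_i=0$ versus $r_i=1$ are at trace distance at most $0.86$, so one receiver, say Bob, is ``bad'' on at least $n/2$ indices. It then uses the multiplicativity of the trace norm under tensor products to show that the trace distance between Bob's states conditioned on even versus odd parity \emph{equals} the product $\prod_i \td(\sigma_i^0,\sigma_i^1) \le 0.86^{n/2}$, i.e., an XOR lemma that holds only for tensor cloning strategies --- which suffices here and sidesteps the general hardcore-bit question. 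To repair your proof you would either need to prove the general simultaneous extraction step (resolving the open problem) or restructure the argument along these tensor-strategy lines.
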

\noindent 
Along with \Cref{thm:uss_it_construction_intro}, we proved a special XOR lemma of the well-known monogamy-of-entanglement property for BB84 states~\cite{BB84,TFKW13}, when the splitting adversary is limited to tensor strategies. More precisely, we only consider cloning strategies that apply channels on each individual qubit, but never jointly on two or more qubits.  
Given a BB84 state, let $p(n)$ be the probability of the optimal tensor cloning strategy, that later two non-communicating parties recover the parity simultaneously. $p(1) = 1/2+1/{2\sqrt{2}}$ was proved in~\cite{TFKW13}. In this work, we show that $p(n) = 1/2 + \exp(-\Omega(n))$, which demonstrates a XOR hardness amplification for tensor strategies. We believe the proof of the theorem will be of independent interest, as a more general version of the theorem (that applies to any cloning strategies) will imply UE in the IT setting, resolving an open question on unclonable encryption since~\cite{BL20}.  \vnote{making this paragraph more precise would be helpful} \qipeng{how about this?}

\medskip

These two theorems establish a clear distinction between $\uss_1$ and $\uss_d$ for all $d$ greater than 1. Furthermore, the latter theorem illustrates that as the value of $d$ becomes sufficiently large, it becomes feasible to achieve $\uss_d$ within the IT setting. Consequently, it implies that, at the very least, certain objectives outlined in \Cref{fig:IT-relations} can be constructed.

\medskip

Lastly, as the final arrow in \Cref{fig:IT-relations}, does $\uss_2$ or $\uss_{\omega(\log \lambda)}$ implies UE? 
\begin{remark}[direction (e) in \Cref{fig:IT-relations}]
    We do not have an answer yet. Nonetheless, we assert that either $USS_d$ does not imply $\UE$, or establishing this implication is as challenging as constructing UE. The latter assertion arises from our existing knowledge of $\uss_{\omega(\log \lambda)}$ --- demonstrating such an implication should, in turn, furnish us with a means to construct UE within the IT framework.
\end{remark}

\subsubsection{Results on Computational USS}

In this computational regime, adversarial parties are computationally bounded; this in turn implies that the amount of pre-shared entanglement is also computationally bounded. Unlike the comprehensive picture presented in \Cref{fig:IT-relations}, our understanding here is more intricate. Specifically, as demonstrated in \Cref{fig:comp-relations}, the feasibility or infeasibility hinges on factors such as the computational complexity of USS schemes and the actual quantity of shared entanglement among malicious parties.

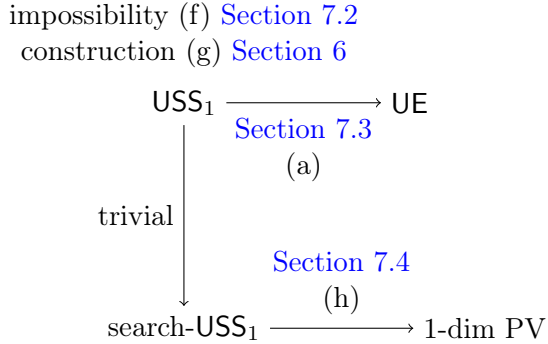
\begin{figure}[!hbt]
\centering
\begin{tikzpicture}
  \node[anchor=north east] at (10,2) {Computational};
  \draw (7.25,1.5) rectangle (10,2);

  \node (uss1) at (0,0) {$\uss_{1}$};
  \node (ue) at (3,0) {$\UE$};
  \node (uss2s) at (0, -3) {search-$\uss_1$};
  \node (pv) at (4, -3) {$1$-dim PV};
  
  \draw[->] (uss1) -- node[midway, below] { \begin{tabular}{c} \Cref{sec:USS_implies_UE} \\ (a) \end{tabular}} (ue);
  \draw[->] (uss1) -- node[midway, left] { trivial } (uss2s);
  \draw[->] (uss2s) -- node[midway, above] { \begin{tabular}{c}\Cref{sec:uss_imply_PV} \\ (h) \end{tabular} }  (pv);

  \node[above] (construction) at (uss1.north) {\begin{tabular}{c} impossibility (f)  \Cref{sec:impossibility_lowT} \\  construction (g)  \Cref{sec:construction_qrom}  \end{tabular}};
\end{tikzpicture}
\caption{Relations between USS and UE in the computational regime.}
\label{fig:comp-relations}
\end{figure}

Similar to the IT setting, the implication of $\uss_1$ and UE still works (direction (a) in \Cref{fig:comp-relations}). What is new here is that we present one impossibility result and one infeasibility result on $\uss_1$. 
\begin{theorem}[Informal, impossibility (f) in \Cref{fig:comp-relations}, \Cref{sec:impossibility_lowT}]
    USS whose reconstruction function has only $d$ {\sf T} gates, can be attacked with adversarial parties sharing $O(2^d)$ qubits of pre-shared entanglement. 
\end{theorem}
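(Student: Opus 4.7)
The plan is to break USS by reducing to \emph{instantaneous non-local computation} (INLC) of the reconstruction unitary, and then invoke known INLC protocols for low-$\T$-count circuits. Concretely, I would construct an explicit splitting attack in which the malicious shareholders coherently prepare Bob and Charlie to jointly execute $\reconstruct$ via INLC across the Bob/Charlie cut.

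First I would massage $\reconstruct$ into a convenient form. Without loss of generality, purify all measurements into ancillas (this adds only Cliffords), and append CNOTs from the output register onto a fresh $\ket{0}$-ancilla so that the output $m$ appears on two designated registers. The resulting unitary $U_{\reconstruct}$ still has $\T$-count $d$, since unitarization and copying are Clifford. Next, consider the adversary where each $\cP_i$ bipartitions its share into registers $\regX_i, \regY_i$ and forwards $\regX_i$ to Bob and $\regY_i$ to Charlie; let $\regX=\regX_1\cdots\regX_n$ and $\regY=\regY_1\cdots\regY_n$. Breaking USS in this configuration is tantamount to Bob and Charlie jointly applying $U_{\reconstruct}$ on $\regX\regY$ using pre-shared entanglement and one simultaneous message exchange, with each side then holding one of the two duplicated output registers.

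The core step is to realize this INLC using $O(2^d)$ pre-shared EPR pairs. I would follow the standard gate-by-gate recipe: Cliffords can be executed non-locally with polynomially many EPR pairs via teleportation, because Pauli corrections propagate through Cliffords as Paulis and can be corrected by simultaneous classical messages. Each $\T$-gate is handled by a magic-state gadget; the difficulty is that the correction induced by teleporting through a $\T$-gate is non-Clifford and depends on the measurement outcome, so it cannot be deferred as-is. Following the approach of Speelman~\cite{speelman15} (and related INLC literature), one coherently prepares all $2^d$ joint correction branches in a superposition over ancilla ports, so that after the single round of message exchange each party can project onto the correct branch. This yields an exact INLC protocol whose entanglement cost scales as $O(2^d)$ in the number of $\T$-gates, while remaining polynomial in the Clifford-circuit size.

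Plugging this INLC protocol into the attack outline gives Bob and Charlie each a correct copy of $m$ with the same probability as honest reconstruction (essentially $1$), directly violating USS security. The main obstacle is step three: one must verify that the low-$\T$ INLC protocol can be instantiated so that the duplicated output lands cleanly on Bob's and Charlie's designated registers (not merely distributed across the cut), and that the branch-tracking for the $d$ non-Clifford gates composes into a single synchronous round; the $O(2^d)$ bound is then essentially tight against this technique. Everything else — unitarization, the CNOT-based duplication, and the reduction from USS to INLC — is routine syntactic manipulation once the INLC lemma is in hand.
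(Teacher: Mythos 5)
You have a genuine structural problem in the reduction itself. In the USS experiment, $\bob$ and $\charlie$ only \emph{receive} registers from the shareholders and then must output their guesses with no communication between them at all; there is no round in which they may talk. Your attack bipartitions the shares raw and then has "Bob and Charlie jointly applying $U_{\reconstruct}$ \ldots using pre-shared entanglement and one simultaneous message exchange" across the Bob/Charlie cut --- but that exchange does not exist in the model, so as stated this is not an admissible adversary. The correct mapping runs the single INLC round among the \emph{shareholders}: $\cP_1$ performs player~1's round-one operation (e.g.\ teleportation measurements) on its share and its half of the pre-shared entanglement, sends the registers player~1 would have kept to $\bob$ and the message player~1 would have addressed to player~2 to $\charlie$ (symmetrically for $\cP_2$); $\bob$ and $\charlie$ then each locally execute the respective post-round maps, which is exactly why $\reconstruct$ must first be modified to output $m$ on both sides. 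This routing step is where the content of the reduction lies, and your write-up skips it.

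Beyond that, the entire quantitative content --- an exact single-round non-local computation protocol for Clifford circuits with $d$ ${\sf T}$ gates using $O(2^d)$ ebits --- is delegated to \cite{speelman15} and flagged by you as unverified. What is actually proved there scales exponentially in the ${\sf T}$-\emph{depth} with a base polynomial in the number of qubits, which is not $O(2^d)$ in the ${\sf T}$-count and, for $d=\Theta(\log\lambda)$, is quasi-polynomial rather than polynomial; your sketch of "coherently preparing all $2^d$ correction branches over ancilla ports" is precisely the step where the unknown Pauli errors introduced by each branching teleportation compound, and it does not obviously close with the claimed entanglement budget. The paper's own proof avoids all of this machinery: $\cP_1$ teleports its share to $\cP_2$ (no $\cP_1\to\cP_2$ message is needed --- the teleportation keys go to $\bob$ and $\charlie$ later), $\cP_2$ evaluates $\reconstruct$ homomorphically through the Pauli one-time pad, guessing the correction $\Pgate^{s_j}$ for each of the $d$ ${\sf T}$ gates, and $\bob,\charlie$ recover $m$ whenever all guesses are consistent, i.e.\ with probability $2^{-d}$ --- already an inverse-polynomial violation of indistinguishability for $d=O(\log\lambda)$, using only polynomially many ebits. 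So even granting your INLC lemma your route is heavier than needed, and without a proof of that lemma (and with the communication pattern corrected) the proposal does not yet constitute a proof of the statement.
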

\noindent Therefore, when the reconstruction has low {\sf T} complexity, say $d = \log \lambda$, then such $\uss$ does not exist even in the computational regime. Next, we present a construction, in sharp contrast to the impossibility above. Quantum random oracle~\cite{boneh2011random}, models the perfect (and unrealizable) cryptographic hash function. As it should behave as a truly random function, it can not have a small number of {\sf T} gates. 
\begin{theorem}
\label{thm:construction_qrom}
[construction (g) in \Cref{fig:comp-relations}, \Cref{sec:construction_qrom}]
    USS that is secure against query-efficient adversarial parties sharing an arbitrary amount of pre-shared entanglement\footnote{The adversary is polynomially bounded in queries but not in the pre-shared entanglment.}, exists in the quantum random oracle model (QROM). 
\end{theorem}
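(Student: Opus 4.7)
The plan is to adapt the QROM-based unclonable encryption construction of Broadbent--Lord to the secret-sharing setting, and reduce security to the monogamy-of-entanglement game for BB84 states via a random-oracle extraction argument. Concretely, for $n$-out-of-$n$ USS with a random oracle $H$, the dealer samples a uniformly random string $r \in \{0,1\}^\ell$, a uniformly random basis string $\theta \in \{+,\times\}^\ell$, prepares the BB84 state $|\psi\rangle = |r\rangle_\theta$, and computes the mask $c = H(r) \oplus m$. Share $\rho_1$ is the quantum register holding $|\psi\rangle$, while the classical pair $(\theta, c)$ is distributed among $\cP_2,\dots,\cP_n$ using any classical $(n{-}1)$-out-of-$(n{-}1)$ secret-sharing scheme. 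Reconstruction is immediate: measure $|\psi\rangle$ in basis $\theta$ to recover $r$, then output $H(r) \oplus c$.

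The security reduction proceeds by contradiction. Suppose a coalition of malicious shareholders, each making at most $q = \poly(\secparam)$ queries to $H$ but sharing arbitrary entanglement, produces bipartite output states such that Bob and Charlie simultaneously guess $m \in \{m_0, m_1\}$ with probability $\tfrac{1}{2} + \epsilon$. First I would move to a hybrid in which $c$ is replaced by a freshly uniform string, independent of $H(r)$; this hybrid is indistinguishable from the real experiment unless at least one of Bob or Charlie queries $H$ at the point $r$, and I will argue that in the hybrid world the distinguishing advantage collapses to negligible since $m$ is now information-theoretically hidden from $(\theta,c)$ alone. Consequently, in the real world both Bob and Charlie must, with probability $\Omega(\epsilon)$ each, query $H$ at $r$ during their local strategies.

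Next I would invoke a QROM extraction technique---either Unruh's rewinding-based extractor, or (more cleanly) Zhandry's compressed-oracle framework---to simulate $H$ and, with probability $\Omega(\epsilon^2 / q^2)$, extract Bob's query at $r$ and Charlie's query at $r$ simultaneously. The crucial point is that compressed-oracle extraction is performed purely on the query register of each party's local unitary dilation and does not touch the pre-shared entanglement, so the extractor's success bound is independent of the amount of entanglement the adversaries hold. This yields a splitting strategy for the BB84 monogamy-of-entanglement game in which, after $\theta$ is revealed, two non-communicating parties each output $r$; by the Tomamichel--Fehr--Kaniewski--Wehner bound this succeeds with probability at most $(1/2 + 1/(2\sqrt{2}))^\ell$, giving a contradiction once $\ell = \omega(\log \secparam)$ and $\epsilon$ is non-negligible.

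The main obstacle I expect is the simultaneous extraction of $r$ from two spatially separated and arbitrarily entangled parties: standard QROM extractors are phrased for a single adversary, whereas here Bob and Charlie execute in parallel on halves of a joint (possibly unbounded) entangled state. I would handle this by running the compressed-oracle simulator as a single oracle shared between them and then independently measuring each party's query register, using the fact that their post-oracle local operations commute with measurements on the other party's register. A subtle accounting step is to ensure the $q$-query bound applies per party rather than in aggregate so that the quadratic loss $\epsilon^2/q^2$ in simultaneous extraction is polynomial, not exponential, in the entanglement size. Assuming this extraction goes through, the remainder of the reduction is routine and yields a contradiction with the monogamy bound, establishing \Cref{thm:construction_qrom}.
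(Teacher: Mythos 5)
The main problem is not the reduction but the construction itself: as described, it is insecure against entangled shareholders, even ones sharing only polynomially many EPR pairs and making one oracle query each. The weak point is that the map from the shares to the oracle input $r$ is Clifford once $\theta$ is available in distributed form: measuring $H^{\theta}\ket{r}$ in basis $\theta$ is just transversal, classically controlled Hadamards followed by a computational-basis measurement, and only \emph{after} that does the (clonable, classical) oracle call occur. Concretely, for $n=2$, where $\cP_2$ holds $(\theta,c)$ in the clear: $\cP_1$ teleports its BB84 register to $\cP_2$ through pre-shared EPR pairs and sends the Pauli outcomes $(a,b)$ to both Bob and Charlie; $\cP_2$, knowing $\theta$, applies $H^{\theta_i}$ to each teleported qubit, measures in the computational basis, and forwards the outcome together with $(\theta,c)$ to both Bob and Charlie. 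The outcome is $r$ masked qubit-wise by $a_i$ or $b_i$ (according to $\theta_i$), so Bob and Charlie each recover $r$ exactly, make a single query $H(r)$, and unmask $c$; they win with probability $1$. For $n\geq 3$, where $\theta$ is secret-shared among $\cP_2,\dots,\cP_n$, the same attack goes through by chaining ordinary teleportations $\cP_1\to\cP_2\to\cdots\to\cP_n$, with $\cP_i$ applying its share of the Hadamard layer transversally before re-teleporting; since $H^{s}$ conjugates Paulis to Paulis, all corrections remain publicly computable classical functions of the broadcast outcomes, and the final computational measurement again reveals $r$ up to a known Pauli mask. This is exactly the failure mode the paper warns about (UE-style constructions do not survive pre-shared entanglement, and any reconstruction whose quantum part is Clifford with few $\T$ gates is attackable, \Cref{sec:impossibility_lowT}): the random oracle only helps if its input cannot be rendered classical by the coalition, and in your scheme it can.

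The paper's construction (\Cref{sec:construction_qrom}) is designed precisely to avoid this: the UE key is $\sk = H(y_1,\dots,y_n)$ where party $i$ holds only its own uniformly random $y_i$, so no query-bounded shareholder can place more than negligible query weight on the full point $(y_1,\dots,y_n)$ regardless of how much entanglement is shared; a BBBV-style reprogramming argument (\Cref{thm:bbbv97_oraclechange}) then decouples $\sk$ from the oracle before the challenge phase, and security reduces black-box to unclonable-IND security of $\UE$ (itself instantiable in the QROM), with no compressed-oracle extraction and no direct use of the monogamy bound. In particular, no shareholder ever holds information correlated with the key, so no key-dependent operation can be applied to the quantum share before it is split. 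Separately, even if your construction were sound, the step you flag as the ``main obstacle''---simultaneously extracting $r$ from two non-communicating parties acting on halves of an arbitrarily entangled state while sharing a single oracle---is the genuinely hard part of the indistinguishability-based QROM unclonability results you would be re-deriving, and ``independently measuring each party's query register'' of one shared compressed oracle is not a routine argument; but the decisive gap is the attack above.
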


As quantum random oracle is not realizable in general, we wonder whether $\uss_1$ can be constructed in the plain model. To the end, we show that $\uss_1$ implies a cryptographic primitive called $1$-dimensional position verification that is secure against parties sharing any polynomial amount of entanglement. Position verification represents an actively explored research area. Despite all the ongoing efforts, the development of a construction for position verification within the standard model remains elusive. This underscores the formidable challenge of devising $\uss_1$, when relying on computational assumptions.
\begin{theorem}[direction (h) in \Cref{fig:comp-relations}, \Cref{sec:uss_imply_PV}]
    USS that is secure against adversarial parties having pre-shared entanglement, implies $1$-dimensional position verification that is secure against parties sharing the same amount of pre-shared entanglement.
\end{theorem}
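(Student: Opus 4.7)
The plan is to turn a 2-out-of-2 USS scheme into a 1-dimensional position-verification (PV) protocol in the most direct possible way. Place verifiers $V_0,V_1$ at positions $0$ and $1$; they jointly sample a random secret $m\in\{0,1\}$, compute $(\rho_0,\rho_1)\leftarrow \share(m)$, and schedule their transmissions so that $\rho_0$ and $\rho_1$ both reach the claimed prover position $x\in(0,1)$ at the same instant. The honest prover at $x$ runs $\reconstruct$ on the two arriving shares, recovers $m$, and immediately broadcasts the classical bit $m$ back to both verifiers, who accept iff they each receive $m$ by the prescribed deadline. Correctness is immediate from reconstruction correctness, and no pre-shared entanglement is needed for the honest party.

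For soundness, the first step is to invoke the standard 1-dim PV timing argument for two adversaries ${\cal P}_1,{\cal P}_2$ located on opposite sides of $x$. The speed-of-light response deadlines force them, without loss of generality, into exactly one round of simultaneous message exchange: ${\cal P}_1$ sees $\rho_0$ first, emits a quantum message $M_{0\to 1}$ to ${\cal P}_2$ and keeps a register $S_1$, while ${\cal P}_2$ symmetrically emits $M_{1\to 0}$ and keeps $S_2$; after the crossing, ${\cal P}_1$ produces its answer bit from $(S_1,M_{1\to 0})$ and ${\cal P}_2$ from $(S_2,M_{0\to 1})$. Both may begin the protocol holding an arbitrary pre-shared entangled state of whatever size the PV threat model permits.

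Next I reduce such a PV attack to a USS cloning attack that reuses the very same isometries and entanglement. Define the bipartite output of ${\cal P}_1$ as $\sigma_{\regX_1 \regY_1}=(S_1, M_{0\to 1})$, routing $\regX_1=S_1$ to Bob (who is co-located with ${\cal P}_1$) and $\regY_1=M_{0\to 1}$ to Charlie (who is co-located with ${\cal P}_2$); define ${\cal P}_2$'s split symmetrically. Bob then holds $(S_1,M_{1\to 0})$ and by running ${\cal P}_1$'s post-exchange measurement outputs a guess $m_0$; Charlie holds $(M_{0\to 1},S_2)$ and outputs $m_1$ via ${\cal P}_2$'s post-exchange measurement. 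The probability that both Bob and Charlie recover $m$ in the USS cloning game is then exactly the probability that both verifiers accept in the PV game, and the amount of pre-shared entanglement is preserved verbatim. Invoking USS security then yields the desired PV soundness.

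The main obstacle I expect is formalizing the ``single simultaneous round'' characterization of 1-dim PV adversaries. This is the standard causal-cone argument used throughout the PV literature, but making it go through against adversaries whose positions are not exactly at $0$ and $1$ requires carefully attributing the extra propagation slack to the quantum messages $M_{0\to 1}$ and $M_{1\to 0}$ rather than giving the adversaries a second round of interaction; I would either cite the formulation used in prior PV work or include a short self-contained proof. A secondary subtlety is matching success-probability conventions: the USS game with a random-bit secret has optimal trivial success $1/2$, so I would instantiate the PV scheme with a binary challenge secret (and lift to longer secrets by the standard argument if needed) so that $1/2+\negl$ on the USS side exactly matches the $1/|M|+\negl$ soundness threshold on the PV side.
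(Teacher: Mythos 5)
Your proposal matches the paper's proof essentially verbatim: the same PV protocol (verifiers secret-share a random secret, honest prover reconstructs and replies in time), and the same reduction in which each malicious prover's kept register goes to one recoverer and its transmitted message to the other, so that Bob and Charlie simulate the two provers' post-exchange measurements with success probability exactly equal to the PV cheating probability and with the pre-shared entanglement carried over unchanged. The convention issue you flag at the end is resolved in the paper exactly as you anticipate: it instantiates the challenge with a random $\lambda$-bit secret and uses search-based USS security, so that the USS bound directly yields negligible PV soundness.
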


\subsection{Other Related Works}
\paragraph{On Secret Sharing of Quantum States}
Our work focuses on secret-sharing classical secrets by encoding them into a quantum state to achieve unclonability. One may be curious about the relationship of our new primitive to the existing studies on secret-sharing schemes where the secret messages are \emph{quantum states} to begin with.

In short, all the existing quantum secret sharing schemes fall short of satisfying one crucial property in our model: the requirement of \emph{no or low entanglement} for honest parties. Their unclonability also remains elusive, as they require much more complicated structures on quantum states than ours. We provide a detailed discussion below and will carefully incorporate all the discussions into the subsequent version. 

In the paper, we consider a model where malicious parties can share some amount of entanglement before attacking the protocol. As illustrated in \Cref{fig:IT-relations} and \Cref{fig:comp-relations}, the amount of entanglement (or more precisely, the entanglement graph) plays an important role in both the construction and barriers of such schemes. Therefore, we do not want the entanglement used in honest shares to scale to the same order or surpass what adversaries can access. Our constructions (\Cref{thm:uss_it_construction_intro} and \Cref{thm:construction_qrom}) are based on unentangled quantum shares of single qubits, thus no entanglement required. 

\cite{quantumsecresharing1999} first proposed the idea of using quantum states to secret-share a classical bit. Their idea is to use $n$-qubit GHZ states for an $n$-out-of-$n$ secret share scheme. However, an $n$-qubit GHZ state requires entanglement across $n$ quantum registers, which enforces shareholders to maintain entanglement with each other. A subsequent proposal in \cite{PhysRevA.59.162} followed a similar path but also required a large amount of entanglement. The idea of using quantum state to secret share classical secrets was also discussed by Gottesman \cite{gottesman2000theory}, but they mostly focused on the lower bounds of general schemes (potentially requiring entanglement): for example, how many qubits are required to secret-share one classical bit. 

There is another line of works on secret-sharing quantum secrets, including \cite{howtosharequantumsecret1999cleve},\cite{smith2000quantum}  and most recently \cite{ccakan2023computational} by {\c{C}}akan et al. Since the goal is to secret-share a quantum state, entanglement is also necessary in these protocols.

\section{Technical Overview}

In this section, unless otherwise specified, we focus on $2$-out-of-$2$ USS, with $\share$ and $\reconstruct$. $\share$ takes as input a message $m$ and outputs two shares $\rho_0, \rho_1$; whereas $\reconstruct$ takes two quantum shares and outputs a string. We assume $\rho_0, \rho_1$ are unentangled. When we consider impossibility results, all arguments mentioned in this overview carry in the same way to the general cases; for constructions, we only require unentangled shares. 

\subsection{\texorpdfstring{$\uss_1$ implies $\UE$, $\UE$ implies $\uss_2$}{USS implies UE, UE implies USS2}}

We first examine two directions (directions (a) and (c) in \Cref{fig:IT-relations,fig:comp-relations}); that is, how $\uss_1$ implies $\UE$ and how $\UE$ implies $\uss_2$. These two directions work in both IT and computational setting. We briefly recall the definition of UE: it is a secret key encryption scheme with the additional property: there is no way to split a quantum ciphertext into two parts, both combining with the classical secret key can recover the original plaintext (with probability at least $1/2$ plus negligible). 

\paragraph{$\uss_1$ implies $\UE$, \Cref{sec:USS_implies_UE}.}  Given a $2$-out-of-$2$ USS, we now design a UE:
\begin{description}
    \item $\UE.\enc(k, m)$ takes as input a secret key $k$ and a message, 
    \begin{enumerate}
        \item it first produces two shares $(\rho_1, \rho_2) \gets \uss.{\share}(m)$,
        \item it parses $k = (a, b)$ and let the unclonable ciphertext be $\ct = (\rho_1, X^a Z^b \rho_2 Z^b X^a)$. In other words, it sends out $\rho_1$ in clear, while having $\rho_2$ one-time padded by the key $k$. 
    \end{enumerate}
\end{description}
Decryption is straightforward, by unpadding $X^a Z^b \rho_2 Z^b X^a$ and applying $\reconstruct$ to $(\rho_1, \rho_2)$. Correctness and semantic security follows easily. Its unclonability can be based on the unclonability of $\uss_1$; indeed, the scheme corresponds to a special strategy of malicious ${\cal P}_1$ and ${\cal P}_2$. Suppose there exists an adversary $(\alice,\bob,\charlie)$ that violates the above scheme, there exists $({\cal P}_1,{\cal P}_2,\bob,\charlie)$ that violates the security of $\uss_1$.

\begin{description}
    \item ${\cal P}_1$ and ${\cal P}_2$ share EPR pairs. ${\cal P}_2$ uses the EPR pairs to teleport $\rho_2$ to ${\cal P}_1$, with ${\cal P}_2$ having random $(a, b)$ and ${\cal P}_1$ obtaining $(\rho_1, X^a Z^b \rho_2 Z^b X^a)$. As ${\cal P}_2$ only has classical information, it sends $(a, b)$ to both $\bob$ and $\charlie$, while ${\cal P}_1$ applies $\alice$ on  $(\rho_1, X^a Z^b \rho_2 Z^b X^a)$ and shares the bipartite state with both $\bob$ and $\charlie$.
\end{description}

\noindent It is not hard to see that the above attacking strategy for $\uss_1$ exactly corresponds to an attack in the $\UE$ we proposed above: ${\cal P}_1$ tries to split a ciphertext while ${\cal P}_2$ simply forwards the secret key $k = (a, b)$. Therefore, we can base the unclonability of the $\UE$ on that of $\USS_1$, which completes the first direction.  

\paragraph{$\UE$ implies $\uss_2$, \Cref{sec:UE_implies_USS2}.} Recall that $2$-out-of-$2$ $\uss_2$ describes adversarial parties who do not share any entanglement. We can simply set up our $\uss_2$ scheme as follows, using $\UE$:
\begin{description}
    \item $\share(m)$ takes as input a message $m$, it samples a key $k$ for $\UE$, and let $\ket{\ct}$ be the unclonable ciphertext of $m$ under $k$; the procedure $\share$ outputs the first share as $\rho_1 = k$, and the second share as $\rho_2 = \ket{\ct}$.
\end{description}
As there is no entanglement between ${\cal P}_1$ and ${\cal P}_2$, ${\cal P}_1$ with $\rho_1 = k$ forwards the classical information to both Alice and Bob. In the meantime, ${\cal P}_2$ employs her cloning strategy, which remains entirely independent of the key $k$. Consequently, the unclonability of out $\uss_2$ aligns with that of $\UE$.

When we generalize the conclusion to $n$-out-of-$n$ $\uss_2$, we first secret share the targeted message $m$ into $n$ shares. For any two adjacent parties ${\cal P}_i$, ${\cal P}_{i+1}$ and the $i$-th share, the first part receives the key and the second one gets the unclonable ciphertext. As long as all the malicious parties form at least two connected components (as defined in $\uss_2$), there must be two adjacent parties who do not have entanglement. Thus, we can incur the same logic to prove its unclonability, basing on the unclonability of $\UE$.

\subsection{\texorpdfstring{Construction of $\uss_{\omega(\log \lambda)}$}{Construction of USS with omega(log lambda) connected components}}
\label{sec:tech_overview_USS_disconnected}
For simplicity, we focus on an $n$-out-of-$n$ USS, where $n = \omega(\log \lambda)$ and no entanglement is shared between any malicious parties, which is a special case of a general $n$-out-of-$n$ $\uss_{\omega(\log \lambda)}$, for a larger $n \gg \omega(\log \lambda)$. Our construction is based on the BB84 states.
Our scheme first classically secret-shares $m$ into $(n-1)$ shares and encodes each classical share into a single-qubit BB84 state. One party will receive the basis information $\theta$ which contains $(n-1)$ basis; every other party will receive a BB84 state for the $i$-th classical share. 
\begin{description}
    \item $\share(m)$: it takes as input a secret $m \in \{0,1\}$,
    \begin{itemize}
        \item it samples $m_1, \cdots, m_{n-1}$ conditioned on their parity equals to $m$;
        \item it samples $\theta \in \{0,1\}^{n-1}$;
        \item let the first $(n-1)$ shares be $\rho_i = H^{\theta_i} \ket{m_i}\bra{m_i} H^{\theta_i}$ and the last share $\rho_n = \ketbra \theta \theta$.
    \end{itemize}
\end{description}
Reconstruction of shares is straightforward. After receiving all shares, one uses the basis information $\theta$ to recover all the classical shares $m_i$; $m$ then is clearly determined by these $m_i$. 

To reason about the unclonability of our protocol, we first recall a theorem on BB84 states, initially proposed by Tomamichel, Fehr, Kaniewski and Wehner~\cite{TFKW13} and later adapted in constructing unclonable encryption by Broadbent and Lord~\cite{BL20}. We start by considering a cloning game of single-qubit BB84 states. 
\begin{enumerate}
    \item $\alice$ receives $H^\theta \ketbra{x}{x} H^\theta$ for uniformly random $x, \theta \in \{0,1\}$, it applies a channel and produces $\sigma_{\mathbf{B} \mathbf{C}}$. Bob and Charlie receive their registers accordingly. 
    \item Bob $\bob$ and Charlie $\charlie$ apply their POVMs and try to recover $x$; they win if and only if both guess $x$ correctly. 
\end{enumerate}
\begin{lemma}[Corollary 2 when $n = 1$, \cite{BL20}]\label{lem:single_qubit_bb84}
    No (unbounded) quantum $(\alice, \bob, \charlie)$ wins the above game with probability more than 0.855.
\end{lemma}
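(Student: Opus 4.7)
The plan is to invoke the monogamy-of-entanglement bound of Tomamichel, Fehr, Kaniewski and Wehner~\cite{TFKW13} specialised to single-qubit BB84 encodings; this is exactly the $n=1$ case of Corollary~2 in~\cite{BL20}, so one route is simply to cite the prior work, but I sketch the three main steps of a self-contained argument below.

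First, I would reduce to pure projective strategies. By Stinespring dilation, Alice's cloning channel can be taken to be an isometry $V:\mathcal{H}_{\mathbf{A}}\to\mathcal{H}_{\mathbf{B}}\otimes\mathcal{H}_{\mathbf{C}}$, absorbing her environment into (say) Bob's register. By Naimark, Bob's and Charlie's POVMs $\{B^\theta_x\}_x$ and $\{C^\theta_x\}_x$ can likewise be taken to be complete projective measurements on enlarged Hilbert spaces, chosen as a function of the revealed basis $\theta$. Writing $|\phi^\theta_x\rangle := H^\theta|x\rangle$, the winning probability becomes
$$p_{\mathrm{win}} \;=\; \tfrac{1}{4}\sum_{x,\theta\in\{0,1\}}\langle\phi^\theta_x|\,V^\dagger(B^\theta_x\otimes C^\theta_x)V\,|\phi^\theta_x\rangle.$$

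Second, I would apply the TFKW13 operator-norm bound for extended monogamy games, which gives
$$p_{\mathrm{win}} \;\leq\; \|\Pi\|_\infty,\qquad \Pi \;:=\; \tfrac{1}{2}\bigl(\Pi_0+\Pi_1\bigr),\qquad \Pi_\theta \;:=\; \sum_{x\in\{0,1\}} |\phi^\theta_x\rangle\langle\phi^\theta_x|\otimes B^\theta_x\otimes C^\theta_x,$$
where the supremum is taken over all valid projective measurements on Bob's and Charlie's registers. Each $\Pi_\theta$ is a projector (after dilation), so $\|\Pi_\theta\|_\infty\leq 1$. Then I would bound $\|\Pi\|_\infty\leq\cos^2(\pi/8)=\tfrac{1}{2}+\tfrac{1}{2\sqrt{2}}<0.855$: using $\|\Pi\|_\infty^2=\|\Pi^2\|_\infty$ and $\Pi_\theta^2=\Pi_\theta$, one obtains
$$\|\Pi\|_\infty^2 \;=\; \tfrac{1}{4}\,\|\Pi_0+\Pi_1+\Pi_0\Pi_1+\Pi_1\Pi_0\|_\infty,$$
which reduces the problem to bounding the cross term $\|\Pi_0\Pi_1+\Pi_1\Pi_0\|_\infty$. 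Here I would use the mutual unbiasedness $|\langle\phi^0_x|\phi^1_y\rangle|=1/\sqrt{2}$, which pulls a factor of $1/\sqrt{2}$ out of the inner product on the $|\phi\rangle$-register in the product $\Pi_0\Pi_1$; combined with the tensor-product structure of $B^0_x B^1_y\otimes C^0_x C^1_y$ and completeness of the POVMs, this yields $\|\Pi_0\Pi_1+\Pi_1\Pi_0\|_\infty\leq\tfrac{1}{2}+\tfrac{1}{\sqrt{2}}$. Solving the resulting quadratic inequality in $a:=\|\Pi_0+\Pi_1\|_\infty$ (namely $a^2\leq 2a+2+\sqrt{2}$ after rescaling) gives $a\leq 1+\tfrac{1}{\sqrt{2}}$, i.e., $\|\Pi\|_\infty\leq\tfrac{1}{2}+\tfrac{1}{2\sqrt{2}}$ as claimed.

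The main obstacle is precisely this cross-term estimate: the naive submultiplicative bound $\|\Pi_0\Pi_1\|_\infty\leq\|\Pi_0\|_\infty\|\Pi_1\|_\infty\leq 1$ only yields the trivial inequality $p_{\mathrm{win}}\leq 1$. Extracting the sharp constant $\cos^2(\pi/8)$ requires simultaneously exploiting the $1/\sqrt{2}$-unbiasedness of the computational and Hadamard bases and the tensor factorisation of the joint measurement into separate registers for Bob and Charlie --- this is the heart of the TFKW13 calculation, and the place where the no-signalling structure of the BB84 monogamy game is used decisively.
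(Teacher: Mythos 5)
The paper offers no proof of this lemma at all: it is imported verbatim as the $n=1$ case of Corollary~2 of \cite{BL20} (ultimately the monogamy-of-entanglement bound $\cos^2(\pi/8)=\tfrac12+\tfrac{1}{2\sqrt{2}}\approx 0.854<0.855$ of \cite{TFKW13}), so your primary route --- citing that result --- is exactly the paper's approach, and your supplementary sketch is just the standard TFKW13 operator-norm argument. Two small cautions on the sketch: the ``rescaled'' quadratic $a^2\le 2a+2+\sqrt{2}$ as written only gives $a\le 1+\sqrt{3+\sqrt{2}}$, not $a\le 1+\tfrac{1}{\sqrt{2}}$ (the inequality your derivation actually yields is $a^2\le a+\tfrac12+\tfrac{1}{\sqrt{2}}$, whose positive root is $1+\tfrac{1}{\sqrt{2}}$), and the cross-term estimate is the real work: one needs $\|\Pi_0\Pi_1\|\le \tfrac{1}{\sqrt{2}}$ (proved by dropping Charlie's operators from $\Pi_0$ and Bob's from $\Pi_1$, then using projectivity and the $1/\sqrt{2}$ overlaps) together with the two-projector fact $\|\Pi_0\Pi_1+\Pi_1\Pi_0\|\le \|\Pi_0\Pi_1\|\bigl(1+\|\Pi_0\Pi_1\|\bigr)$, which your sketch gestures at but does not supply.
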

Tomamichel, Fehr, Kaniewski and Wehner~\cite{TFKW13} and Broadbent and Lord~\cite{BL20} studied parallel repetitions of the above cloning game\footnote{Indeed, \cite{TFKW13} proved a stronger statement on a different game, which ultimately implied the parallel repetition theorem, shown by~\cite{BL20}.}. In the parallel repetition, $n$ random and independent BB84 states are generated, which encode an $n$-bit string $x$. The goal of cloning algorithms is to guess the $n$-bit string $x$ simultaneously. They showed that the cloning game follows parallel repetition, meaning that the optimal winning probability in an $n$-fold parallel repetition game is at most $(0.855)^n$. 

Our proposed scheme also prepares these BB84 states in parallel, but hides the secret $m$ as the XOR of the longer secret. Indeed, the XOR repetition of the BB84 cloning game has been a folklore and was considered as a candidate for UE. More specifically, it is conjectured that the following game can not be won by any algorithm with probability more than $1/2 + \exp(-\Omega(n))$: 
\paragraph{XOR repetition of BB84 cloning games.}
\begin{enumerate}
    \item $\alice$ receives $H^\theta \ketbra{x}{x} H^\theta$ for uniformly random $x, \theta \in \{0,1\}^n$, it applies a channel and produces $\sigma_{\mathbf{B} \mathbf{C}}$. Bob and Charlie receive their register accordingly. 
    \item Bob $\bob$ and Charlie $\charlie$ apply their POVMs and try to recover ${\sf parity}(x)$; they win if and only if both guess correctly. 
\end{enumerate}
Although there is no evidence to disprove the bound for the XOR repetition so far, the validity of the bound still remains unknown. In this work, we prove this bound, when $\alice$ is restricted to a collection of strategies. It applies ${\cal C}_i$ on the $i$-th qubit of the BB84 state and get $\sigma^{(i)}_{\mathbf{BC}}$; the final state $\sigma_{\mathbf{BC}} = \bigotimes_i \sigma^{(i)}_{\mathbf{BC}}$. Note that the lemma does not put any constraint on the behaviors of $\bob$ or $\charlie$. 
\begin{lemma}[An XOR lemma for BB84 cloning games, \Cref{sec:IT-scheme-many-components}.]
\label{lem:IT-many-components-tech-overview}
    When $\alice$ only applies a tensor cloning strategy to prepare $\sigma_{\mathbf {BC}}$, the optimal success probability in the XOR repetition of BB84 games is $1/2 + \exp(-\Omega(n))$. 
\end{lemma}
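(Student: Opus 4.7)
My plan is to reduce the $n$-fold XOR cloning game (under a tensor Alice strategy $\mathcal{C} = \bigotimes_i \mathcal{C}_i$) to the single-qubit cloning bound of \Cref{lem:single_qubit_bb84}, by bounding each of Bob's and Charlie's individual parity-guessing probabilities and combining them via AM--GM. The starting trivial observation is that for any strategy,
\[
\Pr[\mathsf{win}] \;=\; \Pr[B = p(x) \wedge C = p(x)] \;\le\; \min\bigl(\Pr[B = p(x)],\ \Pr[C = p(x)]\bigr),
\]
so it suffices to show that Bob's (equivalently Charlie's) optimal probability of recovering $p(x)$ is $\tfrac12 + \exp(-\Omega(n))$.

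The key calculation is that, because Alice's strategy is tensor, Bob's marginal factorizes: $\sigma_B(x,\theta) = \bigotimes_i \sigma^{(i)}_B(x_i,\theta_i)$, and the pairs $(x_i,\theta_i)$ are mutually independent. Writing $\Delta^B_{i,\theta_i} := \tfrac{1}{2}(\sigma^{(i)}_B(0,\theta_i) - \sigma^{(i)}_B(1,\theta_i))$, Helstrom's theorem (together with multiplicativity of trace norm on tensor products) tells us that the maximum bias of any joint measurement of Bob's on his $n$-qubit register at predicting $p(x)$ conditioned on $\theta$ equals $\bigl\|\bigotimes_i \Delta^B_{i,\theta_i}\bigr\|_1 = \prod_i \|\Delta^B_{i,\theta_i}\|_1$. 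Averaging over independent $\theta_i$'s then yields $\max_{\hat B}\Pr[B=p(x)] = \tfrac12 + \tfrac12\prod_i b_i^{\ast}$ with $b_i^{\ast} := \mathbb{E}_{\theta_i}\|\Delta^B_{i,\theta_i}\|_1$; analogously $\max_{\hat C}\Pr[C=p(x)] = \tfrac12 + \tfrac12\prod_i c_i^{\ast}$ with $c_i^{\ast} := \mathbb{E}_{\theta_i}\|\Delta^C_{i,\theta_i}\|_1$.

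I then apply \Cref{lem:single_qubit_bb84} to the channel $\mathcal{C}_i$ for each $i$. Taking Bob's and Charlie's Helstrom-optimal single-qubit POVMs (which individually attain success probabilities $\tfrac12 + b_i^{\ast}/2$ and $\tfrac12 + c_i^{\ast}/2$ at recovering $x_i$), the inclusion--exclusion bound $\Pr[\text{both recover }x_i] \ge \Pr[\text{Bob recovers}] + \Pr[\text{Charlie recovers}] - 1$ combined with the single-qubit cloning bound $\Pr[\text{both recover }x_i] \le \cos^2(\pi/8)$ gives $b_i^{\ast} + c_i^{\ast} \le 2\cos^2(\pi/8) = 1 + \tfrac{1}{\sqrt{2}}$ for every $i$. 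AM--GM then closes the argument:
\[
\min\!\Bigl(\prod_i b_i^{\ast},\ \prod_i c_i^{\ast}\Bigr) \;\le\; \sqrt{\textstyle\prod_i b_i^{\ast}c_i^{\ast}} \;\le\; \prod_i \frac{b_i^{\ast}+c_i^{\ast}}{2} \;\le\; \bigl(\cos^2(\pi/8)\bigr)^n,
\]
so $\Pr[\mathsf{win}] \le \tfrac12 + \tfrac12\bigl(\cos^2(\pi/8)\bigr)^n = \tfrac12 + \exp(-\Omega(n))$.

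The main delicate point is the factorization in the second step: Bob is allowed an \emph{arbitrary} joint measurement on his $n$-qubit register, yet we need his optimal parity-guessing probability to factorize as $\prod_i b_i^{\ast}$. This works specifically because Alice's tensor strategy forces the Bob-state (and its even/odd-parity mixtures) to factorize, so Helstrom's optimum reduces to the product of single-qubit trace norms. If Alice were permitted an entangling cloning channel, Bob's parity-distinguishing optimum would no longer obviously factorize, which is exactly the barrier separating our tensor-strategy theorem from the general XOR conjecture for BB84 cloning games mentioned in the paragraph preceding the lemma statement.
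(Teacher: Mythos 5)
Your proof is correct, and it shares the paper's two load-bearing ingredients: (i) because $\alice$'s strategy is a tensor product, the parity-distinguishing advantage of $\bob$ (resp.\ $\charlie$) factorizes exactly into a product of per-qubit trace distances (this is the paper's key equality, \Cref{clm:smalltd}, which you recover via Helstrom plus multiplicativity of $\|\cdot\|_1$ on tensor products), and (ii) the single-qubit monogamy bound of \Cref{lem:single_qubit_bb84} is invoked coordinate-wise through exactly the inclusion--exclusion observation the paper uses. Where you diverge is in how the per-coordinate information is aggregated: the paper turns the single-qubit bound into a dichotomy (for each $i$, either $\bob$'s or $\charlie$'s trace distance is at most $0.86$), applies pigeonhole to conclude that one party, w.l.o.g.\ $\bob$, has at least $\lceil n/2\rceil$ ``bad'' coordinates, and bounds the product by $0.86^{n/2}$; you instead keep the trade-off quantitative, $b_i^{\ast}+c_i^{\ast}\le 2\cos^2(\pi/8)$ for every $i$, and close with $\min(\prod_i b_i^{\ast},\prod_i c_i^{\ast})\le\prod_i\tfrac{b_i^{\ast}+c_i^{\ast}}{2}$ via AM--GM. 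This buys a cleaner argument (no w.l.o.g./pigeonhole case split) and a slightly sharper exponent, $(\cos^2(\pi/8))^n\approx(0.854)^n$ versus the paper's $(0.86)^{n/2}$; it is also more explicit than the paper about conditioning on the basis string $\theta$ (which $\bob$ and $\charlie$ legitimately hold) and only then averaging the per-$\theta$ trace distances using independence of the $\theta_i$, a point the paper's write-up glosses over. Your closing remark correctly identifies that the factorization step is exactly what fails for entangling cloning channels, matching the paper's framing of the general XOR conjecture.
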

\noindent Equipped with it, it is straightforward to show the unclonability of our protocol. 

\medskip

\paragraph{A proof for the XOR repetition.}
Finally, we give a brief recap on the proof for \Cref{lem:IT-many-components-tech-overview}. 

For any $\alice$'s tensor strategy with channels $\mathcal{C}_i$ applied on the $i$-th qubit of a BB84 state, we recall the notation $\sigma^{(i)}_{\mathcal{BC}}$. This is the state produced from the $i$-th qubit of the B884 state, when $\theta_i, x_i$ was sampled uniformly at random. 
Let $\sigma^{(i, 0)}_{\mathbf{B}}$ be the density matrix, describing the register that will be given to Bob, when $x_i = 0$. We can similarly define $\sigma^{(i, 1)}_{\mathbf{B}}$, $\sigma^{(i, 0)}_{\mathbf{C}}$ and $\sigma^{(i, 1)}_{\mathbf{C}}$. \Cref{lem:single_qubit_bb84} tells us that, there exists a constant $c > 0$, either 
\begin{align*}
    {\sf TD}(\sigma^{(i, 0)}_{\mathbf{B}}, \sigma^{(i,1)}_{\mathbf B}) < c  \quad\quad\quad \text{ or } \quad\quad\quad {\sf TD}(\sigma^{(i, 0)}_{\mathbf{C}}, \sigma^{(i,1)}_{\mathbf C}) < c.
\end{align*}
This indicates that for every $i$, either Bob or Charlie can not perfectly tell the value of $x_i$, regardless of the channel $\mathcal{C}_i$. Furthermore, as the BB84 state has $n$ qubits, w.l.o.g. we can assume that the above holds for Bob, for at least $n/2$ positions. 

In the XOR repetition, Bob eventually will receive $\sigma^{(i, m_i)}_{\mathbf{B}}$. We show that Bob can not tell whether the parity of all $m_i$ is odd or even. More precisely, we will show:
$$\td\left(\sum_{\substack{m_1,\ldots,m_{n-1}:\\ \oplus_i m_i = 0}} \frac{1}{2^{n-2}} \left( \bigotimes_i \sigma^{(i, m_i)}_{\mathbf B} \right),\ \sum_{\substack{m_1,\ldots,m_{n-1}:\\ \oplus_i m_i = 1}} \frac{1}{2^{n-2}} \left( \bigotimes_i \sigma^{(i, m_i)}_{\mathbf B} \right)  \right) \leq  c^{n/2}.$$
We connect the trace distance directly to the trace distance of \emph{each pair of states} ${\sf TD}(\sigma^{(i, 0)}_{\mathbf{B}}, \sigma^{(i,1)}_{\mathbf B})$ and demonstrate \emph{an equality} (see \Cref{sec:IT-scheme-many-components}):
$$\td\left(\sum_{\substack{m_1,\ldots,m_{n-1}:\\ \oplus_i m_i = 0}} \frac{1}{2^{n-2}} \left( \bigotimes_i \sigma^{(i, m_i)}_{\mathbf B} \right),\ \sum_{\substack{m_1,\ldots,m_{n-1}:\\ \oplus_i m_i = 1}} \frac{1}{2^{n-2}} \left( \bigotimes_i \sigma^{(i, m_i)}_{\mathbf B} \right)  \right) = \prod_i {\sf TD}\left(\sigma^{(i, 0)}_{\mathbf{B}}, \sigma^{(i,1)}_{\mathbf B}\right).$$
Since every trace distance is bounded by $1$ and there are 
at least $n/2$ terms in the product smaller than $c$, we conclude the result.


\subsection{\texorpdfstring{Impossibility of $\uss_1$}{Impossibility of USS1}}
Since $\uss_1$ implies $\UE$, it is natural to consider building $\UE$ from $\uss_1$. Constructing $\UE$ in the basic model remained unresolved since~\cite{BL20}. Perhaps the connections in the last section provide a new avenue for constructing $\UE$. In this section, we present two impossibility results (referred to as (b) in \Cref{fig:IT-relations} and (f) in \Cref{fig:comp-relations}) that highlight challenges associated with $\uss_1$.

\paragraph{Information-theoretic $\uss_1$ does not exist, \Cref{sec:general_impossibility}.} 
We begin by examining the case of $2$-out-of-$2$ $\uss_1$ with unentangled shares, and our impossibility result extends to the general case. Let us consider two malicious parties, $\cP_1$ and $\cP_2$, who share an unlimited amount of entanglement. $\cP_2$ receives the initial share, $\rho_2$, and teleports it to $\cP_1$. This action leaves $\cP_2$ with a random one-time pad key, denoted as $(a, b)$ while $\cP_1$ now possesses $(\rho_1, X^a Z^b \rho_2 Z^b X^a)$. 
Now, $\cP_1$ aims to jointly apply the reconstruction procedure to $(\rho_1, \rho_2)$, but there's a problem: $\cP_1$ lacks all the necessary information, especially the one-time padded key. To address this challenge, we recall the concept of port-based teleportation~\cite{ishizaka2008asymptotic,beigi2011simplified} to help $\cP_1$.

Port-based teleportation allows one party to teleport a $d$-qubit quantum state to another party, while leaving the state in plain. This is certainly impossible without paying any cost, as it contradicts with special relativity. Two parties need to pre-share about $O(d 2^d)$ EPR pairs, divided into $O(2^d)$ blocks of $d$ qubits. After the port-based teleportation, the teleported state will be randomly dropped into one of the blocks of $\cP_2$, while only $\cP_1$ has the classical information about which block consists of the original state. 

Equipped with port-based teleportation, $\cP_1$ teleports $(\rho_1, X^a Z^b \rho_2 Z^b X^a)$ to $\cP_2$; it has the classical information ${\sf ind}$ specifying the location of the teleported state. $\cP_2$ then runs $\reconstruct \circ (I \otimes Z^b X^a)$ on every possible block among the pre-shared entanglement, yielding $O(2^d)$ different values; even though most of the execution is useless, the ${\sf ind}$-th block will store the correct (classical) answer. Finally, both $\cP_1$ and $\cP_2$ sends all their classical information to Alice and Bob; each of them can independently determine the message. This clearly violates the unclonability of $\uss_1$. Thus, for any 
$2$-out-of-$2$ $\uss_1$ whose shares are of length $d$, there is an attacking strategy that takes time and entanglement of order $\tilde{O}(d 2^d)$ and completely breaks its unclonability.

We refer readers to \Cref{sec:general_impossibility} for the proof of a general theorem statement. 

\paragraph{Impossibility of computationally secure $\uss_1$, with low-T $\reconstruct$, \Cref{sec:impossibility_lowT}.}
We now focus on the case when the reconstruction circuit can be implemented by Clifford gates and logarithmically many $\sf T$ gates. We would like to mention that a similar result has already been shown in \cite{speelman15} in the context of instantaneous non-local computation; we rediscovered the following simple attack for unclonable secret sharing. We also extend the attack to an $n$-party setting whereas \cite{speelman15} considers only 2 parties.

Denote $C$ to be the reconstruction circuit. That is, on input two shares of the form $\rho_1, \rho_2$, the output is the first bit of $C(\rho_1 \otimes \rho_2) C^\dagger = \ket m \bra m \otimes \tau$. 

We let $\cP_2$ teleport $\rho_2$ to $\cP_1$ and they try to compute $\reconstruct$ in a non-local manner. In the previous attack, this is done by leveraging an exponential amount of entanglement.
To avoid this and make the attack efficient, we hope that $\cP_1$ can homomorphically compute on the one-time padded data $(\rho_1, X^a Z^b \rho_2 Z^b X^a)$, without decrypting it. 

Suppose $C$ is a Clifford circuit. We use the fact that the Clifford group is a normalizer for the Pauli group (specifically, the $X^a Z^b$ operator). Let us assume each $\rho_1, \rho_2$ is of $\ell$ qubits.
In other words, for any $a, b \in \{0,1\}^\ell$ and Clifford circuit $C$, there exists a polynomial-time computable $a', b' \in \{0,1\}^{2 \ell}$ depending only on $a, b$ and $C$, such that
\begin{align*}
     C (\rho_1 \otimes X^{a} Z^{b} \rho_2 Z^{b} X^{a}) C^\dagger = X^{a'} Z^{b'} C (\rho_1 \otimes \rho_2) C^\dagger Z^{b'} X^{a'}. 
\end{align*}
Here $a', b'$ act as a bigger quantum one-time pad operated on $C (\rho_1 \otimes \rho_2) C^\dagger = \ket m \bra m \otimes \tau$. 

Now $\cP_1$ measures the first qubit in the computational basis, yielding $m \oplus a'_1$; whereas $\cP_2$ compute $a', b'$ (and most importantly, $a'_1$) from its classical information $a, b$. They send their knowledge to both Alice and Bob, who later simultaneously recover $m$. 

\medskip

Next, let us consider the more general case where $C$ consists of Clifford gates and $t$ number of $\sf T$ gates. The homomorphic evaluation of Clifford gates are as before. However, the homomorphic evaluation of $\sf T$ gates are handled differently.

Let us consider one single $\sf T$ gate that applies to the first qubit. We consider two identities, for any $x, z\in \{0,1\}$ and any single-qubit state $\ket{\psi}$
\begin{align*}
(i)\ T(X^x Z^z) \ket{\psi} &= (X^x Z^{x \oplus z} P^x) T \ket{\psi}, \\
(ii)\ P(X^x Z^z) \ket{\psi} &= (X^x Z^{x \oplus z}) P \ket{\psi}
\end{align*}
Suppose, the current state is of the form $X^x Z^z \ket{\psi}$ and we apply $P^{x}T$ to the state. We would like to show that the resulting state is $X^{a'}Z^{b'} T \ket{\psi}$ for some $a' \in \{0,1\},b' \in \{0,1\}$. We use the above identities:
$$ (P^{x} T) (X^x Z^z) \ket{\psi} \stackrel{\text{From }(i)}{=} P^{x} (X^x Z^{x \oplus z} P^x) T \ket{\psi} \stackrel{\text{From }(ii)}{=} X^x Z^{x \oplus z} P^{x \oplus x} T \ket{\psi}.$$
Note that $P^2 = P^0 = I$. 
Thus, if we can learn $x$ ahead, we can successfully homomorphic compute ${\sf T}$ on the encrypted data. However, in our case, $x$ corresponds to any bit in the one-time pad key $a$ of any stage. $\cP_1$ has no way to learn $x$. This is where the limitation of low-$\sf T$ gate comes from. Instead of knowing $x$ ahead, each time when a $\sf T$ homomorphic evaluation is needed, one simply guesses $x'$; as long as $x = x'$ (which happens with probability $1/2$), we succeed. Thus, $\cP_1$ only guesses all $x$'s (for each $\sf T$ gate) correctly with probability $2^{-t}$. If $t$ is logarithmic, our attack violates the security with inverse polynomial probability; therefore, it rules out computationally secure $\uss_1$ with a low-${\sf T}$ $\reconstruct$ procedure. 


\subsection{\texorpdfstring{Barries of $\uss_1$ (implication of PV)}{Barries of USS1}}
To further demonstrate the challenge of building $\USS$ against entangled adversaries, we show that 2-party $\USS_1$ implies a primitive called position verification.  Position verification (PV)  has remained a vexing problem since its inception~\cite{chandran2009position}.

 We briefly introduce the notion of position verification for the 1-dimensional setting: two verifiers on a line will send messages to a prover who claims to be located at a position between the two verifiers. By computing a function of the verifiers' messages and returning the answers to the verifiers in time, the prover ensures them of its location. 
However, two malicious provers may collude to impersonate such an honest verifier by standing at the two sides of the claimed position.  

We demonstrate that $2$-party $\USS_1$, even with the weaker search-based security, will imply PV: the two verifiers in the position verification protocol will generate secret shares $(\rho_0, \rho_1)$ of a random string $s$; then they will each send the messages $\rho_0$ and $\rho_1$ respectively to the prover; the prover needs to reconstruct $s$ and send $s$ to both verifiers in time. Any attack against PV can be viewed as a two-stage strategy---one can perfectly turn the first-stage strategy in PV into the shareholders'  strategy in $\USS$  and the second-stage strategy in PV into the recoverers' strategy  in $\USS$.

Despite many efforts, progress on PV in the computational setting against entangled adversaries has unfortunately been slow. 
We do not even know of any secure computational PV against adversaries with unbounded polynomial amount of entanglement in the plain model, nor any impossibility result. 
Moreover, some recent advancement in quantum gravity has unveiled some
connections between the security of position verification and problems in quantum gravity \cite{may2019quantum,May23}  .

Any progress of $\USS_1$ in the plain model will contribute towards resolving this long-standing open problem and unveil more implications. 

\section{Preliminaries}

\subsection{Notations}
We assume that the reader is familiar with the basic background from~\cite{nielsen2010quantum}. The Hilbert spaces we are interested in are $\mathbb{C}^d,$ for $d \in \mathbb{N}$. We denote the quantum registers with capital bold letters $\mathbf{R}$, $\mathbf{W}$, $\mathbf{X}$, ... \pnote{this is not consistently used everywhere. My suggestion would be to use bold letters since calligraphic is also used in other places.}\qipeng{I am working on it}. We abuse the notation and use registers in place of the Hilbert spaces they represent. The set of all linear mappings from $\mathbf{R}$ to $\mathbf{W}$ is denoted by $L(\mathbf{R}, \mathbf{W})$, and $L(\mathbf{R})$ denotes $L(\mathbf{R}, \mathbf{R})$. We denote unitaries with capital letters $C$, $E$, ... and the set of unitaries on register $\mathbf{R}$ with $U(\mathbf{R})$. We denote the identity operator on $\mathbf{R}$ with $\mathbb{I}_\mathbf{R}$; if the register $\mathbf{R}$ is clear from the context, we drop the subscript $\mathbf{R}$ from the notation $\mathbb{I}_\mathbf{R}$. We denote the set of all positive semi-definite linear mappings in $L(\mathbf{R}, \mathbf{R})$ with trace 1 (i.e., set of all valid quantum states) by $D(\mathbf{R})$. For a register $\mathbf{R}$ in a multi-qubit system, we denote $\overline{\mathbf{R}}$ to be a register consisting of all the qubits in the system not contained in $\mathbf{R}$. We denote $\Tr_{\mathbf{R}}(\rho)$ to be the state obtained by tracing out all the registers of $\rho$ except $\mathbf{R}$. A quantum channel $\Phi$ refers to a completely positive and trace-preserving (CPTP) map from a Hilbert  space $\mathcal{H}_1$ to a possibly different Hilbert  space $\mathcal{H}_2$.



\subsection{Unclonable Encryption}

Unclonable encryption was originally defined in \cite{BL20} and they considered two security notions, namely search and indistinguishability security, with the latter being stronger than the former. We consider below a mild strengthening of the indistinguishability security due to~\cite{AK21}. 

\vipul{In our disconnected component construction, we need unconditionally secure UE Without RO. I believe that is only known based on search based definition not captured in the definitions below. } \pnote{not sure I follow. For the implication, we need ind security which is what is defined below.}

\begin{definition}\label{def:ue}
An unclonable encryption scheme $\UE$ is a triple of efficient quantum algorithms $(\UE.\keygen,\UE.\enc,\UE.\dec)$ with the following procedures:
\begin{itemize}
    \item $\keygen(1^\lambda)$: On input a security parameter $1^\lambda$, returns a classical key $\sk$\footnote{In our construction, we require $\sk$ being a uniform random string. Such a $\UE$ scheme can be constructed in QROM \cite{AKLLZ22,AKL23}}.
    
    \item $\enc(\sk, m)$: It takes the key $\sk$ and the message $m$ for $m \in \{0,1\}^{\mathrm{poly}(\lambda)}$ as input and outputs a quantum ciphertext $\rho_{ct}$.  
    
    \item $\dec(\sk,\rho_{ct})$: It takes the key $\sk$ and the quantum ciphertext $\rho_{ct}$, it outputs a quantum state $\tau$.
\end{itemize}
\end{definition}

\paragraph{Correctness.} The following must hold for the encryption scheme. For every $\sk\leftarrow \keygen(1^{\lambda})$ and every message $m$, we must have $\Tr[\ket{m}\bra{m}\dec(\sk,\enc(\sk,\ket{m}\bra{m}))]\geq 1-\negl(\lambda)$.

\paragraph{Unclonability.} In the rest of the work, we focus on unclonable IND-CPA security. The regular IND-CPA security follows directly from its unclonable IND-CPA security. To define unclonable security, we introduce the following security game.

\begin{definition}[Unclonable IND-CPA game]
\label{def:ue_ind}
Let $\lambda\in \mathbb{N}^+$. Consider the following game against the adversary $(\As, \Bs, \Cs)$.
\begin{itemize}
    \item The adversary $\As$ generates $m_0,m_1\in\{0,1\}^{n(\lambda)}$ and sends $(m_0,m_1)$ to the challenger.
    \item The challenger randomly chooses a bit $b\in\{0,1\}$ and returns $\enc(\sk,m_b)$ to $\As$. $\As$ produces a quantum state $\rho_{\mathbf{B}\mathbf{C}}$ on registers $\mathbf{B}$ and $\mathbf{C}$, and sends the corresponding registers to $\mathcal{B}$ and $\mathcal{C}$.
    \item $\Bs$ and $\Cs$ receive the key $\sk$, and output bits $b_{\Bs}$ and $b_{\Cs}$ respectively.
\end{itemize}
The adversary wins if $b_\Bs=b_\Cs=b$. 
\end{definition}
\noindent We denote the success probability of the above game by $\advantage_{\As,\Bs,\Cs}(\lambda)$. We say that the scheme is information-theoretically (resp., computationally) secure if for all (resp., quantum polynomial-time) adversaries $(\mathcal{A},\mathcal{B},\mathcal{C})$,
\begin{align*}
\advantage_{\As,\Bs,\Cs}(\lambda)\leq 1/2+\negl(\lambda). 
\end{align*}

\subsection{Quantum Gate Sets}
\noindent We will work with the following quantum gate sets.  

\paragraph{Pauli Group.}
The single-qubit Pauli group $\mathcal{P}$ consists of the group generated by the following Pauli matrices: $$I = \begin{pmatrix} 1 & 0\\ 0 & 1 \end{pmatrix}\hspace{1cm} X = \begin{pmatrix} 0 & 1\\ 1 & 0 \end{pmatrix}\hspace{1cm} Y = \begin{pmatrix} 0 & i\\ -i & 0 \end{pmatrix}\hspace{1cm} Z = \begin{pmatrix} 1 & 0\\ 0 & -1 \end{pmatrix}$$ The $n$-qubit Pauli group $\mathcal{P}_n$ is the $n$-fold tensor product of $\mathcal{P}$.

\paragraph{Clifford Group.}
The $n$-qubit Clifford group is defined to be the set of unitaries $C$ such that
$$C\mathcal{P}_nC^{\dagger} = \mathcal{P}_n.$$ Elements of the Clifford group are generated by $\cnot$ (a two-qubit gate that maps $\ket{a,b}$ to $\ket{a,b \oplus a}$), Hadamard $\left(H = \frac{1}{\sqrt{2}}\begin{pmatrix} 1 & 1\\ 1 & -1 \end{pmatrix}\right)$, and Phase $\left(P = \begin{pmatrix} 1 & 0\\ 0 & i \end{pmatrix}\right)$ gates.

\paragraph{Universal Gate Set.} A set of gates is said to be universal if for any integer $n\geq 1$, any $n$-qubit unitary operator can be approximated to arbitrary accuracy by a quantum circuit using only gates from that set. It is a well-known fact that Clifford gates are not universal, but adding any non-Clifford gate, such as $T \left(T = \begin{pmatrix} 1 & 0\\ 0 & \sqrt{i} \end{pmatrix}\right)$, gives a universal set of gates. Throughout the paper, we will use the universal gate set $\{H,T,\cnot\}$.

\subsection{Quantum Query Algorithms}
We consider the quantum query model in this work, which gives quantum circuits access to some oracles. 
\begin{definition}[Classical Oracle]
\label{def:classic_oracle}
A classical oracle $\mathcal{O}$ is a unitary transformation of the form $U_f \ket{x,y, z} \rightarrow \ket{x, y+f(x), z}$ for classical function $f: \{0,1\}^n \rightarrow \{0,1\}^m$. Note that a classical oracle can be queried in quantum superposition.
\end{definition}
In the rest of the paper, unless specified otherwise, we refer to the word ``oracle'' as ``classical oracle''. 
d zdzA quantum oracle algorithm with oracle access to $\mathcal{O}$ is a sequence of local unitaries $U_i$ and oracle queries $U_f$. Thus, the query complexity of a quantum oracle algorithm is defined as the number of oracle calls to $\mathcal{O}$.

In the analysis of the security of the USS scheme in QROM (\Cref{thm:uss_qrom}), we will use the following theorem from \cite{BBBV97} to bound the change in adversary's state when we change the oracle's input-output behavior at places where the adversary hardly ever queries on. 

\begin{theorem}[\cite{BBBV97}] 
\label{thm:bbbv97_oraclechange}
Let $\ket{\phi_i}$ be the superposition of oracle quantum algorithms $\mathcal{M}$ with oracle $\cO$ on input $x$ at time $i$. Define $W_y(\ket{\phi_i})$ to be the sum of squared magnitudes in $\ket{\phi_i}$ of configurations of $\mathcal{M}$ which are querying the oracle on string $y$. For $\epsilon  > 0$, let $F \subseteq [0, T-1] \times \Sigma^*$ be the set of time-string pairs such that 
$\sum_{(i,y) \in F} W_y(\ket{\phi_i}) \leq \epsilon^2/T$.

Now suppose the answer to each query $(i, y) \in F$ is modified to some arbitrary fixed $a_{i,y}$ (these answers need not be consistent with an oracle). Let $\ket{\phi_i'}$ be the superposition of $\mathcal{M}$ on input $x$ at time $i$ with oracle $\cO$ modified as stated above. Then $\left\| \ket{\phi_T} - \ket{\phi_T'} \right\|_{\mathrm{tr}}\leq \epsilon$.
\end{theorem}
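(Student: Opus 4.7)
The plan is a hybrid argument that swaps the oracle's answers one time step at a time, proceeding in \emph{reverse} chronological order so that the weights $W_y(\ket{\phi_i})$ in the hypothesis (which are defined with respect to the unmodified oracle $\cO$) directly control each single-step perturbation. Concretely, for $k = 0, 1, \ldots, T$ I define the hybrid oracle $\cO_k$ to act as $\cO$ at every time step $i < T-k$ and as the modified oracle at every time step $i \geq T-k$, so that $\cO_0 = \cO$ and $\cO_T$ coincides with the fully modified oracle. Letting $\ket{\phi_T^{(k)}}$ denote the final state of $\mathcal{M}$ on input $x$ when it is run with oracle $\cO_k$, the triangle inequality yields
\begin{equation*}
\bigl\| \ket{\phi_T} - \ket{\phi_T'} \bigr\|_{\mathrm{tr}} \leq \sum_{k=0}^{T-1} \bigl\| \ket{\phi_T^{(k)}} - \ket{\phi_T^{(k+1)}} \bigr\|_{\mathrm{tr}},
\end{equation*}
so it suffices to bound each summand.

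For the single-step estimate, fix $k$ and set $i^{\star} := T-k-1$; since $\cO_k$ and $\cO_{k+1}$ agree on every time step other than $i^{\star}$, and since both behave as $\cO$ for all $i < i^{\star}$, the state of $\mathcal{M}$ just before the step-$i^{\star}$ query is exactly $\ket{\phi_{i^{\star}}}$ in both runs. Decompose $\ket{\phi_{i^{\star}}} = \ket{\psi_{\mathrm{in}}} + \ket{\psi_{\mathrm{out}}}$, where $\ket{\psi_{\mathrm{in}}}$ is the orthogonal projection onto basis states whose query register holds some $y$ with $(i^{\star}, y) \in F$. By the definition of the weights, $\bigl\| \ket{\psi_{\mathrm{in}}} \bigr\|^2 = \sum_{y : (i^{\star}, y) \in F} W_y(\ket{\phi_{i^{\star}}})$, and the two oracles act identically on $\ket{\psi_{\mathrm{out}}}$. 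Since all remaining unitaries are common to both runs and are norm-preserving, the entire perturbation arises from the difference of the two oracles applied to $\ket{\psi_{\mathrm{in}}}$, giving
\begin{equation*}
\bigl\| \ket{\phi_T^{(k)}} - \ket{\phi_T^{(k+1)}} \bigr\|_{\mathrm{tr}} \leq 2 \sqrt{\sum_{y : (i^{\star}, y) \in F} W_y(\ket{\phi_{i^{\star}}})}.
\end{equation*}

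Reindexing by $i = T-k-1$ and applying Cauchy--Schwarz gives
\begin{equation*}
\sum_{i=0}^{T-1} \sqrt{\sum_{y : (i,y) \in F} W_y(\ket{\phi_i})} \leq \sqrt{T} \cdot \sqrt{\sum_{(i,y) \in F} W_y(\ket{\phi_i})} \leq \sqrt{T \cdot \epsilon^2/T} = \epsilon,
\end{equation*}
which yields the claimed bound (up to a small constant that can be tightened by replacing the triangle inequality on $\ket{\psi_{\mathrm{in}}}$ with an inner-product argument that tracks only the phase/amplitude change induced by the modification rather than bounding it by the sum of norms). The main obstacle I anticipate is precisely this matching of hypothesis to hybrid: a naive forward-time hybrid would produce single-step bounds in terms of weights of the hybrid states $\ket{\phi_{i^{\star}}^{(k)}}$ rather than of the original $\ket{\phi_{i^{\star}}}$, and controlling the drift of those weights along the hybrid sequence would require an additional inductive bookkeeping step. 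Sweeping the swaps in reverse time order sidesteps this cleanly, because each hybrid $\cO_k$ agrees with $\cO$ on the entire prefix of steps that shape the state just before the perturbation is applied, and this is the crucial structural observation that makes the whole proof go through.
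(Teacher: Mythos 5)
This theorem is imported verbatim from [BBBV97]; the paper gives no proof of its own, and your reverse-chronological hybrid is exactly the standard BBBV argument in disguise: saying that $\cO_k$ and $\cO_{k+1}$ differ only at step $i^\star$ while agreeing with $\cO$ on the whole prefix is the same as computing the step-$i^\star$ error vector on the original state $\ket{\phi_{i^\star}}$ and propagating it by the (common, unitary) remaining evolution, so your per-step bound, the telescoping, and the Cauchy--Schwarz step match the cited proof. The one inaccurate point is the parenthetical claim that the constant can be tightened to $1$: the honest per-step bound is $\sqrt{2\,w_{i^\star}}$ (using that a modified answer sends each basis configuration to an orthogonal basis state, and distinct configurations stay orthogonal), which yields $\sqrt{2}\,\epsilon$ overall, and a one-query example with all query weight $\epsilon^2$ concentrated on a modified string shows that neither the Euclidean nor the trace-norm distance can be bounded by $\epsilon$ itself, so the factor $\sqrt{2}$ (or your $2$) is genuinely there and is a slack in the quoted statement rather than in your argument. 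Since the paper only invokes the theorem to turn an inverse-exponential query weight into a negligible disturbance, this constant is immaterial to how it is used.
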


\subsection{Port-based Teleportation}
\label{sec:port_based_teleport}

In this section, we review a type of teleportation introduced by Ishizaka and Hiroshima \cite{ishizaka2008asymptotic}.
To distinguish their teleportation protocol from the traditional one, we borrow from
their terminology and call this port-based teleportation. 

The port-based teleportation protocol is described as follows: 
Alice wants to teleport a qudit state $\ket{\psi_{\mathbf A}}$ from her
system ${\mathbf A} \cong \mathbb{C}^d$
to Bob’s system ${\mathbf B} \cong \mathbb{C}^d$. We assume that Alice and Bob share $N = O(2^d)$ copies of the
maximally entangled state $\ket{\Phi} = \frac{1}{\sqrt{d}}\sum^d_i \ket{i}\ket{i}$
respectively in registers ${\mathbf A}'_1, {\mathbf B}'_1; {\mathbf A}'_2, {\mathbf B}'_2; \cdots {\mathbf A}'_N, {\mathbf B}'_N$. We
fix an orthonormal standard basis in each of these spaces. 

\begin{enumerate}
    \item Alice performs a certain POVM $\{E_{{\mathbf A}_i {\mathbf A}_i'}^i\}_{i = 1}^N $ on her systems $\{{\mathbf A}_i, {\mathbf A}_i'\}_{i\in [N]}$. She sends the result $i$ to Bob. 

    \item Bob discards everything except the subsystem ${\mathbf B}'_i$
and calls it ${\mathbf B}$. 

\item The guarantee of the protocol is that, this register ${\mathbf B}$ now holds the state $\ket{\psi_{\mathbf A}}$.
\end{enumerate}

\section{Definitions and Notations}


\subsection{Unclonable Secret Sharing}
An $(t,n)$-unclonable secret sharing scheme, associated with $n$ parties ${\cal P}_1,\ldots,{\cal P}_n$, consists of the following QPT algorithms: 
\begin{itemize}
    \item $\share(1^{\secparam},1^n, 1^t, m) \to \rho_{{\mathbf R}_1 {\mathbf R}_2 \cdots {\mathbf R}_n}$: On input security parameter $\secparam$, $n$ parties, a secret $m \in \{0,1\}^*$, output registers ${\mathbf R}_1, {\mathbf R}_2, \cdots, {\mathbf R}_n$.
    
    \item $\reconstruct(\rho_{{\mathbf R}'_{i_1}},\ldots,\rho_{{\mathbf R}'_{i_t}})$: On input shares ${\mathbf R}'_{i_1},\ldots,{\mathbf R}'_{i_t}$, output a secret $\widehat{m}$. 
\end{itemize}

When it is an $n$-out-of-$n$ USS scheme, we ignore the input $1^t$ in $\share$. 
In the rest of the work, we will focus on constructions with unentangled shares and impossibility results for entangled shared. For sake of clarity, we will use $\rho_1, \cdots, \rho_n$ to denote these shares. 
We require the following properties to hold. 

\paragraph{Correctness.} 
We can recover the secret with probability (almost) 1, more formally:
\begin{align*}
   \Pr[\reconstruct(\rho_{i_1}, \cdots, \rho_{i_k}) = m | (\rho_{1}, \cdots, \rho_{n}) \gets \share(1^\lambda, 1^n, m) \cap k \geq t] = 1-\negl(\lambda).
\end{align*}

\vipul{ I'm still not sure I understand why we don't get perfect correctness}
\qipeng{do we have imperfect construction? If not, we can change to perfect completeness.} \pnote{I don't think there is an issue in stating the def this way.}

\subsection{Indistinguishability-Based Security}
\label{def:uss_ind_security}
In this work, we will mostly focus on the $(n,n)$-unclonable secret sharing case. For simplicity, we call it $n$-party USS.

In this section, we define indistinguishability-based security for $n$-party $\USS$. The security guarantees that for any two messages $m_0, m_1$, no two reconstructing parties can simultaneously distinguish between whether the secret is $m_0$ or $m_1$, given their sets of respective cloned shares. Formally, we define the following experiment:\\

\noindent $\underline{\expt_{(\{\alice_i\},\bob,\charlie,\xi)}}$:
\begin{enumerate}
    \item Let $\xi$ be a quantum state on registers $\aux_1,\ldots,\aux_n$. For every $i \in [n]$,  $\alice_i$ gets the register $\aux_i$. 
    \item $\adversary=\left( \{\alice_i\},\bob,\charlie,\xi \right)$ sends $(m_0,m_1)$ to the challenger such that $|m_0|=|m_1|$.  \qipeng{do they need to have the same length?} \pnote{fixed it}
    \item \textbf{Share Phase:} The challenger chooses a bit $b \xleftarrow{\$} \{0,1\}$. It computes $\share(1^{\secparam},1^n, m_b)$ to obtain $\left(\rho_1,\ldots,\rho_n\right)$ and sends $\rho_i$ to $\alice_i$.  
    \item \textbf{Challenge Phase:} For every $i \in [n]$, $\alice_i$ computes a bipartite state $\sigma_{\regX_i \regY_i}$. It sends the register ${\regX_i}$ to $\bob$ and $\regY_i$ to $\charlie$. 
    \item $\bob$ on input the registers $\regX_1,\ldots,\regX_n$, outputs a bit $b_{\bob}$. $\charlie$ on input the registers $\regY_1,\ldots,\regY_n$, outputs a bit $b_{\charlie}$.
    \item Output 1 if $b_{\bob}=b$ and $b_{\charlie}=b$. 
\end{enumerate}

\noindent \pnote{Maybe we need to explicitly state that $\bob$ and $\charlie$ don't receive as any advice state that is entangled with $\xi$? I'm not sure if we have explicitly considered this setting before.}

\qipeng{why should we consider this? if the advice depends on the random oracle, this is the non-uniform oracle model; otherwise, any potential advice is included in $\xi$?}

\pnote{I see, so what you are saying that we can assume without loss of generality, Bob and Charlie don't get any advice state.. makes sense. I'll remove the comments.} 

\begin{definition}[Information-theoretic Unclonable Secret Sharing]
\label{def:ituss}
An $n$-party unclonable secret sharing scheme $(\share,\reconstruct)$ satisfies 1-bit unpredictability if for any non-uniform adversary $\adversary=\left(\{\alice_i\}_{i \in [n]},\bob,\charlie,\xi \right)$, the following holds:
$$\prob\left[1 \leftarrow \expt_{\left( \{\alice_i\},\bob,\charlie, \xi\right)} \right] \leq \frac{1}{2} + \negl(\secparam) $$
\end{definition}

\begin{definition}[Computational Unclonable Secret Sharing]
\label{def:compuss}
An $n$-party unclonable secret sharing scheme $(\share,\reconstruct)$ satisfies 1-bit unpredictability if for any non-uniform quantum polynomial-time adversary $\adversary=\left(\{\alice_i\}_{i \in [n]},\bob,\charlie,\xi \right)$, the following holds:
$$\prob\left[1 \leftarrow \expt_{\left( \{\alice_i\},\bob,\charlie,\xi \right)} \right] \leq \frac{1}{2} + \negl(\secparam) $$
\end{definition}
\qipeng{why this is non-uniform? We can just state QPT $\adversary$ and poly-sized $\xi$?} \pnote{not sure I see the problem}

\begin{claim}
\label{claim:n-1_imply_n_USS}
Existence of $(n-1)$-party $\USS$ unconditionally implies $n$-party $\USS$.
\end{claim}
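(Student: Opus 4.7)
The plan is to promote any $(n-1)$-party $\USS$ scheme $(\share',\reconstruct')$ to an $n$-party scheme by appending a purely classical one-time pad share. On input $m$, the new $\share$ samples a uniform $r\in\{0,1\}^{|m|}$, invokes $\share'(m\oplus r)$ to obtain $(\rho'_1,\ldots,\rho'_{n-1})$, and sets $\rho_i:=\rho'_i$ for $i\in[n-1]$ together with $\rho_n:=r$ (viewed as a classical register). Reconstruction runs $\reconstruct'$ on the first $n-1$ shares to recover $m\oplus r$ and XORs with $r$. Correctness with negligible error is immediate from the $(n-1)$-party correctness, so all of the work lies in the security reduction.

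Given an adversary $\adversary=(\{\alice_i\}_{i\in[n]},\bob,\charlie,\xi)$ winning the $n$-party experiment with advantage $\nu$, I would construct $\adversary'=(\{\alice'_i\}_{i\in[n-1]},\bob',\charlie',\xi')$ with the same advantage in the $(n-1)$-party experiment. The state $\xi'$ is obtained by adjoining a uniformly random $r$ to $\xi$ and packing $\aux_1$, $\aux_n$, and a copy of $r$ inside $\aux'_1$, while each $\aux'_i$ for $i\geq 2$ carries $\aux_i$ together with its own copy of $r$; the entanglement originally present in $\xi$ across parties $1,\ldots,n$ is thereby preserved across $1,\ldots,n-1$. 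When the $(n-1)$-party challenger asks for a pair of messages, $\adversary'$ forwards $(m_0\oplus r,\, m_1\oplus r)$, so the shares $(\rho'_1,\ldots,\rho'_{n-1})$ returned are precisely $\share'(m_b\oplus r)$ for the challenger's hidden bit $b$.

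For $i\geq 2$, $\alice'_i$ simulates $\alice_i$ on $\rho'_i$ and $\aux_i$, ignoring $r$. The key step is that $\alice'_1$ simulates both $\alice_1$ and the missing $\alice_n$ in parallel: it reads $r$ off $\aux'_1$, runs $\alice_1(\rho'_1,\aux_1)$ and $\alice_n(r,\aux_n)$ independently, concatenates their bipartite outputs, and ships the joint $\mathbf{X}$-register to $\bob'$ and the joint $\mathbf{Y}$-register to $\charlie'$. The reduction's $\bob'$ and $\charlie'$ simply invoke $\bob$ and $\charlie$ on all incoming registers and output their guesses verbatim. By construction, the joint view inside the $(n-1)$-party game is distributed identically to the $n$-party game on secret $m_b$, so $\adversary'$ wins with probability $1/2+\nu$, contradicting security of the $(n-1)$-party scheme.

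The main obstacle I anticipate is making the absorption of $\alice_n$ legal inside the $(n-1)$-party experiment, which by definition only permits $\alice'_1,\ldots,\alice'_{n-1}$ to carry auxiliary state and gives nothing to $\bob'$ or $\charlie'$. What makes the construction go through is exactly the classicality of $\rho_n=r$: a classical value can be freely duplicated into every $\aux'_i$ and its entire circuit executed by any surviving party without altering the adversarial power. Had we instead appended a quantum $n$-th share entangled with the other $\aux$-registers, the simulation of $\alice_n$ inside $\alice'_1$ would perturb the entanglement pattern and the reduction would break. Because the reduction is straight-line, black-box, and preserves the security gap exactly, the implication holds unconditionally in both the information-theoretic and computational regimes.
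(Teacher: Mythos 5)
Your proof is correct, but it takes a different construction from the paper's. The paper dispenses with the claim in one line by \emph{creating a dummy share}: the $n$-party $\share$ simply runs the $(n-1)$-party $\share'$ on $m$ itself and hands the $n$-th party a share that is independent of the secret (and ignored by $\reconstruct$); the security reduction is then immediate, since the extra party's input can be generated by the reduction itself, and the challenge messages are forwarded unchanged. You instead rebalance the secret with a classical one-time pad, sharing $m\oplus r$ among the first $n-1$ parties and giving $r$ as the $n$-th share, and your reduction must accordingly sample $r$, forward the shifted pair $(m_0\oplus r, m_1\oplus r)$, and absorb the circuit of $\alice_n$ into $\alice'_1$ -- an absorption step that is equally needed (and equally easy) in the dummy-share approach, and which is legal here precisely because the claim concerns plain USS where the $(n-1)$-party adversary may hold arbitrary entangled advice; as you note, for the graph-constrained $\USS_d$ variants one would have to check that repacking $\aux_n$ does not connect components. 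What your extra machinery buys is that the resulting scheme remains a genuine $n$-out-of-$n$ scheme in the informal privacy sense (any strict subset of shares reveals nothing about $m$), which the dummy-share construction violates since the first $n-1$ shares already reconstruct $m$; under the paper's formal definition, however, only correctness and unclonability are required, so the simpler dummy share suffices, and it has the mild advantage of not touching the challenge messages in the reduction. Both reductions are straight-line, advantage-preserving, and work identically in the information-theoretic and computational regimes.
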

\noindent This is straightforward to see, by creating a dummy share. 



\subsection{Entanglement Graph}


We will focus on the setting when there are multiple quantum adversaries with shared entanglement modeled as a graph, that we refer to as an {\em entanglement graph}. We formally define entanglement graphs below. 

\newcommand{\regx}{\mathcal{X}}
\begin{definition}[Entanglement Graph]
Let $\rho$ be a $n$-partite quantum state over the registers $\mathbf{X}_1, \cdots, \mathbf{X}_n$. Let $\rho[i]$ be the mixed state over register $\mathbf{X}_i$ $($i.e., $\rho[i]= \Tr_{\mathbf{X}_i}(\rho))$ and $\rho[i, j]$ be the mixed state over the registers $\mathbf{X}_i, \mathbf{X}_j$ $($i.e.,  $\rho[i,j]= \Tr_{\mathbf{X}_i,\mathbf{X}_j}(\rho))$. An entanglement graph $G = (V, E)$ associated with $(\rho,\mathbf{X}_1,\ldots,\mathbf{X}_n)$ is defined as follows:
\begin{itemize}
    \item $G$ is an undirected graph;
    \item $V = \{1, 2, \cdots, n\}$;
    \item $E$ contains an edge $(u, v)$ if and only if $\mathbf{X}_u$ and $\mathbf{X}_v$ are entangled; or in other words, there does not exist $\sigma_u, \sigma_v$ such that $\rho[u, v] = \sigma_u \otimes \sigma_v$.
\end{itemize}
\end{definition}


\noindent Performing non-local operations on a state $\rho$, over the registers $\mathbf{X}_1,\ldots,\mathbf{X}_n$, could change the entanglement graph. For instance, performing arbitrary channels on some $\mathbf{X}_i$, could remove some edges associated with the node $i$; for example, a resetting channel that maps every state to $\ket 0 \bra 0$. However, on the other hand, performing only unitary operations on each of $\mathbf{X}_1,\ldots,\mathbf{X}_n$ is not going to change the entanglement graph. 
\par Unless otherwise specified, we assume that the amount of entanglement shared between the different parties is either unbounded for information-theoretic protocols, or arbitrarily polynomial for computational protocols. 

\begin{definition}
Let ${\cal P}=({\cal P}_1,\ldots, {\cal P}_n)$ be the set of parties with $\rho$ being the state received by all the parties. That is, $\rho$ is an $n$-partite quantum state over the registers $\regX_1,\ldots,\regX_n$ such that the party ${\cal P}_i$ gets the register $\regX_i$. We say that $G$ is the entanglement graph associated with ${\cal P}$ if $G$ is the entanglement graph associated with $(\rho,\regX_1,\ldots,\regX_n)$.
\end{definition}

\begin{definition}[$\USS_d$]
\label{def:uss_n}
    We say an information-theoretic/computational unclonable secret sharing scheme is a secure $\USS_d$ scheme, if it has indistinguishability-based security against all unbounded/efficient adversaries with pre-shared entanglement, whose entanglement graph has \emph{at least} $d$ connect components.
\end{definition}
\noindent It is not hard to see that, $\USS_1$ is a USS satisfying the regular indistinguishability security.

\section{Adversaries with Disconnected Entanglement Graphs}

In this section, we give a construction of unclonable secret sharing with security against quantum adversaries with disconnected entanglement graphs.  
\subsection{\texorpdfstring{$\uss_{\omega(\log \secparam)}$: an Information-Theoretic Approach}{omega(log lambda) Connected Components: an Information-Theoretic Approach}}
\label{sec:IT-scheme-many-components}

\noindent \pnote{sketching the construction and proof..}

\noindent We present an information-theoretic protocol in the setting when there are $\omega(\log \lambda)$ connected components. For simplicity, we consider the case when there are $(n+1)$ parties and the entanglement graph does not have any edges. We demonstrate a construction of USS in this setting, where the security scales with $n$. \pnote{I'm considering sharing of bits below.. I believe we can generalize the argument to the setting when we have long messages?} \qipeng{seems not straightforward.} \pnote{agreed}

\begin{enumerate}
    \item $\share(1^{\secparam},1^{(n+1)},m \in \{0,1\})$: 
    \begin{enumerate}
        \item Sample uniformly random $r_1,\ldots,r_n \leftarrow \{0,1\}$ conditioned on $\oplus_{i} r_i = m$. 
        \item Sample $\theta_1,\ldots,\theta_n \leftarrow \{0,1\}$. 
        \item For each $i \in [n]$: let the $i^{th}$ share be $\rho_i = H^{\theta_i} \ketbra{r_i}{r_i} H^{\theta_i}$. Let the $(n+1)^{th}$ share be $\rho_{n+1}=(\theta_1,\ldots,\theta_n)$. 
        \item Output $\left(\rho_1,\ldots,\rho_{n+1} \right)$.
    \end{enumerate}
    \item $\reconstruct(\rho_1,\ldots,\rho_{n+1})$:
    \begin{enumerate}
        \item Measure $\rho_{n+1}$ in the computational basis to get $(\theta_1,\ldots,\theta_n)$.
        \item For every $i \in [n]$, apply $H^{\theta_i}$ to $\rho_i$. Measure the resulting state in the computational basis to get $r_i$.
        \item Output $\oplus_{i} r_i = m$. 
    \end{enumerate}
\end{enumerate}

\paragraph{Security.} Consider the adversary to be $\adversary=\left( \{\alice_i\},\bob,\charlie,\xi\right)$, where $\xi$ is a product state. Henceforth, we omit mentioning $\xi=\xi_1 \otimes \cdots \otimes \xi_{n+1}$, where $\alice_i$ receives $\xi_i$, since we can think of $\xi_i$ to be part of the description of $\alice_i$. 
\par For $b \in \{0,1\}$, let $\left(\rho_1^{r_1},\ldots,\rho_n^{r_n},\rho_{n+1} \right) \leftarrow \share(1^{\secparam},1^{(n+1)}, b)$, where $\oplus_i r_i = b$ and $\rho_i = H^{\theta_i} \ketbra{r_i}{r_i} H^{\theta_i}$ and $\rho_{n+1} = \ketbra{\theta_1 \cdots \theta_n}{\theta_1 \cdots \theta_n}$. Suppose upon receiving $\rho_i^{r_i}$, $\alice_i$ sends registers $\{{\bf X}_i^{r_i}\}$ and $\{{\bf Y}_i^{r_i}\}$ respectively to $\bob$ and $\charlie$. We denote the reduced density matrix on ${\bf X}_i^{r_i}$ to be $\sigma_i^{r_i}$ and on ${\bf Y}_i^{r_i}$ to be $\zeta_i^{r_i}$. We assume without loss of generality that $\rho_{n+1}$ is given to both $\bob$ and $\charlie$ since it is a computational basis state. 
\par Define ${\calS}_{\bob}$ and ${\calS}_{\charlie}$ as follows: 
$${\calS}_{\bob} =\left\{ i \in [n]\ : \td\left(\sigma_i^0,\sigma_i^1 \right) \leq 0.86 \right\} $$
$$ {\calS}_{\charlie} = \left\{ i \in [n]\ : \td\left(\zeta_i^0,\zeta_i^1 \right) \leq 0.86 \right\} $$
\par We prove the following claims. 

\newcommand{\ue}{\mathsf{ue}}
\begin{claim} 
\label{clm:bigboborbigcharlie}
Either $|\calS_{\bob}| \geq \lceil \frac{n}{2} \rceil$ or $|\calS_{\charlie}| \geq \lceil \frac{n}{2} \rceil$.
\end{claim}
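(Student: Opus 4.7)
The plan is to prove the contrapositive via a reduction to the single-qubit BB84 cloning game of \Cref{lem:single_qubit_bb84}: if both $|\calS_\bob|<\lceil n/2\rceil$ and $|\calS_\charlie|<\lceil n/2\rceil$, I would construct a triple $(\alice',\bob',\charlie')$ that wins the single-qubit BB84 cloning game with joint probability strictly greater than $0.855$, contradicting the lemma.

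First I would locate a single ``doubly bad'' index by a pigeonhole step. Since each of $\calS_\bob,\calS_\charlie\subseteq [n]$ has size strictly less than $\lceil n/2\rceil$, their complements each have size at least $\lfloor n/2\rfloor+1$, so $|\overline{\calS_\bob}\cap\overline{\calS_\charlie}|\ge 2(\lfloor n/2\rfloor+1)-n\ge 1$. I would pick any $i^\star$ in this intersection; it then satisfies simultaneously $\td(\sigma_{i^\star}^0,\sigma_{i^\star}^1)>0.86$ and $\td(\zeta_{i^\star}^0,\zeta_{i^\star}^1)>0.86$.

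Next I would build the BB84 attacker: on input the challenge $H^\theta\ketbra{x}{x}H^\theta$, the splitter $\alice'$ runs $\alice_{i^\star}$ on the challenge together with its private auxiliary state $\xi_{i^\star}$ (which is well-defined because $\xi$ is a product state in this regime), and routes the produced $\mathbf{X}_{i^\star}$ and $\mathbf{Y}_{i^\star}$ registers to $\bob'$ and $\charlie'$ respectively. Since $(\theta,x)$ is marginally distributed exactly as $(\theta_{i^\star},r_{i^\star})$ and $\alice_{i^\star}$ sees nothing of the other shares, the reduced state on $\bob'$'s register is precisely $\sigma_{i^\star}^{x}$ and on $\charlie'$'s precisely $\zeta_{i^\star}^{x}$. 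Both recoverers discard $\theta$ and apply the Helstrom measurement for $(\sigma_{i^\star}^0,\sigma_{i^\star}^1)$ and $(\zeta_{i^\star}^0,\zeta_{i^\star}^1)$ respectively; by Helstrom's theorem each guesses $x$ with probability $>(1+0.86)/2=0.93$. Because their measurements act on disjoint subsystems, their outputs are conditionally independent given $x$, and the elementary Fréchet-type bound $\prob[A\cap B\mid x]\ge \prob[A\mid x]+\prob[B\mid x]-1$ (applied to $x=0$ and $x=1$ and averaged) gives $\prob[\text{both correct}]\ge p_{\bob'}+p_{\charlie'}-1>0.86>0.855$, contradicting \Cref{lem:single_qubit_bb84}.

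The only point that requires some care — and what I view as the ``main obstacle'' of this otherwise clean reduction — is the basis-averaging issue: the trace distances in the definition of $\calS_\bob,\calS_\charlie$ are taken on states with $\theta_{i^\star}$ traced out, while in the BB84 cloning game $\bob',\charlie'$ are actually allowed to learn $\theta$. This is harmless since ignoring $\theta$ only yields a lower bound on the achievable success, which is all that the contradiction requires; still, it should be spelled out explicitly. A secondary sanity check is that the constant $0.86$ in the definition of the ``bad'' sets is precisely slack enough so that $p_{\bob'}+p_{\charlie'}-1$ exceeds the single-qubit monogamy bound $\tfrac{1}{2}+\tfrac{1}{2\sqrt{2}}\approx 0.8535$.
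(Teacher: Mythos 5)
Your proof is correct and takes essentially the same route as the paper: a contradiction/pigeonhole step extracting a doubly-bad index $i^\star$, a reduction that runs $\alice_{i^\star}$ on the single-qubit BB84 challenge, Helstrom measurements giving each of $\bob',\charlie'$ success probability $>0.93$, and a union bound yielding joint success $>0.86>0.855$, contradicting \Cref{lem:single_qubit_bb84}. (One small remark: your claim that the two outcomes are conditionally independent given $x$ is in general false for a correlated bipartite state, but it is also unnecessary, since the Fr\'echet/union bound you actually invoke holds regardless.)
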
 
\begin{proof}
We prove by contradiction; suppose it is not the case. Then there exists an index $i \in [n]$ such that $i \notin \calS_{\bob}$ and $i \notin \calS_{\charlie}$. That is, $\td\left(\sigma_i^0,\sigma_i^1 \right) > 0.86$ and $\td\left(\zeta_i^0,\zeta_i^1 \right) > 0.86$, meaning the optimal state distinguishing circuit can distinguish $\sigma_i^0,\sigma_i^1$ with probability at least $0.93 = (1 + 0.86) / 2$. Similarly, the optimal distinguishing probability for states $\zeta_i^0,\zeta_i^1$  is at least $0.93$.

Using this, we design an adversary that violates the unclonable security of single-qubit BB84 states~\cite[Corollary 2]{BL20}.
Let us first recall the security game for the unclonability of single-qubit BB84 states: 
\begin{enumerate}
    \item $\alice$ receives $H^\theta \ketbra{x}{x} H^\theta$ for uniformly random $x, \theta \in \{0,1\}$, it applies a channel and produces $\sigma_{\mathbf{B} \mathbf{C}}$. Bob and Charlie receive their register accordingly. 
    \item Bob $\bob$ and Charlie $\charlie$ apply their POVMs and try to recover $x$; they win if and only if both guess $x$ correctly. 
\end{enumerate}
\begin{lemma}[Corollary 2 when $\lambda = 1$, \cite{BL20}]
    No (unbounded) quantum $(\alice, \bob, \charlie)$ wins the game with probability more than 0.855.
\end{lemma}
\noindent We design an adversary $(\alice, \bob, \charlie)$ as follows, with winning probability $0.86 > 0.855$, a contradiction.
\begin{itemize}
    \item $\alice$ receives as input an unknown BB84 state. It runs $\alice_i$ on the state to obtain a bipartite state, which it shares with $\bob$ and $\charlie$. 
    \item $\bob$ and $\charlie$ in the security game of BB84 state will receive $\theta_i$ from the challenger. 
    \item $\bob$ runs the optimal distinguisher distinguishing $\sigma_i^0$ and $\sigma_i^1$. Based on the output of the distinguisher, it outputs its best guess of the challenge bit. Similarly, Charlie runs the optimal distinguisher distinguishing $\zeta_i^0$ and $\zeta_i^1$. It outputs its best guess of the challenge bit. 
\end{itemize}
\noindent By a union bound, the probability that one of $\bob$ or $\charlie$ fails is at most $0.14 = 0.07 \times 2$. Thus, they simultaneously succeed with probability at least $0.86$, a contradiction.
\end{proof}

\begin{claim}
\label{clm:smalltd}
The following holds: 
\begin{enumerate}
\item 
$$\td\left(\sum_{\substack{r_1,\ldots,r_n:\\ \oplus_i r_i = 0}} \frac{1}{2^{n-1}} \left( \bigotimes_i \sigma_{i}^{r_i} \right),\ \sum_{\substack{r_1,\ldots,r_n:\\ \oplus_i r_i = 1}} \frac{1}{2^{n-1}} \left( \bigotimes_i \sigma_{i}^{r_i} \right)  \right) \leq  0.86^{|\calS_{\bob}|}$$
\item 
$$\td\left(\sum_{\substack{r_1,\ldots,r_n:\\ \oplus_i r_i = 0}} \frac{1}{2^{n-1}} \left( \bigotimes_i \zeta_{i}^{r_i} \right),\ \sum_{\substack{r_1,\ldots,r_n:\\ \oplus_i r_i = 1}} \frac{1}{2^{n-1}} \left( \bigotimes_i \zeta_{i}^{r_i} \right)  \right) \leq 0.86^{|\calS_{\charlie}|}$$
\end{enumerate}
\end{claim}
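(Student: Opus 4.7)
The plan is to establish the stated trace-distance bound by first proving an \emph{exact} multiplicativity identity and then inserting the defining bounds of $\calS_{\bob}$ and $\calS_{\charlie}$. By symmetry, I will focus on item (1); item (2) is identical with $\sigma_i^{r_i}$ replaced by $\zeta_i^{r_i}$.

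The main step is to show that for any two collections of density matrices $\{\sigma_i^0,\sigma_i^1\}_{i\in[n]}$ one has
\begin{align*}
\td\!\left(\tfrac{1}{2^{n-1}}\!\!\sum_{\oplus_i r_i = 0}\bigotimes_i \sigma_i^{r_i},\ \tfrac{1}{2^{n-1}}\!\!\sum_{\oplus_i r_i = 1}\bigotimes_i \sigma_i^{r_i}\right) \;=\; \prod_{i\in[n]} \td(\sigma_i^0,\sigma_i^1).
\end{align*}
The key observation is that $(-1)^{\oplus_i r_i}=\prod_i (-1)^{r_i}$, so the signed difference of the two mixtures factorizes as a tensor product:
\begin{align*}
\Delta \;:=\; \tfrac{1}{2^{n-1}}\!\!\sum_{r\in\{0,1\}^n} (-1)^{\oplus_i r_i}\bigotimes_i \sigma_i^{r_i} \;=\; \tfrac{1}{2^{n-1}}\bigotimes_{i\in[n]}(\sigma_i^0 - \sigma_i^1).
\end{align*}
Since the trace norm is multiplicative on tensor products, $\|\Delta\|_1 = \tfrac{1}{2^{n-1}}\prod_i\|\sigma_i^0-\sigma_i^1\|_1$, and using $\td(\cdot,\cdot)=\tfrac{1}{2}\|\cdot-\cdot\|_1$ one gets a cancellation of the $2^{n-1}$ factor against the product of $\tfrac{1}{2}$'s, yielding exactly $\prod_i \td(\sigma_i^0,\sigma_i^1)$.

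To finish, I would bound each factor individually: for every $i\in \calS_{\bob}$ the definition of $\calS_{\bob}$ gives $\td(\sigma_i^0,\sigma_i^1)\le 0.86$, and for each remaining index the trace distance is at most $1$. Multiplying these bounds produces $0.86^{|\calS_{\bob}|}$, establishing item (1). Item (2) follows by the same argument applied to $\{\zeta_i^0,\zeta_i^1\}$ and $\calS_{\charlie}$.

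I do not anticipate a real obstacle; the only subtlety is justifying the tensor factorization of $\Delta$ cleanly (which requires nothing beyond bilinearity of the tensor product and $(-1)^{\oplus r_i}=\prod(-1)^{r_i}$), and invoking multiplicativity of the trace norm $\|A\otimes B\|_1=\|A\|_1\|B\|_1$, a standard fact valid for arbitrary (not necessarily positive) Hermitian operators. These together give the identity as a genuine equality, which is notably stronger than what is needed for the claim.
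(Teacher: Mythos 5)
Your proposal is correct and follows essentially the same route as the paper: factor the signed difference into $\bigotimes_i (\sigma_i^0-\sigma_i^1)/2$, use multiplicativity of the trace norm on tensor products, and then bound the factors in $\calS_{\bob}$ by $0.86$ and the remaining ones by $1$. The paper states the same exact product identity $\td = \prod_i \td(\sigma_i^0,\sigma_i^1)$ before inserting these bounds, so there is nothing to add.
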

\begin{proof}
We prove bullet 1 since bullet 2 follows symmetrically. 
\begin{eqnarray*}
& & \td\left(\sum_{\substack{r_1,\ldots,r_n:\\ \oplus_i r_i = 0}} \frac{1}{2^{n-1}} \left( \bigotimes_i \sigma_{i}^{r_i} \right),\ \sum_{\substack{r_1,\ldots,r_n:\\ \oplus_i r_i = 1}} \frac{1}{2^{n-1}} \left( \bigotimes_i \sigma_{i}^{r_i} \right)  \right) \\
& = & \frac{1}{2} \left\| \sum_{\substack{r_1,\ldots,r_n:\\ \oplus_i r_i = 0}} \frac{1}{2^{n-1}} \left( \bigotimes_i \sigma_{i}^{r_i} \right) - \sum_{\substack{r_1,\ldots,r_n:\\ \oplus_i r_i = 1}} \frac{1}{2^{n-1}} \left( \bigotimes_i \sigma_{i}^{r_i} \right)  \right\|_1 \\
& = & \left\| \bigotimes_{i} \frac{\left( \sigma_i^0 - \sigma_i^1 \right)}{2}  \right\|_1 \\ 
& = & \prod_i \left\| \frac{\left( \sigma_i^0 - \sigma_i^1 \right)}{2} \right\|_1 \\
& \leq & \prod_{i \in \calS_{\bob}} \td\left( \sigma_i^0,\sigma_i^1 \right) \\
& \leq &  0.86^{|\calS_{\bob}|}
\end{eqnarray*} 
Here $\| \cdot \|_1$ denotes the trace norm. In the above proof, we use the fact that $\|\bigotimes_i \tau_i \|_1 = \prod_i \|\tau_i\|_1$. 
\end{proof}

\begin{lemma}
The above USS scheme satisfies indistinguishability security against any adversaries with no shared  entanglement; i.e., it is a secure $\uss_n$ scheme (see \Cref{def:uss_n}) with $n = \omega(\log \lambda)$. 
\end{lemma}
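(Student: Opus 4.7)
The plan is to reduce the joint distinguishing advantage of $(\bob, \charlie)$ in $\expt_{(\{\alice_i\},\bob,\charlie,\xi)}$ to the single-side distinguishing advantages of $\bob$ and $\charlie$ separately, and then invoke \Cref{clm:bigboborbigcharlie} and \Cref{clm:smalltd} to bound at least one of those by $0.86^{n/2}$, which is $\negl(\secparam)$ whenever $n = \omega(\log \secparam)$.

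First, I would unpack what Bob holds. Because the advice $\xi$ is a product state, the registers produced by the $\alice_i$'s are in tensor form across $i$. Hence the reduced state on $\bob$'s side, conditioned on the message being $b \in \{0,1\}$ and averaged over the sharing randomness (both the $r_i$'s and the $\theta_i$'s, with $\theta$ absorbed into each $\sigma_i^{r_i}$ by viewing it as a classical-quantum state $\frac{1}{2}\sum_{\theta_i} \ketbra{\theta_i}{\theta_i} \otimes \sigma_i^{r_i,\theta_i}$), is exactly
\[
\Sigma_\bob^{(b)} \;=\; \sum_{\substack{r_1,\ldots,r_n:\\ \oplus_i r_i = b}} \frac{1}{2^{n-1}} \bigotimes_i \sigma_i^{r_i}.
\]
Symmetrically for $\charlie$, with $\zeta$ in place of $\sigma$, we define $\Sigma_\charlie^{(b)}$.

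Second, I would argue the reduction. By Helstrom's bound, the probability that $\bob$'s output $b_\bob$ equals $b$ is at most $\tfrac{1}{2} + \tfrac{1}{2}\td(\Sigma_\bob^{(0)}, \Sigma_\bob^{(1)})$, and similarly for $\charlie$. Since the joint winning event $\{b_\bob = b \wedge b_\charlie = b\}$ is contained in each of $\{b_\bob = b\}$ and $\{b_\charlie = b\}$,
\[
\Pr[1 \leftarrow \expt] \;\leq\; \min\!\left\{ \tfrac{1}{2} + \tfrac{1}{2}\td(\Sigma_\bob^{(0)}, \Sigma_\bob^{(1)}),\ \tfrac{1}{2} + \tfrac{1}{2}\td(\Sigma_\charlie^{(0)}, \Sigma_\charlie^{(1)}) \right\}.
\]

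Finally, I would combine the two claims. By \Cref{clm:bigboborbigcharlie}, at least one of $|\calS_\bob|, |\calS_\charlie|$ is at least $\lceil n/2 \rceil$; assume without loss of generality it is $\calS_\bob$. Then \Cref{clm:smalltd} gives $\td(\Sigma_\bob^{(0)}, \Sigma_\bob^{(1)}) \leq 0.86^{\lceil n/2 \rceil}$, whence
\[
\Pr[1 \leftarrow \expt] \;\leq\; \tfrac{1}{2} + \tfrac{1}{2}\cdot 0.86^{\lceil n/2 \rceil} \;=\; \tfrac{1}{2} + \negl(\secparam),
\]
since $n = \omega(\log \secparam)$ makes $0.86^{n/2}$ negligible. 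The only subtle point — and the one I would take the most care with — is the bookkeeping that ensures Bob's global state truly tensorizes as $\bigotimes_i \sigma_i^{r_i}$ (so that \Cref{clm:smalltd} applies). This rests on (i) $\xi$ being a product state across the $\alice_i$'s, (ii) the $\alice_i$'s acting locally on their own register, and (iii) modelling the public basis register $\rho_{n+1}$ as a classical-quantum part of each $\sigma_i^{r_i}$ (duplicated to both $\bob$ and $\charlie$), rather than as a separate entangling resource. With this product structure in hand, the rest is a direct chaining of the two prior claims.
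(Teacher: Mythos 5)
Your proposal is correct and follows essentially the same route as the paper: invoke \Cref{clm:bigboborbigcharlie} to get that one of $\calS_\bob, \calS_\charlie$ has size at least $\lceil n/2 \rceil$, bound that party's distinguishing advantage via the trace-distance bound of \Cref{clm:smalltd} (the Helstrom step the paper leaves implicit), and observe that the joint winning event is contained in that party's individual success event, giving $\tfrac{1}{2} + 0.86^{\lceil n/2\rceil}$ which is negligible for $n = \omega(\log\lambda)$. Your extra bookkeeping about the product structure (product advice $\xi$, local $\alice_i$'s, and the classical basis register duplicated to both receivers) is exactly the setup the paper establishes in the paragraph preceding the two claims.
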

\begin{proof}
From~\Cref{clm:bigboborbigcharlie}, either $|\calS_{\bob}| \geq \lceil \frac{n}{2} \rceil$ or $|\calS_{\charlie}| \geq \lceil \frac{n}{2} \rceil$. We will assume without loss of generality that $|\calS_{\bob}| \geq \lceil \frac{n}{2} \rceil$. From bullet 1 of~\Cref{clm:smalltd}, it holds that $\bob$ can successfully distinguish whether it is in the experiment when the challenge bit 0 was used or when the challenge bit 1 was used, with probability at most $\frac{1+\nu(n)}{2}$, for some exponentially small function $\nu$ in $n$. Thus, both $\bob$ and $\charlie$ can only simultaneously distinguish with probability at most $\frac{1+\nu(n)}{2}$. This completes the proof. 
\end{proof}

\subsection{\texorpdfstring{$\uss_d$, for $d \geq 2$: from Unclonable Encryption}{ussd for d > 2: from Unclonable Encryption}}
\label{sec:UE_implies_USS2}


We present a construction of USS with security against quantum adversaries associated with \emph{any} disconnected entanglement graph. In the construction, we use an information-theoretically secure unclonable encryption scheme, $\UE = (\UE.\keygen, \UE.\enc, \UE.\dec)$. The resulting USS scheme is consequently information-theoretically secure. 

\begin{enumerate}
    \item $\share(1^\lambda, 1^n, m):$

    \begin{enumerate}
    
    \item Sample $r_1, \cdots,r_{n} \gets \{0,1\}^{|m|}$. 

\item  For each $i \in [n]$, let $y_i = r_i$; let $y_n = m \oplus \sum_{i=1}^{n} r_i$.
    \item For each $i \in [n]$:
    \begin{enumerate}
        \item Compute $\sk_i \gets \UE.\keygen(1^\lambda)$. We denote the length of $\sk_i$ to be $\ell=\ell(\secparam)$. 

        \item Compute $\ket{\ct_i} \gets \UE.\enc(\sk_i, y_i)$
    \end{enumerate}

    \item For each $i \in [n]$: let each share $\rho_i = (\sk_{i-1}, \ket{\ct_{i}})$; here we define $\sk_{0} = \sk_n$. 
    
    \item Output $(\rho_1, \cdots, \rho_n)$
    \end{enumerate}

    \item $\reconstruct(\rho_1,\cdots, \rho_n)$:
   
\begin{enumerate}
\item For each $i \in [n]$, 
    \begin{enumerate}
        \item Parse $\rho_i$ as $(\sk_{i-1}, \ket{\ct_{i}})$. We define $\sk_n = \sk_0$. 

        \item Compute $y_{i} \gets \UE.\dec(\sk_{i}, \ket{\ct_{i}})$
        
    \end{enumerate}
    \item Output $m = \sum_{i=1}^{n} y_i$.
    \end{enumerate}

\end{enumerate}

\begin{theorem}\label{thm:UE_implies_USS2}
The above scheme satisfies indistinguishability-based security against adversaries with any disconnected entanglement graph. More precisely, it is a secure $\USS_2$ scheme (see \Cref{def:uss_n}). 
\end{theorem}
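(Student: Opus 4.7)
The plan is to reduce security of the construction to the unclonable IND-CPA security of $\UE$. Suppose there is an adversary $\bigl(\{\alice_j\}_{j \in [n]},\bob,\charlie,\xi\bigr)$ whose entanglement graph has at least two connected components, and fix a bipartition of $[n]$ into nonempty $V_L, V_R$ with no edge crossing the cut; this forces $\xi = \xi_L \otimes \xi_R$. Because the shares $\rho_j = (\sk_{j-1}, \ket{\ct_j})$ are arranged cyclically on $[n]$, walking once around the cycle must produce at least one transition in each direction, so there exists an index $i^\star$ with $i^\star \in V_L$ and $i^\star+1 \in V_R$. I will embed the UE challenge at position $i^\star$: the UE ciphertext will play the role of $\ket{\ct_{i^\star}}$, and the UE key will play the role of $\sk_{i^\star}$. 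The key structural observation is that $\sk_{i^\star}$ appears only inside $\rho_{i^\star+1}$, which sits in $V_R$; no share handed to $V_L$ ever depends on $\sk_{i^\star}$.

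The UE adversary $(\As,\Bs,\Cs)$ is built as follows. Internally $\As$ runs the USS adversary to obtain $(m_0,m_1)$, samples uniform $y_j \in \{0,1\}^{|m|}$ for every $j \neq i^\star$, and submits $\bigl(M_0, M_1\bigr) = \bigl(m_0 \oplus \bigoplus_{j \neq i^\star} y_j,\ m_1 \oplus \bigoplus_{j \neq i^\star} y_j\bigr)$ as its UE challenge. Upon receiving $\ket{\ct^\star}$, $\As$ samples the remaining keys $\sk_j$, computes $\ket{\ct_j} = \UE.\enc(\sk_j, y_j)$ for each $j \neq i^\star$, plants $\ket{\ct^\star}$ as $\ket{\ct_{i^\star}}$, and runs every $\alice_j$ for $j \in V_L$ on its share together with its portion of $\xi_L$, producing $\sigma^L_{\mathbf{X}^L \mathbf{Y}^L}$. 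In parallel, $\As$ prepares a lookup table indexed by candidate keys: for every $\sk \in \{0,1\}^\ell$ it samples a fresh independent copy of $\xi_R$ and of all $V_R$-side randomness and runs the full $V_R$ protocol with $\sk_{i^\star}$ set to $\sk$, yielding a bipartite state $\sigma^R(\sk)$. The register $\mathbf{X}^L$ together with every $\mathbf{X}^R$-half of the $2^\ell$ table entries is routed to $\Bs$, and $\mathbf{Y}^L$ together with the matching $\mathbf{Y}^R$-halves to $\Cs$, as $\As$'s bipartite output. When the UE challenger reveals $\sk^\star$, each of $\Bs, \Cs$ extracts the $\sk^\star$-indexed table entry, assembles its full USS input, and runs $\bob$ or $\charlie$ respectively, outputting the resulting guess.

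The joint state consumed by $(\bob,\charlie)$ inside the UE simulation is distributionally identical to the real USS experiment with challenge bit $b$: the $y_j$'s are uniform subject to $\bigoplus_j y_j = m_b$, every $\sk_j$ (including $\sk_{i^\star} = \sk^\star$) is uniform and independent, and because $V_L$ and $V_R$ share no entanglement the state fed to $(\bob,\charlie)$ is exactly $\sigma^L \otimes \sigma^R(\sk^\star)$. Hence $\Pr[b_\Bs = b_\Cs = b]$ in the UE game equals the USS winning probability, and the $1/2 + \negl(\secparam)$ bound from $\UE$ unclonability transfers. The main obstacle I anticipate is that $\sk^\star$ is revealed only \emph{after} $\As$ has split its output, so $\Bs$ and $\Cs$ cannot coordinate at that moment to jointly prepare $\sigma^R(\sk^\star)$; the pre-shipped lookup table above is my workaround. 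It costs $2^\ell$ pre-shared copies, which is acceptable in the information-theoretic regime; a more economical alternative would be to apply instantaneous non-local computation only to the single party $\alice_{i^\star+1}$, whose operation is the sole $\sk_{i^\star}$-dependent piece of $V_R$, but the lookup version is already sufficient for the IT reduction.
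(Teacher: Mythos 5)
Your proposal is correct and follows essentially the same route as the paper's proof: embed the UE challenge at a cut edge $(i^\star, i^\star+1)$ of the disconnected entanglement graph (using challenge plaintexts whose difference is $m_0 \oplus m_1$ masked by the other $y_j$'s), simulate the key-dependent side by enumerating all $2^\ell$ candidate keys with fresh copies of that side's auxiliary state, and let $\Bs$ and $\Cs$ select the entry indexed by the revealed key before running $\bob$ and $\charlie$. Your indexing (challenge ciphertext as $\ct_{i^\star}$, unknown key $\sk_{i^\star}$ sitting in $\rho_{i^\star+1}$) is in fact the consistent version of what the paper writes, so no further changes are needed.
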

\newcommand{\sh}{\mathsf{sh}}
\newcommand{\eps}{\varepsilon}
\begin{proof}
The correctness of the scheme follows from the correctness of \UE{} decryption. 

We now prove the security of the above scheme. Suppose we have an $\USS$ adversary $(\adv = (\adv_1, \cdots, \adv_n), \bob, \charlie, \xi)$ who succeeds with probability $\frac{1}{2} + \eps$ in~\Cref{def:uss_n}, we construct an $\UE$ adversary $(\adv',\bob',\charlie')$ who succeeds with probability $\frac{1}{2} + \eps$ in~\Cref{def:ue_ind}. 

Let $\adv$ receive as input an $n$-partite state $\xi$ over the registers $\aux_1,\ldots,\aux_n$ such that $\adv_i$ receives as input the register $\aux_i$. Additionally, without loss of generality, we can assume that $\adv$ also receives as input the challenge messages $(m_0,m_1)$, where $|m_0|=|m_1|$. Let $G=(V,E)$ be the entanglement graph associated with $(\xi,\aux_1,\ldots,\aux_n)$, where, $V=\{1,\ldots,n\}$. Since $G$ is disconnected, there exists $i^* \in [n]$ such that $(i^*,i^* + 1) \notin E$. Let $G_1=(V_1,E_1)$ and $G_2=(V_2,E_2)$ be two subgraphs of $G$ such that $V_1 \cup V_2 = V$, $V_1 \cap V_2 = \emptyset$, $i^* \in V_1$, $i^* +1 \in V_2$. Moreover, $G_1$ and $G_2$ are disconnected with each other. This further means that $\xi$ can be written as $\xi_{G_1} \otimes \xi_{G_2}$, for some states $\xi_{G_1},\xi_{G_2}$, such that $\xi_{G_1}$ is over the registers $\{\aux_i\}_{i \in V_1}$ and $\xi_{G_2}$ is over the registers $\{\aux_i\}_{i \in V_2}$.

We describe $(\adv',\bob',\charlie')$ as follows:

\paragraph{Description of $\adv'$.} Fix $i^*, (m_0, m_1)$ (as defined above).
Upon receiving a quantum state $\ket{\ct^*}$ 
$\adv'$ does the following: 
\begin{itemize}
    \item It prepares quantum states $\xi_{G_1},\left( \xi_{G_2} \right)^{\otimes 2^{\ell}}$. 
    \item It samples $r_i \xleftarrow{\$} \{0,1\}^{|m_0|}$, where $i \in [n]$, subject to the constraint that $\oplus_i r_i = m_0$. 
    \item It submits $(r_{i^*},r_{i^*} \oplus m_0 \oplus m_1)$ to the UE challenger and in return, it receives $\ket{\ct^*}$. It sets $\ket{\ct_{i^*+1}}=\ket{\ct^*}$. 
    \item For every $i \in [n]$, generate $\sk_i \gets \UE.\keygen(1^\lambda)$; let $\sk_{n+1} = \sk_1$. 
    \item For every $i \in [n]$ and $i \ne i^*$, generate $\ket{\ct_{i}} \leftarrow \UE.\enc(\sk_i,\sh_i)$. 
    \item For every $i \in [n]$ and $i \ne i^* + 1$, define $\rho_i = (\sk_{i-1}, \ket{\ct_i})$. 
    \item We need to define $\rho_{i^* + 1} = (\sk_{i^*}, \ket{\ct_{i^*+1}})$. However, as $\sk_{i^*}$ will only be received by $\bob'$ and $\charlie'$ in the UE security game later, we will enumerate all possible values of $\sk_{i^*}$ and the corresponding  computation result in the subgraph $G_2$. 
    \begin{itemize}
        \item For every $x \in \{0,1\}^{\ell}$ (possible value of $\sk_{i^*})$, compute $\{\adv_i\}_{i \in V_2}$ on $\{\rho_{i}\}_{i \in V_2}$, $\xi_{G_2}$ to obtain two sets of registers $\{\regX_i^{(x)}\}_{i \in G_2}$ and $\{\regY_i^{(x)}\}_{i \in G_2}$. 
    \end{itemize}
    
    \item Compute $\{\adv_i\}_{i \in V_1}$ on $\{\rho_i\}_{i \in V_1}$ and  $\xi_{G_1}$ to obtain two sets of registers $\{\regX_i\}_{i \in G_1}$ and $\{\regY_i\}_{i \in G_1}$. 
    
    \item Send the registers $\{\regX_i\}_{i \in G_1}$ and $\{\regX_i^{(x)}\}_{i \in G_2,x \in \{0,1\}^{\secparam}}$ to $\bob'$. Send the registers  $\{\regY_i\}_{i \in G_1}$ and $\{\regY_i^{(x)}\}_{i \in G_2,x \in \{0,1\}^{\secparam}}$ to $\charlie'$.  
\end{itemize}

\paragraph{Description of $\bob'$ and $\charlie'$.}  $\bob'$ upon receiving the secret key $k$ (which is $\sk_{i^*}$), computes $\bob$ on $\{\regX_i\}_{i \in G_1}$ and $\{\regX_i^{(k)}\}_{i \in G_2}$ to obtain a bit $b_{\bob}$. $\charlie'$ is defined similarly. We denote the output of $\charlie'$ to be $b_{\charlie}$. \\

\noindent If the challenger of the UE security chooses the bit $b=0$, then $(\adv,\bob,\charlie)$ in the above reduction are receiving shares of $m_0$; otherwise, they are receiving shares of $m_1$. Thus, the success probability of $(\adv,\bob,\charlie)$ in~\Cref{def:uss_n} is precisely the same as the success probability of $(\adv',\bob',\charlie')$ in~\Cref{def:ue_ind}. 
\end{proof}

\section{Adversaries with Full Entanglement}

\label{sec:construction_qrom}
\begin{theorem}(QROM protocol)\label{thm:uss_qrom}
There exists a $n$-party $\USS$ protocol with indistinguishability-based security  against adversaries sharing an arbitrary amount of entanglement ($\uss_1$, see \Cref{def:uss_n}) in the QROM, for any $n \geq 2$.
\end{theorem}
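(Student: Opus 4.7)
My plan is to construct a 2-party USS scheme in the QROM and then invoke \Cref{claim:n-1_imply_n_USS} to extend it to any number of parties $n \geq 2$. Concretely, let $H$ be the random oracle. Define $\share(m)$ to sample $r, \theta \gets \{0,1\}^\secparam$ uniformly and output $\rho_1 = H^\theta \ketbra{r}{r} H^\theta$ and $\rho_2 = (\theta, H(r) \oplus m)$. The reconstruction procedure, given $(\rho_1, \rho_2)$, parses $\rho_2 = (\theta, y)$, applies $H^\theta$ to $\rho_1$, measures in the computational basis to obtain $r'$, and outputs $H(r') \oplus y$. Correctness is immediate. The intuition for unclonability is that $\rho_2$ is classical (so $\adv_2$ can just broadcast it to both $\bob$ and $\charlie$), while $\rho_1$ is a BB84 state whose underlying label $r$ is protected by monogamy of entanglement; the random oracle prevents the port-based-teleportation attack of \Cref{sec:general_impossibility} because the adversary cannot, in a bounded number of queries, homomorphically evaluate $H$ on all exponentially many teleportation-measurement outcomes.

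The security strategy is to reduce to the monogamy-of-entanglement (MoE) game for BB84 states of Tomamichel, Fehr, Kaniewski and Wehner~\cite{TFKW13}: an adversary $\alice$ given $H^\theta \ketbra{r}{r} H^\theta$ produces a bipartite state $\sigma_{\mathbf{BC}}$, after which Bob and Charlie each receive $\theta$ and try to simultaneously output $r$; the winning probability is at most $(\cos^2(\pi/8))^\secparam$, even with arbitrary entanglement shared between $\alice$ and the two receivers. Suppose a USS adversary $(\adv_1, \adv_2, \bob, \charlie, \xi)$ making at most $T$ oracle queries wins the indistinguishability game with advantage $\epsilon$. Since $\adv_2$'s share is classical, I would assume without loss of generality that $\adv_2$ simply forwards $\theta$ and the mask $c := H(r) \oplus m_b$ to both $\bob$ and $\charlie$, pushing its half of the entanglement $\xi$ into $\adv_1$.

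The heart of the proof is to show that $\bob$ and $\charlie$ must \emph{both} query the oracle at the point $r$ with inverse-polynomial probability. First, I would use a hybrid argument that replaces $c$ by a uniformly random string $u$ independent of $m_b$; in this hybrid Bob and Charlie cannot do better than $1/2$, so the $\epsilon$ advantage comes from the appearance of $H(r)$ in $c$. By \Cref{thm:bbbv97_oraclechange} applied separately on Bob's and Charlie's sides (whose query registers act on disjoint subsystems), each of them must accumulate total query weight $\Omega(\epsilon^2/T)$ on the input $r$ in the real experiment. Second, I would extract $r$ simultaneously from both sides by measuring a uniformly random query position on each side to obtain a pair $(r_{\bob}, r_{\charlie})$; since the two query registers act on different systems, the joint measurement succeeds with probability at least $\Omega(\epsilon^4/T^2)$. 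The resulting triple $(\adv_1, \bob', \charlie')$ is then an MoE adversary, and by TFKW we must have $\Omega(\epsilon^4/T^2) \leq (\cos^2(\pi/8))^\secparam$, forcing $\epsilon$ to be negligible.

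The principal obstacle is making this simultaneous two-sided extraction rigorous. The classical one-way-to-hiding / BBBV statement only guarantees that \emph{some} party queries $r$, not that \emph{both} do so on the same run; and the interaction with shared entanglement between $\adv_1, \adv_2, \bob, \charlie$ requires care to ensure that the simulated MoE adversary is well defined and that the query-extraction procedures on the two sides do not disturb each other. I expect to need either a purification argument exploiting the disjointness of Bob's and Charlie's working registers together with a joint measurement on two independent query positions, or a compressed-oracle analysis in Zhandry's framework showing that the database contains $r$ with non-negligible weight relative to both parties simultaneously; the latter style has appeared in analyses of unclonable encryption in the QROM~\cite{AKLLZ22, AKL23}, and I expect to adapt those techniques to the USS setting.
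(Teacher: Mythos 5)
Your proposed construction is insecure, so the proof cannot go through: because $\adv_2$ holds the basis $\theta$ \emph{in the clear}, there is a simple teleport-and-measure attack using only polynomially many EPR pairs and a single oracle query by each of $\bob$ and $\charlie$. Concretely, $\adv_1$ teleports its share $H^\theta \ketbra{r}{r} H^\theta$ through pre-shared EPR pairs and sends the Bell-measurement outcomes $(a,b)$ to both $\bob$ and $\charlie$; $\adv_2$'s halves of the EPR pairs now hold $\X^a\Z^b H^\theta\ketbra{r}{r}H^\theta \Z^b\X^a$, and since $\adv_2$ knows $\theta$ it applies $H^{\theta}$ qubit-wise, turning the unknown Pauli pad into a purely bit-flip pad $\X^{a'}$ with $a'_i = a_i$ or $b_i$ depending on $\theta_i$; measuring in the computational basis yields $r \oplus a'$, and $\adv_2$ forwards $(\theta,\, r\oplus a',\, H(r)\oplus m_b)$ to both $\bob$ and $\charlie$. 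Each of them computes $a'$ from $(a,b,\theta)$, recovers $r$, makes one query $H(r)$, and unmasks $m_b$, winning the indistinguishability game with probability $1$. The random oracle does not help here: its role in your scheme would be to block the exponential-cost port-based-teleportation attack on the reconstruction circuit, but this attack never needs to evaluate $H$ inside the splitting phase at all. This is exactly the failure mode the paper flags for the naive ``UE ciphertext to one party, key to the other'' approach under shared entanglement, which is why UE only yields $\uss_2$ and not $\uss_1$; your scheme is that approach with the \cite{AKLLZ22}-style UE candidate inlined. Your subsequent reduction to the TFKW monogamy game (and the two-sided query-extraction step you flag as the main obstacle) is therefore moot, since no such reduction can exist for this scheme.

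The paper's proof takes a structurally different route that specifically avoids giving any single shareholder the decryption key. It starts from an unclonable encryption scheme $\UE$ with (near-)uniform keys, sets $\sk = H(y_1,\ldots,y_n)$ where each $y_i$ is an independent long random string given to party $i$, and gives party $1$ the ciphertext $\UE.\enc(\sk,m)$ together with $y_1$. Since each $\adv_i$ is missing the other parties' uniformly random $y_j$'s, a BBBV argument (\Cref{thm:bbbv97_oraclechange}) shows the total query weight of the splitting adversaries on the point $(y_1,\ldots,y_n)$ is negligible, so the oracle can be reprogrammed there to an independently sampled $\sk$ right before the challenge phase; at that point security reduces in a black-box way to the unclonable IND-CPA game of $\UE$, in which $\bob$ and $\charlie$ legitimately receive $\sk$ only in the second stage. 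In particular, no simultaneous ``both parties query $r$'' extraction is needed anywhere; the monogamy-of-entanglement machinery is encapsulated inside the assumed $\UE$ scheme. If you want to salvage your approach, the key structural fix is to ensure that the basis/key information is not held by any shareholder before the splitting phase, e.g.\ by deriving it from the oracle applied to data spread across all parties, which is precisely the paper's construction.
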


\paragraph{Construction}
Assume we have an underlying unclonable encryption scheme $\UE$ for one-bit messages (see \Cref{def:ue_ind}), consisting of three procedures $\UE.\keygen, \UE.\enc,\UE.\dec$ and a hash function $H:\{0,1\}^{k \cdot n} \to \{0,1\}^{\ell}$ modeled as a random oracle, where $\ell = \ell(\lambda)$ is the length of the $\UE$ secret key. It is easy to generalize our construction for the one-bit message to the $n$-bit message setting with indistinguishability based security in \Cref{def:uss_ind_security}.

We assume, without loss of generality, that the secret key generated from $\UE.\keygen$  is statistically close to uniform distribution\footnote{Given any UE scheme, we can convert it into another one where the setup outputs a random string. The new encryption algorithm will take this random string and runs the old setup to recover the secret key and then runs the old encryption algorithm.}.
We construct a $\USS$ scheme as follows:

\begin{itemize}
    \item $\share (1^\lambda, 1^n, m) \to (\rho_1, \cdots, \rho_n)$: 
    \begin{enumerate}
        \item Sample random $y_1, \cdots y_n \gets \{0,1\}^{k\cdot n}$, where $k = k(\lambda)$. Let $\sk = H(y_1, \cdots , y_n)$.

        \item Compute $\ket{\ct_m}$ = $\UE.\enc(\sk, m \in \{0,1\}$)
        
        \item  
        Let $\rho_1 = (\ket{\ct_m}, y_1)$; $\rho_2 = y_2$; $\rho_3 = y_3$; $\cdots; \rho_n = y_n$. 
    \end{enumerate}

    \item \reconstruct($\rho_1, \cdots, \rho_n$) $\to \hat{m}$: parse $\rho_1 = (\ket{\ct}, y_1)$; for every $i > 1$, measure $\rho_i$ to get $y_i$; compute $\sk = H(y_1, y_2 \cdots, y_n)$; compute $\hat{m} \gets  \UE.\dec(\sk, \ket{\ct})$.

\end{itemize}

\vipul{I think the above constrution should be collusion resistant: even if two disjoint sets of upto $n-1$ parties pool their shares and then clone, security should still hold.}
\jiahui{I agree. It seems equivalent to the two party case where we don't allow communication between them two. }

\paragraph{Correctness} The correctness of the above scheme follows from the correctness of the unclonable encryption scheme and of the evaluation of the random oracle $H$.

\subsection{Security} 
We consider the following two hybrids. We use \underline{underline} to denote the differences between Hybrid 0 and Hybrid 1. 

\paragraph{Hybrid 0.} the challenger operates the 1-bit unpredictability experiment according to the original construction above.
Let the adversary be $(\adv,\bob, \charlie)$ where $\adv = (\adv_1,\cdots, \adv_n)$.
\begin{enumerate}
     \item The challenger samples uniform random $y_1,\cdots,y_n \gets \{0,1\}^{n \cdot k}$  and  $ \sk \gets H(y_1,\cdots,y_n)$.

    \item The challenger samples secret $m \gets \{0,1\}$; computes $\ket{\ct_m} \gets  \UE.\enc(\sk,m)$.  Let $\rho_1 = (\ket{\ct_m}, y_1)$; $\rho_2 = y_2$; $\rho_3 = y_3$; $\cdots; \rho_n = y_n$.  

    \item The challenger gives the shares $\rho_1, \cdots, \rho_n$ to $\adv_1,\cdots, \adv_n$. 

    \item In the challenge phase, for every $i \in [n]$, $\alice_i$ computes a bipartite state $\sigma_{\regX_i \regY_i}$. It sends the register ${\regX_i}$ to $\bob$ and $\regY_i$ to $\charlie$. 

    $\bob$ on input the registers $\regX_1,\ldots,\regX_n$, outputs the bit $b_{\bob}$. $\charlie$ on input the registers $\regY_1,\ldots,\regY_n$, outputs the bit $b_{\charlie}$.
    \item The challenger outputs 1 if $b_{\bob}= m$ and $b_{\charlie} = m$. 
\end{enumerate}

\paragraph{Hybrid 1.} the challenger does the following modified version of the 1-bit unpredictability experiment: 
\begin{enumerate}
     \item The challenger samples uniform random $y_1,\cdots,y_n \gets \{0,1\}^{n \cdot k}$  and  \ul{$\sk \gets \{0,1\}^{\ell}$}, where $\ell = \ell(\lambda)$ is the length of the $\UE$ secret key.

    \item The challenger samples secret $m \gets \{0,1\}$; computes $\ket{\ct_m} \gets  \UE.\enc(\sk,m)$.  Let $\rho_1 = (\ket{\ct_m}, y_1)$; $\rho_2 = y_2$; $\rho_3 = y_3$; $\cdots; \rho_n = y_n$.

    \item The challenger gives the shares $\rho_1, \cdots, \rho_n$ to $\adv_1,\cdots, \adv_n$. \ul{It reprograms the random oracle $H$ at the input $(y_1, \cdots, y_n)$ to be $\sk$, right before entering the challenge phase. } In other words, the resulting random oracle $H'$ has the identical behavior as $H$ for every input except on input $(y_1, y_2, \cdots, y_n)$, $H'$ outputs $\sk$. 
     

    \item In the challenge phase, for every $i \in [n]$, $\alice_i$ computes a bipartite state $\sigma_{\regX_i \regY_i}$. It sends the register ${\regX_i}$ to $\bob$ and $\regY_i$ to $\charlie$. 

    $\bob$ on input the registers $\regX_1,\ldots,\regX_n$, outputs the bit $b_{\bob}$. $\charlie$ on input the registers $\regY_1,\ldots,\regY_n$, outputs the bit $b_{\charlie}$.
    \item The challenger outputs 1 if $b_{\bob}= m$ and $b_{\charlie} = m$. 
    
\end{enumerate}

\begin{lemma}
The advantages of adversary $\adv$ in Hybrid 0 and Hybrid 1 are negligibly close. 
\end{lemma}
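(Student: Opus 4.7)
The plan is to reduce indistinguishability of the two hybrids to the standard ``random oracle reprogramming'' paradigm, specifically \Cref{thm:bbbv97_oraclechange}. Observe that the only difference between Hybrid~0 and Hybrid~1 is the value of the oracle $H$ at the single point $Y^\star := (y_1,\ldots,y_n)$ \emph{before} the challenge phase: in Hybrid~0 it equals $\sk = H(Y^\star)$ which is drawn from the random-oracle distribution from the start, whereas in Hybrid~1 it is reprogrammed to a freshly uniform $\sk$ right before $\bob$ and $\charlie$ receive their registers. Once the challenge phase begins, in both hybrids $H(Y^\star)$ equals a uniformly random string $\sk$ (using that the $\UE$ key output by $\UE.\keygen$ is statistically close to uniform). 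So it suffices to bound the distinguishing advantage that any shareholder can gain during the share phase from the difference at the single input $Y^\star$.

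The key step is to bound the total query weight that the shareholder adversary $(\adv_1,\ldots,\adv_n)$ can place on the input $Y^\star$ before the reprogramming is triggered. Here I would use two facts: (i) during the share phase the shareholders cannot communicate (they may only consume pre-shared entanglement), and (ii) each $\adv_i$ only receives $y_i$ as classical input (together with $\ket{\ct_m}$ in the case $i=1$). By no-signaling, the marginal state of $\adv_i$ (and hence its query distribution) is independent of $\{y_j\}_{j\ne i}$. Since each $y_j$ is sampled uniformly from $\{0,1\}^{k\cdot n}$ and is statistically hidden from $\adv_i$, for any fixed query of $\adv_i$ the expected amplitude-squared on strings whose $j$-th block equals $y_j$ (for all $j\ne i$) is at most $2^{-k\cdot n(n-1)}$. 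Summing over the at most $q = \mathrm{poly}(\secparam)$ queries of all $n$ shareholders yields
\begin{equation*}
    \mathbb{E}_{y_1,\ldots,y_n}\!\left[\,\sum_{\text{query } i}  W_{Y^\star}(\ket{\phi_i})\right] \;\le\; n\cdot q \cdot 2^{-k\cdot n(n-1)} \;=\; \negl(\secparam),
\end{equation*}
using the notation of \Cref{thm:bbbv97_oraclechange}, for any choice $k = \omega(\log \secparam)/n$ (or simply $k = \secparam$).

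Applying \Cref{thm:bbbv97_oraclechange} with $F = \{(i, Y^\star): i < i_{\text{challenge}}\}$ and with $\epsilon^2/T$ at least matching the query-weight bound above, the joint state of $(\adv_1,\ldots,\adv_n)$ at the end of the share phase in Hybrid~0 is within trace distance $\negl(\secparam)$ of that in Hybrid~1. Because $\bob$ and $\charlie$ in both hybrids see the same (reprogrammed) oracle from the challenge phase onwards, their subsequent output distributions can differ by at most this same quantity. Hence $|\Pr[1\leftarrow \mathsf{Hyb}_0] - \Pr[1\leftarrow \mathsf{Hyb}_1]|\le \negl(\secparam)$.

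The only subtle point will be the averaging argument under pre-shared entanglement: one must verify that conditioning on the $y_i$ seen by $\adv_i$ does not, via the entangled auxiliary state $\xi$, induce any correlation with the other $y_j$'s. This follows since $\xi$ is sampled independently of $(y_1,\ldots,y_n)$ and the shareholders are non-communicating, so the reduced density matrix of $\adv_i$'s work registers remains a deterministic function of $y_i$ alone; then BBBV-style reasoning applies query-by-query after taking the expectation over the independent $y_j$'s. No other subtleties arise, and the remaining details are routine.
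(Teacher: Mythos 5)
Your proposal is correct and takes essentially the same route as the paper: both observe that the hybrids differ only in the oracle value at the single point $(y_1,\ldots,y_n)$ before the challenge phase, bound the shareholders' query weight on that point using that each $\adv_i$ (non-communicating, with only pre-shared entanglement) is statistically ignorant of the other $y_j$'s, and then invoke \Cref{thm:bbbv97_oraclechange} to conclude the pre-challenge states are trace-close, hence the advantages are negligibly close. The only cosmetic difference is that the paper applies Markov's inequality plus a union bound over the $n$ parties to get a high-probability query-weight bound before invoking BBBV, whereas you bound the expected query weight directly (and your $2^{-k\cdot n(n-1)}$ versus the paper's $2^{-k(n-1)}$ reflects the paper's own ambiguity about the length of each $y_i$); both are routine and equivalent in effect.
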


\begin{proof}
We further clarify what happens in Hybrid 1: at the beginning of the execution, the function $H: \{0,1\}^{k\cdot n} \to \{0,1\}^{\ell}$ is a random function. Then it reprograms the function $H$ to get a new function $H'$ such that $H'(y_1, \cdots, y_n) = \sk$ right before the challenge phase; i.e., before any $\adv_i$ sends the bipartite state $\sigma_{\regX_i,\regY_i}$ to Bob and Charlie.

Suppose each $\adv_i$ has $q$ queries.  For each $\adv_i$, note that the strings $\{y_j\}_{j \neq i, j\in [n]}$ are uniformly random from  $\{0,1\}^{k(n-1)}$. Let us denote the squared amplitudes of $\adv_i^H$'s query  on input $x$ as $W_i(x)$.
By Markov's inequality, we have, for each $\adv_i$, given a fixed $y_i$ and for every $0 \leq \alpha \leq 1$:
\begin{align*}
    \Pr_{\{y_j\}_{j \neq i} \gets \{0,1\}^{k(n-1)}}[W_i(y_1,y_2, \cdots, y_n) \geq \alpha] \leq \frac{q}{\alpha \cdot 2^{k\cdot(n-1)}}
\end{align*}
\qipeng{Why there is a root? Need to recalculate the parameters here.}
\jiahui{fixed}
We can set $\alpha = \frac{1}{2^{(k\cdot(n-1))/2}}$ and have $ \Pr_{\{y_j\}_{j \neq i} \gets \{0,1\}^{k(n-1)}}[W_i(x) \geq  \frac{1}{2^{(k\cdot(n-1))/2}}] \leq \frac{q}{2^{(k\cdot(n-1))/2}}$.

Let us denote the query weight of the overall adversary $\adv = (\adv_1, \cdots, \adv_n)$ on input $(y_1, \cdots, y_n)$ as $W(y_1, \cdots, y_n)$.
Since the operations performed by $\adv_1, \cdots, \adv_n$ commute, we can apply the union bound to obtain the probability for any $\adv_i$ to query on $y_1, \cdots, y_n$:
\begin{align*}
        \Pr_{\{y_j\}_{j \neq i} \gets \{0,1\}^{k(n-1)}}[\exists i \in [n]: W_i(y_1,y_2, \cdots, y_n) \geq \alpha] \leq \frac{n\cdot q}{2^{(k\cdot(n-1))/2}}
\end{align*}
That is, with probability $(1 - \frac{n\cdot q}{2^{(k\cdot(n-1))/2}})$, for all $i \in [n]$, the query weight $W_i(y_1,\cdots, y_n) \leq \alpha$. Therefore, we have with overwhelmingly large probability, their query total weight $W(y_1, \cdots, y_n) = \sum_i W_i(y_1, \cdots, y_n) \leq n\cdot \alpha$.\pnote{this should be $\leq \alpha$?} \jiahui{this is the total weight of all parties I added some explanation}

We denote the joint state of $(\adv, \bob, \charlie)$ at the beginning of the challenge phase in Hybrid 0 as $\tau_{0}$
and the state in Hybrid 1 as $\tau_{1}$.
Then we can invoke~\Cref{thm:bbbv97_oraclechange}, to obtain $\left\|\rho_{\adv,0}  - \rho_{\adv,1} \right\|_{\mathrm{tr}} \leq \frac{\sqrt{n}\cdot q}{2^{k(n-1)/4}} = \negl(\lambda)$. We can view the final output in the challenge phase as the final outcome of a POVM on the state $\rho_{\adv,0}$ (or respectively, $\rho_{\adv,1}$). Therefore, by the fact that $\left\|\rho_{\adv,0}  - \rho_{\adv,1} \right\|_{\mathrm{tr}} = \max_E \left\|E(\rho_{\adv,0})  - E(\rho_{\adv,1}) \right\|_{\mathrm{tr}}$ where the maximum is taken over all POVMs $E$, we have $$\left|\Pr[(\adv,\bob,\charlie) \text{ wins Hybrid 0}] - \Pr[(\adv,\bob,\charlie) 
 \text{ wins Hybrid 1}]  \right| \leq \negl(\lambda).$$

\end{proof}

\begin{lemma}
Assuming the IND security of the unclonable encryption in \Cref{def:ue_ind}, the advantage of the adversary in Hybrid 1 is negligible.
\end{lemma}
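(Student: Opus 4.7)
The plan is to reduce Hybrid 1 to the unclonable IND-CPA security of $\UE$ (\Cref{def:ue_ind}). The key structural observation is that in Hybrid 1 the key $\sk$ is sampled uniformly and independently of the shares $(\rho_1, \ldots, \rho_n)$ handed to the $\adv_i$'s, and enters the adversary's view only (i) through the ciphertext $\ket{\ct_m} = \UE.\enc(\sk, m)$ inside $\rho_1$, and (ii) through the single reprogrammed oracle value $H'(y_1, \ldots, y_n) = \sk$, which is queried only by $\bob$ and $\charlie$ during reconstruction.

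First I would build a $\UE$ adversary $(\adv', \bob', \charlie')$ that simulates Hybrid 1 using $(\adv_1, \ldots, \adv_n, \bob, \charlie)$ as a subroutine. $\adv'$ submits $(0, 1)$ as the challenge pair to the $\UE$ challenger and receives the challenge ciphertext $\ket{\ct^*}$. It samples $y_1, \ldots, y_n \gets \{0,1\}^{k n}$ uniformly, and fixes a $2q$-wise independent function $f$, where $q$ upper-bounds the total number of oracle queries made by $(\adv_1, \ldots, \adv_n, \bob, \charlie)$; by Zhandry's theorem such an $f$ is perfectly indistinguishable from a true random oracle against any $q$-query algorithm. $\adv'$ then sets $\rho_1 = (\ket{\ct^*}, y_1)$ and $\rho_i = y_i$ for $i > 1$, and runs each $\adv_i$ on $\rho_i$ using $f$ to answer oracle queries. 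This is statistically close to running the $\adv_i$'s on the reprogrammed oracle $H'$, because by the BBBV-based analysis of the previous lemma the total query weight of the $\adv_i$'s at $(y_1, \ldots, y_n)$ is negligible, so they cannot detect that $f$ has not been reprogrammed. Finally $\adv'$ bundles Bob's half of the resulting bipartite state together with the classical data $(y_1, \ldots, y_n, f)$ into the register for $\bob'$, and does the same for $\charlie'$.

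Second, upon receiving the $\UE$ key $\sk$ from the $\UE$ challenger, each of $\bob'$ and $\charlie'$ reconstructs the reprogrammed oracle $H'$ defined by $H'(y_1, \ldots, y_n) = \sk$ and $H'(x) = f(x)$ otherwise, and runs $\bob$ and $\charlie$ respectively on their registers using $H'$ as the oracle; they output whatever $\bob, \charlie$ output. By construction, the joint view of $(\adv_1, \ldots, \adv_n, \bob, \charlie)$ inside the reduction is statistically indistinguishable from Hybrid 1, the only slack being the assumption that $\UE.\keygen$ is statistically close to uniform. Since the $\UE$ adversary wins iff $b_{\bob'} = b_{\charlie'} = b$, which is precisely the Hybrid 1 winning condition, the advantage of $(\adv, \bob, \charlie)$ in Hybrid 1 is bounded by the advantage of $(\adv', \bob', \charlie')$ in \Cref{def:ue_ind} up to negligible terms, hence is $\negl(\lambda)$.

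The main obstacle is the consistent non-interactive simulation of the QROM across the three non-communicating parties $\adv', \bob', \charlie'$: they must all agree on both the pre-reprogramming oracle $f$ and on the single reprogramming point $(y_1, \ldots, y_n)$. This is handled cleanly by fixing the $2q$-wise independent function in shared randomness (bundled into the classical parts of the bipartite registers) so that $\bob'$ and $\charlie'$, after receiving $\sk$, can independently and identically compute $H'$. Everything else is routine reduction bookkeeping.
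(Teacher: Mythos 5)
Your reduction is correct and essentially the same as the paper's proof: embed the $\UE$ challenge ciphertext into the first share alongside freshly sampled $y_1,\ldots,y_n$, and have $\bob'$ and $\charlie'$ reprogram the oracle at $(y_1,\ldots,y_n)$ to the received key before running $\bob$ and $\charlie$, so the Hybrid~1 winning condition maps exactly onto the $\UE$ game (up to the negligible slack from $\UE.\keygen$ being close to uniform). Your additional bookkeeping (the $2q$-wise independent simulation of $H$ and sharing the reprogramming point across the three parties) only makes explicit what the paper leaves implicit, and the BBBV appeal is actually unnecessary: in Hybrid~1 the $\adv_i$'s query the \emph{unprogrammed} oracle, so running them on $f$ already simulates their view exactly.
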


\begin{proof}
Suppose $({\cal P}_1,\ldots,{\cal P}_n,\bob,\charlie)$ is an adversary in the security game of the unclonable secret sharing for the above construction, we construct a QPT $\adv'$ for
the indistinguishability unclonable encryption security defined in \Cref{def:ue_ind}.

$\adv'$ samples uniform random $y_1, \cdots, y_n$ and receives the quantum ciphertext $\ket{\ct_m}$ from the $\UE$ challenger. Then $\adv'$ prepares $\rho_1 = (\ket{\ct_m},y_1), \rho_2 = y_1, \cdots, \rho_n = y_n$ and sends them to $({\cal P}_1,\ldots,{\cal P}_n)$.

$\adv'$ prepares the state to send to $\bob', \charlie'$ as follows: gives $(y_1,\cdots, y_n)$ to  $\bob'$;
after entering the challenge phase of $\UE$ and before the challenge phase of $\USS$, $\bob'$ reprograms $H$ on input $y_1,\cdots, y_n$ to be $\sk$, which it receives from the $\UE$ challenger; then after entering challenge phase of $\USS$, $\bob'$ outputs the output of $\USS$ adversarial recoverer $\bob$ and $\charlie'$ outputs the output of $\USS$ adversarial recoverer $\charlie$. 

If $\bob,\charlie$ both outputs the correct $m$, then $\bob',\charlie'$ will win the IND $\UE$ security game. 
\end{proof}

\section{Impossibilities and Barriers}
In this section, we present two impossibility results on USS. Furthermore, we present two implications of USS: namely, unclonable encryption and position verification secure against large amount of entanglement. Since no construction known for the latter two primitives, this further underscores the formidable barriers of building USS. 

\subsection{Impossibility in the Information-Theoretic Setting}
\label{sec:general_impossibility}

\begin{theorem}
Let $\parties$ be a set of parties. Information-theoretically secure \USS\ for $\parties$ is impossible if the entanglement graph for $\parties$ is connected and in particular, there is an edge from $P_1$ to everyone else. 
\end{theorem}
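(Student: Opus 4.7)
The plan is to construct an adversary $(\cP_1,\ldots,\cP_n,\mathcal{B},\mathcal{C})$ breaking the information-theoretic security of any USS scheme when $\cP_1$ shares entanglement with every other party, by lifting the two-party port-based teleportation attack from \Cref{sec:tech_overview_USS_disconnected} to the $n$-party star-graph setting, with $\cP_1$ playing the role of the central hub.

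In the first phase, each $\cP_i$ with $i\ge 2$ consumes a batch of its pre-shared EPR pairs with $\cP_1$ to perform ordinary quantum teleportation of $\rho_i$ to $\cP_1$. This leaves $\cP_i$ with a classical Pauli key $(a_i,b_i)$ and $\cP_1$ with the one-time-padded joint state
\[
\sigma \;=\; \rho_1 \otimes \bigotimes_{i\ge 2}\bigl(X^{a_i}Z^{b_i}\rho_i Z^{b_i}X^{a_i}\bigr).
\]

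In the second phase, $\cP_1$ consumes exponentially many additional pre-shared EPR pairs with $\cP_2$ to port-teleport $\sigma$ to $\cP_2$, recording the port index $\mathsf{ind}$. To each port-block $\cP_2$ applies $\reconstruct$ composed with a Pauli-undoing that uses its own key $(a_2,b_2)$ together with an enumerated guess for $(a_3,b_3,\ldots,a_n,b_n)$, producing a table of classical candidate outputs. Finally every party broadcasts to both $\mathcal{B}$ and $\mathcal{C}$ the classical information it holds ($\mathsf{ind}$ from $\cP_1$; the key $(a_i,b_i)$ from every $\cP_i$ with $i\ge 2$; and the full table from $\cP_2$), and $\mathcal{B}$ and $\mathcal{C}$ each select the entry indexed by $\mathsf{ind}$ that is consistent with the broadcast keys, outputting $m$ identically.

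The hardest step is the final alignment. In the genuine two-party attack $\cP_2$ knows its only Pauli key and can apply the correct undoing uniformly over all port-blocks, so the block at $\mathsf{ind}$ already carries the true message with overwhelming probability. In the $n$-party setting $\cP_2$ only knows $(a_2,b_2)$, while $(a_3,b_3,\ldots,a_n,b_n)$ are scattered across the other parties, so a naive enumeration succeeds only when the port-based teleportation outcome happens to coincide with a block on which $\cP_2$ used the correct guess. The main technical work I would carry out is to arrange the port structure, additional layers of (port-based) teleportation, and delayed classical broadcasts so that this coincidence holds with overwhelming probability: this is where the non-local computation machinery of \cite{beigi2011simplified,speelman15} enters, together with the observation that the single round of classical communication needed in those protocols can be postponed to the broadcasts to $\mathcal{B}$ and $\mathcal{C}$, and with the recursive reduction that uses $\cP_1$ as a hub to simulate joint operations between parties that do not directly share entanglement. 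The resulting attack uses entanglement exponential in the total share size and succeeds with probability negligibly close to one, ruling out information-theoretically secure USS on this entanglement graph.
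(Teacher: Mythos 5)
Your first phase (ordinary teleportation of all shares to $\cP_1$) and the first port-based teleportation from $\cP_1$ to $\cP_2$ match the paper's attack, but the step you yourself flag as the ``hardest'' is exactly where the proposal breaks down, and the fix you gesture at is not supplied. Having $\cP_2$ undo its own pad and then \emph{enumerate} guesses for $(a_3,b_3,\ldots,a_n,b_n)$ does not work as described: each port block is a single quantum state and $\reconstruct$ ends in a destructive measurement, so $\cP_2$ cannot evaluate the reconstruction under all $4^{(n-2)\ell}$ candidate Pauli corrections on the same block, and applying a wrong correction corrupts the state. Consequently the table you hand to $\mathcal{B}$ and $\mathcal{C}$ will, in general, not contain the correct message at index $\mathsf{ind}$, and ``selecting the entry consistent with the broadcast keys'' has nothing correct to select. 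Deferring this to unspecified ``non-local computation machinery'' and a ``recursive reduction'' leaves the central step of the impossibility argument unproven.

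The paper closes this gap without any guessing, by a different routing of the unknown keys: after $\cP_1$ port-teleports the padded joint state to $\cP_2$, the attack is \emph{chained} through all remaining parties. $\cP_2$ applies only its own known correction $Z^{b_2}X^{a_2}$ to \emph{every} candidate port register, then port-teleports all of those registers onward to $\cP_3$; $\cP_3$ strips its own pad on all its candidate registers and forwards to $\cP_4$, and so on, until $\cP_n$ strips its pad and runs $\reconstruct$ on every candidate register. Each party thus only ever needs the Pauli key it already holds, at the cost of a tower-exponential blow-up in the number of registers and ports; $\mathcal{B}$ and $\mathcal{C}$ then recover the secret by following the chain of port indices broadcast by $\cP_1,\ldots,\cP_{n-1}$ into $\cP_n$'s list of measurement outcomes. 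If you want to stay on the star graph literally (entanglement only between $\cP_1$ and each $\cP_i$), you would still have to realize this chaining by routing through $\cP_1$, not by enumeration at $\cP_2$; as written, your attack does not achieve success probability close to one, so the claimed contradiction with security is not established.
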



\begin{proof}

The attack strategy is as follows. The $n$ parties $P_1, \cdots, P_n$ pre-share a large amount of entanglement with one another. In the protocol, each $P_i$ receives its share $\rho_i$. 

\begin{itemize}
    \item \emph{Regular Teleportation Stage}: all parties $P_i$, where $i \neq 1$  teleport their shares to party $P_1$ via regular teleportation.
    Each $P_i$ obtains a measurement outcome $(a_i, b_i)$.

    \item Now $P_1$ holds a state in the following format: $(\mathbb{I} \otimes X^{a_2} Z^{b_2} \otimes \cdots X^{a_n} Z^{b_n} )\ket{\Psi}_{P_1 P_2 \cdots P_n}$ which can be represented as mixed states $(\rho_1, X^{a_2} Z^{b_2} \rho_2 X^{a_2} Z^{b_2}, \cdots, X^{a_n} Z^{b_n} \rho_n X^{a_n} Z^{b_n})$. That is, quantum one-time padded shares from all other parties and its own share in the clear.

\item \emph{Port-Based Teleportation Stage}:
\begin{itemize}
  \item  $P_1$ performs port-based teleportation (see \Cref{sec:port_based_teleport}) for the state $(\mathbb{I} \otimes X^{a_2} Z^{b_2} \otimes \cdots X^{a_n} Z^{b_n} )\ket{\Psi}_{P_1 P_2 \cdots P_n}$ to $P_2$. $P_1$ obtains a measurement outcome that stands for some index $i_1$. Recall that by the guarantee of port-based teleportation, the index $i_1$  specifies the register of $P_2$ that holds the above state in the clear, \emph{without any Pauli errors on top}.
  
  \item $P_2$ will now remove the quantum one time pad information $X^{a_2}, Z^{a_2}$ on its share in the teleported state above. Since $P_2$ does not have information about $i_1$, it simply performs $\I \otimes Z^{a_2} X^{a_2} \otimes \I \cdots \otimes \I$ on all exponentially many possible registers that it may receive the teleported state from $P_1$.

  \item Next $P_2$ performs port-based teleportation with $P_3$ for \emph{all registers that could possibly hold the state} $(\mathbb{I} \otimes \mathbb{I}  \otimes X^{a_3} Z^{b_3} \otimes \cdots \otimes X^{a_n} Z^{b_n} )\ket{\Psi}_{P_1 P_2 \cdots P_n}$. Thus, $P_2$ obtains an exponential number of indices about the registers that will receive the teleported states on $P_3$'s hands.

  \item $P_3$ accordingly, applies $\I \otimes  \I  \otimes Z^{b_3} X^{a_3} \cdots \I$ on all the possible registers that can hold the teleported state; performs a port-based teleportation to $P_4$ with all of these registers and obtains a measurement outcome that has a doubly-exponential number of indices \footnote{For $P_i, 2 \leq i < n$, the measurement outcome will have its size grow in an exponential tower of height $i$. }.

  \item $\cdots$


  \item Finally, $P_n$ receives the teleported states from $P_{n-1}$ and performs $\I \otimes \cdots \I \otimes Z^{b_n} X^{a_n}$ on all of them. One of these registers will hold the state $\ket{\Psi}_{P_1\cdots P_n}= (\rho_1,\cdots, \rho_n)$ in the clear.
  Then $P_n$ performs the reconstruction algorithm on all of these registers to obtain a large number of possible outcomes. One of them will hold the correctly reconstructed secret $s$.
\end{itemize}

\item \emph{Reconstruction Stage}: now $P_n$ sends all its measurement outcomes to both Bob and Charlie. All other $P_i$'s send their indices information measured in the port teleportation protocol. Bob and Charlie can therefore find the correct index in $P_n$'s measurement outcomes that holds $s$, by following a path of indices.
\end{itemize}
\end{proof}

\begin{remark}
The above strategy can be easily converted into a strategy where the underlying entanglement graph is connected (but may not be a complete graph) and every pair of connected parties share (unbounded) entanglement. The similar idea applies by performing regular teleportation and port-based teleportation via any DFS order of the graph. Thus, we have the following theorem.
\end{remark}

\begin{theorem}
\label{thm:general_impossibility}
Let $\parties$ be a set of parties. Information-theoretically secure \USS\ for $\parties$ is impossible if the entanglement graph for $\parties$ is connected. 
\end{theorem}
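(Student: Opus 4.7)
The plan is to lift the previous theorem to an arbitrary connected entanglement graph by routing all of the quantum information through a spanning tree. Fix any spanning tree $T$ of the entanglement graph and root it at $P_1$. Since every edge of $T$ is an edge of the entanglement graph, each parent/child pair shares unbounded entanglement, so we may freely perform both regular and port-based teleportation along any tree edge. The overall adversary will use $T$ as the schedule for the two-phase attack already used in the star-like case.

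First I would carry out a regular teleportation phase in reverse DFS order: each non-root $P_i$ teleports the bundle consisting of its own share together with everything it has just received from its children up to its parent in $T$. Each such teleportation introduces a fresh Pauli one-time pad whose classical key is recorded by the sender. After the phase terminates, $P_1$ holds a state of the form $\bigotimes_i X^{a_i^*} Z^{b_i^*} \rho_i Z^{b_i^*} X^{a_i^*}$, where the composite keys $(a_i^*, b_i^*)$ are distributed collectively among the parties on the tree path from $P_i$ to $P_1$ as XOR-shares.

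Next I would run a port-based teleportation phase that traverses $T$ according to its Euler tour starting and ending at $P_1$. Each step of the tour moves along a tree edge, so the required pre-shared entanglement is available. When the tour enters a party $P_j$, that party applies its locally known pieces of each $(a_i^*, b_i^*)$ correction to \emph{every} candidate register currently held (it does not know which candidate is the genuine copy), then port-based teleports the entire collection of candidate registers along the next tree edge in the tour; this multiplies the number of candidate registers by the port-count of that teleportation. Because we are in the information-theoretic regime with unbounded entanglement, the tower of candidate registers is permitted to blow up without penalty. After the tour terminates, the final party holds a tuple of candidate register collections in which exactly one entry contains the unpadded state $(\rho_1, \ldots, \rho_n)$; it runs \reconstruct\ on every candidate to produce a list of classical candidate secrets.

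The final step is a broadcast phase in which every party sends all of its classical indices (from both phases) to $\bob$ and $\charlie$, together with the list of candidate secrets from the last party. Because the correct candidate is selected by a unique consistent chain of indices through $T$, both $\bob$ and $\charlie$ can deterministically and independently trace that chain and output the reconstructed secret, violating unclonability. The main obstacle I expect is the bookkeeping: one has to verify that at each port-based teleportation step the Pauli corrections held in XOR-shared form by parties along a given tree path can be applied coherently and in the right order across all candidate registers, so that on the unique correct index chain all corrections compose to the identity. Once this invariant is stated and maintained along the Euler tour, correctness of the attack reduces to the calculation already carried out in the star case, and the impossibility extends as claimed.
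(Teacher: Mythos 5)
Your proposal is correct and is essentially the paper's own argument: the paper proves the special case (regular teleportation of all shares to $P_1$, then a chain of port-based teleportations in which each party strips its own Pauli pad from every candidate port before forwarding, with all classical outcomes sent to $\bob$ and $\charlie$ who trace the index chain), and extends it to an arbitrary connected entanglement graph only by remarking that the two teleportation phases should be run along a DFS order of the graph --- which your spanning-tree accumulation with XOR-shared composite pads and Euler-tour port-based phase makes precise. One small correction to your bookkeeping: since an Euler tour revisits internal vertices, each party must apply its locally held pad pieces exactly once (e.g., only at its first visit), not at every entry, since a second application would cancel the correction on the unique correct index chain; with that fix the invariant you state holds and the attack goes through exactly as in the paper.
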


\vipul{There might be some value in stating this theorem in a more general way: Any quantum secret sharing (for a classical message) can be ``locally" converted to a classical secret sharing: i.e., each party can apply a local procedure to convert it's share to be a purely classical share. This comes at the expense of exponential blowup in share size. Once we have this, implication to uncloneable secret sharing should follow as a corallary.}
\qipeng{Not fully get this. Could you give more explanation?}

\subsection{Impossibility with Low T-gates for Efficient Adversaries}
\label{sec:impossibility_lowT}
Our impossibility result above in the information-theoretic setting requires exponential amount of entanglement between the parties. In this section, we present an attack that can be performed by efficient adversaries, albeit on USS schemes with restricted reconstruction algorithms.

Again, we point out that our result is a rediscovery of a similar algorithm in \cite{speelman15} in the context of instantaneous non-local computation. We also extend the attack to an $n$-party setting whereas \cite{speelman15} considers only 2 parties.


\begin{theorem}
Let $\parties$ be a set of parties and if the entanglement graph for $\parties$ is connected, then there exists an attack using polynomial-time and polynomial amount of entanglement on any $\USS$ scheme where the procedure $\reconstruct$ consists of only Clifford gates and $O(\log\lambda)$ number of \T{} gates.
\end{theorem}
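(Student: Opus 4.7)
My plan is to reduce the $n$-party case to the two-party Clifford$+\log$-$T$ attack sketched in Section 2.4 by using the connectedness of the entanglement graph to collect every share at a single party via ordinary quantum teleportation. Fix a spanning tree $T$ of the entanglement graph rooted at $\cP_1$, and have the parties stockpile $\poly(\lambda)$ EPR pairs along each tree edge (still polynomial entanglement overall since $T$ has $n-1$ edges and each share has polynomial size). In a bottom-up pass over $T$, each non-root $\cP_i$ teleports every qubit it currently holds---its own share together with whatever material has been forwarded up from its descendants---to its parent via standard Bell-basis teleportation, recording the resulting classical Pauli corrections. After the pass, $\cP_1$ holds a state of the form $(\bigotimes_k X^{\alpha_k} Z^{\beta_k})\ket{\Psi}$, where $\ket{\Psi}$ is the joint honest share state on which $\reconstruct=C$ would act and each pad bit $\alpha_k,\beta_k$ is an XOR of classical bits now distributed among $\cP_2,\ldots,\cP_n$.

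Next $\cP_1$ homomorphically evaluates $C$ on this Pauli-padded state exactly as in the two-party overview. Each Clifford gate is applied directly, and the pad description is updated classically using $C\, X^aZ^b\, C^\dagger \in \mathcal P_n$. For each of the $t=O(\log\lambda)$ T-gates, $\cP_1$ guesses $c_j\in\{0,1\}$ for the X-pad bit on the acted-upon wire and applies $P^{c_j} T$ in place of $T$; the two identities $T\,X^xZ^z=X^xZ^{x\oplus z}P^x\,T$ and $P\,X^xZ^z=X^xZ^{x\oplus z}\,P$ from Section 2.4 imply that when $c_j=x_j$ (the \emph{true} current X-pad bit) the Pauli pad advances through the T-gate in a deterministic, publicly computable way, whereas a mismatch leaves an uncorrectable $P$ on the state. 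After processing all of $C$, $\cP_1$ measures the output register in the computational basis and obtains a string $y$.

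Finally, in the USS challenge phase every $\cP_i$ treats its classical registers (initial Pauli corrections, and for $\cP_1$ also the guesses $\{c_j\}$ and the outcome $y$) as the bipartite state $\sigma_{\mathbf X_i \mathbf Y_i}$, simply duplicating these bits to both $\bob$ and $\charlie$. Using this data alone, each reconstructor reruns the (purely classical) homomorphic bookkeeping to predict the true X-pad bit $x_j$ at each T-gate, and accepts iff $c_j=x_j$ for all $j$; on acceptance they decode $m$ from $y$ via the final pad, otherwise they both output a fresh fair coin. Since all $t$ guesses are jointly correct with probability $2^{-t}$, the joint success probability is $2^{-t}\cdot 1+(1-2^{-t})\cdot \tfrac12=\tfrac12+2^{-(t+1)}=\tfrac12+1/\poly(\lambda)$, contradicting $\USS$ indistinguishability. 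The main obstacle I expect is careful bookkeeping: verifying that the single-qubit T-gadget analysis lifts correctly when the T-gate sits inside an arbitrary Clifford context with a multi-qubit accumulated pad, that the Clifford-normalization map applied to $\bigotimes_k X^{\alpha_k}Z^{\beta_k}$ yields a pad whose bits are computable from the broadcast classical information, and that the ``check'' each reconstructor performs to decide whether the guesses were correct is purely a function of the broadcast bits---all of which reduce to the deterministic nature of Clifford Pauli propagation and the locality of the two identities above.
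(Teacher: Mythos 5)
Your proposal is correct and follows essentially the same route as the paper's proof: teleport all shares to a single party, homomorphically evaluate the Clifford$+\T$ reconstruction on the Pauli-padded state by guessing a $\Pgate^{c_j}$ correction at each of the $t=O(\log\lambda)$ $\T$-gates, broadcast all classical data, and let $\bob,\charlie$ classically recompute the pad, verify the guesses, and unpad the measured outcome. Two small points: your spanning-tree routing is a slightly more careful use of mere connectedness than the paper's teleport-directly-to-one-party description, and in the reject branch the ``fresh fair coin'' must be a \emph{common} bit (e.g., a shared random bit included in the parties' broadcast), since independent coins would make the joint success probability $\tfrac14$ in that branch and the claimed $\tfrac12+2^{-(t+1)}$ bound would fail.
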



\begin{proof}
We first consider the two party case for the sake of clarity, and then generalize to $n$-party case.

Let us assume that the number of qubits in each party $\adv_i$'s secret share $\rho_i$ to be $k$  (up to some padding with $\ket{00\cdots 0}$ if they have different lengths) and they are each stored in registers $P_i$, respectively.
Without loss of generality, we view the entire system over registers $P_1$ and $P_2$ as a pure state $\ket{\psi_{12}}$ since our attack works regardless of this state being mixed or pure.

We write the honest protocol's reconstruction circuit $\reconstruct$ on $2k$-qubit inputs as a circuit consisting of Clifford gates and {\sf T} gates, followed by a measurement in the computational basis in the end. The secret $m$ to recover will be the first bit in the measurement outcome.

Recall that in the gate teleportation protocol \Cref{sec:quantum_gate_tp}, given the Pauli errors $(a,b)$ as the sender's (Alice) measurement result, and given a Clifford circuit $G$ the receiver (Bob) intends to apply on the teleported state $\ket{\psi}$, we can compute an update function  $f_G$ for $G$, so that $(a',b') \gets f_G(a,b)$ and $\Z^{b'}\X^{a'} G(\X^a \Z^b)\ket{\psi} = G\ket{\psi}$.
Instead of using the approach in \cite{BK21} to compute a (relatively complicated) update function for any Clifford+\T{} quantum circuit, we will instead use a simpler approach to compute the update function $f_\reconstruct$ just as for a Clifford-only circuit.

\begin{enumerate}
    \item  $\adv_1$ and $\adv_2$ pre-shares $k$-ebit of entanglement in register $P_1',P_2'$. $\adv_1$ teleports its share $\rho_1$ to $\adv_2$
    and obtains the measurement outcomes $(a, b) \in \{0,1\}^{2k}$. Now $\adv_2$ should have a quantum one-time padded $\X^a \Z^b \rho_1 \Z^b \X^a$ in its register $P_2'$.

    \item  $\adv_2$ applies  $\reconstruct$ circuit  in a gate-by-gate manner, with the following approaches:

    \begin{enumerate}
        \item  If the next gate to apply is a Clifford gate,  then $\adv_2$ simply applies it to the corresponding registers in $P_2'$ and/or $P_2$. Recall that $P_2'$ consists of teleported first share and $P_2$ has the second share.

        \item If the next gate is a \T{} gate, and suppose it is the $j$-th \T{} gate, according to the topological numbering on all $\T$ gates in the $\reconstruct$ circuit: $\adv_2$ first applies the \T{} gate. Then it samples a random bit $s_j \gets \{0,1\}$ and
        applies a $\Pgate^{s_j}$ gate upon applying the $j$-th $\T$ gate.
        (i.e., if $s_j = 1$ then it applies a $\Pgate$ gate upon applying the $\T$ gate,  and if $s_j = 0$, it does nothing).

        Every time after applying the $\T$ gate and its following $\Pgate^{s_j}$  gate, $\adv_2$ modifies the circuit description for $\reconstruct$: append the gate $\Pgate^{s_j}$ after the $j$-th \T{} gate; the gate $\Pgate^{s_j}$ operates on the same qubit. 
        
    \end{enumerate}

    \item In the end, $\adv_2$ finishes applying all the gates and obtains an modified reconstruction circuit $\reconstruct'$. It will obtain a classical outcome $c$.

    \item In the challenge phase, 
 $\adv_2$ sends the modified circuit description $\reconstruct'$ and $c$ to the recoverers $\bob$ and $\charlie$. $\adv_1$ sends the one-time pads $(a,b)$ to $\bob$ and $\charlie$.

    \item $\bob$ and $\charlie$ computes the update function $f_{\reconstruct'}$ according to the updated $\reconstruct'$
    circuit and computes $f_{\reconstruct'}(a||0\cdots 0, b ||0\cdots 0 )$ (the quantum OTPs $(a,b)$ are each appended with $k$ zeros to represent the quantum OTPs on $\adv_2$'s state $\rho_2$ before applying any gate).
    In the successful case, they will obtain Pauli corrections $(a^*, b^*)$; in an unsuccessful case, they abort \footnote{When computing the update function for the circuit $\reconstruct'$, 
we can modify the algorithm inside the update function to do the following:
if the next gate is a $\T$ gate and assume the update function after applying the previous gate gives outcome $(a',b')$, then first check if  there's a correct $\Pgate^{a'}$ gate following the \T{} gate; if yes, compute the update function for these two gates together as 
$(a', b'\oplus a') \gets f_{\Pgate^{a'} \T}(a', b')$; otherwise if there's not a correct $\Pgate^{a'}$ gate that follows the $\T$ gate, the update function aborts.}.
    They then each apply $\Z^{b^*}\X^{a^*}$ to $c$ and output the first bit of $\Z^{b^*}\X^{a^*}c$  as $m$ (in our settings, $\Z^{b^*}$ is in fact unnecessary).
\end{enumerate}

\paragraph{Correctness}
We show that the above strategy allows $\adv,\bob,\charlie$ to win with a noticeable probability, when the number of \T{} gates is $O(\log \lambda)$. 

Recall that we have the following identity (in \Cref{sec:update_function}):
\begin{align*}
         \T\left(\X^a\Z^b\right)\ket{\psi} & =  \left(\X^a \Z^{b\oplus a}\Pgate^a\right)\T\ket{\psi}
\end{align*}
Every time $\adv_2$ applies a $\T$ gate on the quantum one-time padded input, if we directly ``move'' this $\T$ gate to the right side, we will have an unwanted $\Pgate^a$ on the right side of the Pauli one-time pads $X,Z$'s. Our solution is to make a guess on the bit $a \in \{0,1\}$ and apply an additional $\Pgate^a$ in order to remove the unwanted phase gate $\Pgate^a$ from the final correction. 

When the guess is correct,  i.e. $s_j = a$, we have the following after applying the $\T$-gate and $\Pgate^{s_j} = \Pgate^a$ gate:
\begin{align*}
    \Pgate^a \X^a Z^{b \oplus a} \Pgate^a \T \ket{
    \psi} & = \X^a Z^{b \oplus a^2} \Pgate^{a \oplus a} \T \ket{
    \psi} \\
    & = \X^a \Z^{b \oplus a} \T \ket{
    \psi} \text{ since } a^2 =a, a\oplus a = 0.    
\end{align*}
The above equalities follow from applying the update rules in \Cref{sec:update_function}.

Now we have successfully "moved" the original $\T$ gate to the right side of Pauli pads $\X,\Z$'s, and removed the unwanted $\Pgate^a$ gate.

Let us suppose for all $j \in [\kappa]$, where $\kappa$ is the number of $\T$ gates, $\adv_2$'s guess for $s_j$ is correct:
that is, assuming the update function after applying the previous gate gives outcome $(a',b')$, then if $a' = 1$ then there is a $\Pgate$ gate that follows the $\T$ gate; else there would not be one. Then whenever the update function $f_{\reconstruct'}$ runs into a $\T$ gate followed by a correct $\Pgate^{s_j} = \Pgate^{a'}$, we will obtain the updated Pauli errors as $(a', b'\oplus a') \gets f_{\Pgate^{a'} \T}(a', b')$. 

If all $\adv_2$'s guesses for $\{s_j\}_{j \in [\kappa]}$ are correct, 
 then $\adv_2$'s measurement outcome $c = \reconstruct'\X^a\Z^b \ket{\psi_{12}}$ will be equal to $ \X^{a^*} Z^{b^*} \reconstruct \ket{\psi_{12}}$.
 $\bob$,  $\charlie$ will obtain the $(a^*, b^*) \gets f_{\reconstruct'}(a||0^k,b||0^k)$ so that by applying $\X^{a^*}\Z^{b^*} $ to $c$, they will both obtain the real result $\reconstruct\ket{\psi_{12}}$. by outputting the first bit they will recover the correct $m$.

Every time $\adv_2$ makes a guess on $s_j$, it has probability $\frac{1}{2}$ of getting correct. When it guesses incorrectly for one $j \in [\kappa]$, then the entire approach may fail. Therefore, the above attack has success probability at least $\frac{1}{2^\kappa}$. Since the $\T$ gate number $\kappa$ is logarithmic in terms of the security parameter, the attack succeeds with a noticeable probability.

\end{proof}

\paragraph{Extending to $n$-party case}
In the $n$-party case, every $\adv_i, i \neq n$ can teleport its share $\rho_i$ to the last party $\adv_n$ and sends their Pauli OTP information $\{(a_i, b_i)\}_{i \neq n}$ to the recoverers $\bob$ and $\charlie$. 

$\adv_n$ performs the same operations as what $\adv_2$ does in the 2-party case. In the end, it sends the outcome $c$ and modified circuit $\reconstruct'$ to $\bob$ and $\charlie$. They should then be able to compute corrections $(a_1',b_1', \cdots, a_n', b_n') \gets f_{\reconstruct'}(a_1',b_1', \cdots, 0^k, 0^k)$ and apply $\Z^{b_1', \cdots, b_n'}\X^{a_1',\cdots a_n'}$ to $c$ to recover the secret.

\subsection{USS Implies Unclonable Encryption}
\label{sec:USS_implies_UE}

\begin{theorem} \label{thm:USS_implies_UE}
Unclonable secret sharing with IND-based security against adversaries with (bounded) polynomial amount of shared entanglement and  connected pre-shared entanglement graph implies secure unclonable encryption.
\end{theorem}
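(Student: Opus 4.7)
The plan is to follow the blueprint already laid out in the technical overview: build a UE scheme whose ciphertext is a USS share of the plaintext with the remaining shares hidden under a quantum one-time pad whose key serves as the UE secret key. Concretely, given an $n$-out-of-$n$ USS $(\share,\reconstruct)$, define $\UE.\keygen(1^\lambda)$ to output a uniformly random string $k=(a_1,b_1,\ldots,a_{n-1},b_{n-1})$ of the appropriate length; $\UE.\enc(k,m)$ runs $(\rho_1,\ldots,\rho_n)\gets\share(1^\lambda,1^n,m)$ and outputs $\ct=(\rho_1,\,X^{a_1}Z^{b_1}\rho_2 Z^{b_1}X^{a_1},\ldots,X^{a_{n-1}}Z^{b_{n-1}}\rho_n Z^{b_{n-1}}X^{a_{n-1}})$; decryption removes the pads with $k$ and invokes $\reconstruct$. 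Correctness is immediate from USS correctness, and IND-CPA (semantic) security follows because, marginalizing over the uniform $k$, each padded share is the maximally mixed state, so the only plaintext-dependent component visible to an adversary without the key is $\rho_1$, which by the (perfect) secrecy of $n$-out-of-$n$ secret sharing is independent of $m$.

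For the unclonability reduction, suppose $(\alice,\bob,\charlie)$ wins the UE indistinguishability game with advantage $\varepsilon$; I would build a USS adversary $({\cal P}_1,\ldots,{\cal P}_n,\bob',\charlie')$ whose entanglement graph is connected and uses polynomially many EPR pairs, and whose advantage is also $\varepsilon$. Pre-share EPR pairs between ${\cal P}_1$ and each ${\cal P}_i$ for $i\ge 2$, one block per qubit of $\rho_i$; this yields a star-shaped, hence connected, entanglement graph of polynomial size. Upon receiving their shares, each ${\cal P}_i$ (for $i\ge 2$) standardly teleports $\rho_i$ to ${\cal P}_1$, obtaining classical outcomes $(a_{i-1},b_{i-1})$; ${\cal P}_1$ now holds exactly $(\rho_1,\,X^{a_1}Z^{b_1}\rho_2 Z^{b_1}X^{a_1},\ldots,X^{a_{n-1}}Z^{b_{n-1}}\rho_n Z^{b_{n-1}}X^{a_{n-1}})$, which is \emph{identically distributed} to a honestly generated UE ciphertext of the challenge message under a uniform key. ${\cal P}_1$ then runs $\alice$ on this state, obtaining a bipartite state on registers $\mathbf{X},\mathbf{Y}$, sending $\mathbf{X}$ to $\bob'$ and $\mathbf{Y}$ to $\charlie'$; each ${\cal P}_i$ ($i\ge 2$) broadcasts its classical $(a_{i-1},b_{i-1})$ to both $\bob'$ and $\charlie'$. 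Finally, $\bob'$ and $\charlie'$ concatenate their classical pieces into $k$ and simulate $\bob$ and $\charlie$ respectively on their register together with $k$.

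By construction, the joint view of $(\bob',\charlie')$ is exactly the joint view of $(\bob,\charlie)$ in the UE cloning game, so the simultaneous-guess probability is preserved, contradicting the USS security. I do not anticipate a serious obstacle here: the calculation is driven entirely by the identity $X^aZ^b\rho Z^bX^a$ averaged over uniform $(a,b)$ being the maximally mixed state plus the fact that teleportation produces a uniform pad, both of which are standard. The only subtlety worth writing down carefully is the accounting of the entanglement budget — we want to stress that the reduction uses only $\mathrm{poly}(\lambda)$ EPR pairs (one per qubit being teleported) on a connected graph, so the hypothesis on USS (connected graph, polynomially bounded entanglement) is exactly what is being invoked; and that the pre-shared EPR pairs can be absorbed into the auxiliary state $\xi$ of the USS security experiment without any honest-party entanglement being required by the scheme itself.
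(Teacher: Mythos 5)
Your proposal is correct and follows the same core idea as the paper's proof: teleport all but one share to ${\cal P}_1$ (no classical communication among the shareholders is needed, since the receiver never has to learn the pads), observe that the resulting padded collection of shares is distributed exactly as a UE ciphertext under a uniform key, run the UE splitter there, and have the teleporting parties forward their classical pad outcomes to $\bob'$ and $\charlie'$, who use them as the UE key --- a star-shaped, connected, polynomial-size entanglement graph, exactly matching the hypothesis. The one genuine difference is in the constructed UE scheme itself: you make the Pauli pad \emph{be} the secret key, so the ciphertext is purely quantum, whereas the paper samples $\sk$ in $\keygen$ and puts a masked pad $s=(a,b)\oplus\sk$ into the ciphertext; consequently the paper's reduction needs the shareholders to pre-agree on a random string $r$ that ${\cal P}_1$ can hand to the UE adversary in place of $s$ (since ${\cal P}_1$ never learns $(a,b)$), and the effective key becomes $(a,b)\oplus r$, while your version dispenses with both the classical ciphertext component and the $r$ trick because the key is consumed only at decryption and is supplied entirely by the teleporting parties. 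Both simulations are perfect and preserve the advantage, so either works; the paper's extra indirection just keeps the more conventional ``key first, then encrypt'' structure. One small point to tidy up: your IND-CPA argument appeals to ``perfect secrecy'' of a single share, which is not formally part of the paper's USS definition --- either derive it from unclonability (a party who could distinguish from $\rho_1$ alone could broadcast a classical measurement outcome to both $\bob$ and $\charlie$), or simply note, as the paper does, that plain IND-CPA follows from unclonable IND-CPA, which your reduction already establishes.
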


\noindent We will first look at the 2-party case, which can be easily extended to the $n (> 2)$-party case.

\begin{proof}
Assume a secure $\USS = (\USS.\share, \USS.\reconstruct)$ with IND-based security, we construct the following \UE{} scheme:

\begin{enumerate}
    \item $\keygen(1^\lambda, 1^{|m|})$: samples a random $\sk \gets \{0,1\}^{2\ell}$, where $\ell = \ell(\lambda)$ is the number of qubits in each share generated by $\USS.\share(1^\lambda, 1^{|m|}, \cdot)$.
    Output $\sk$.

    \item $\enc(\sk,m):$ \begin{enumerate}
        \item compute $(\rho_{1}, \rho_{2}) \gets \USS.\share(1^\lambda, 1^{|m|}, m)$.

        \item sample random $(a,b) \gets \{0,1\}^{2\ell}$. Use them to quantum one-time pad the second share $\rho_2$ to obtain $\X^a\Z^b \rho_2 \Z^b\X^a$.

        \item compute $s \gets (a,b) \oplus \sk$
        
        \item Output $\ct = (\rho_{1}, \X^a \Z^b \rho_2 \Z^b\X^a, s)$.
    \end{enumerate}

    \item $\dec(\ct, \sk)$: 
    \begin{enumerate}
        \item parse $\ct = (\rho_1, \rho_2', s)$; 
        
        \item compute $(a,b) \gets s \oplus \sk$;
        
        \item output $m \gets \USS.\reconstruct(\rho_1, \X^{a} \Z^{b} \rho_2' \Z^b\X^a)$.
    \end{enumerate}
    
\end{enumerate}

\paragraph{Correctness} The correctness easily follows from the correctness of the underlying $\USS$ scheme.

\paragraph{Security}
Suppose we have $\UE$ adversaries $(\adv,\bob,\charlie)$ that wins in the IND-based $\UE$ security game, we can construct adversary $(\adv' = (\adv_1, \adv_2), \bob',\charlie')$ for the $\USS$ IND-based security.

Before receiving the shares from the challenger, $\adv_1$ and $\adv_2$ agrees on a random strong $r \gets \{0,1\}^{2\ell}$.
When receiving the shares,
$\adv_2$ teleports its share $\rho_2$ to $\adv_1$ and obtains Pauli errors $(a,b)$.

$\adv_1$ gives $(\rho_1, ,r)$ the $\UE$ adversary $\adv$.
$\adv_2$ computes $\sk' \gets (a,b) \oplus r$.

In the $\USS$ challenge phase, $\adv_2$ sends $\sk'$ to both $\bob'$ and $\charlie'$. The $\UE$ adversaries $\adv$ has finished giving the bipartite it genertaed from $(\rho_1, r)$ state $\sigma_{\bob, \charlie}$ to $\bob$ and $\charlie$.

Then $\bob'$ feeds $\bob$ with $\sk'$ as the secret key in the $\UE$ security game (and $\charlie'$ feeding $\sk'$ to $\charlie$,respectively), and outputs their output bit $b_\bob, b_\charlie$ as the answer to $\USS$ game. Since the classical part in the unclonable ciphertext is the classical information $(a,b)$ masked by a uniformly random $\sk$, the reduction perfectly simulates the above scheme by first giving the $\UE$ adversary $\adv$ a uniformly random string $r$ and later feeding $\bob,\charlie$ with $r \oplus (a,b)$.

\paragraph{Extending to $n$-party case} 
We can change the scheme to sample a longer $\sk \in \{0,1\}^{2(n-1)\ell}$ and
let the unclonable ciphertext be $(\rho_1, \X^{a_2}\Z^{b_2} \rho_2 \Z^{b_2}\X^{a_2}, \cdots, \X^{a_n}\Z^{b_n} \rho_n \Z^{b_n}\X^{a_n}, s = (a_1,b_1, \cdots, a_n, b_n) \oplus \sk )$.

In the reduction, when receiving the shares,
$\adv_i, i\neq 1$ teleports its share $\rho_i$ to $\adv_1$ and obtains Pauli errors $(a_i,b_i)$.
The rest of the reduction follows easily.

\end{proof}

\begin{theorem}
Unclonable secret sharing with IND-based security against adversaries with disconnected entanglement graph, where one of the parties receives as a share a quantum state and all other parties receive classical shares (in other words, computational basis states), implies secure unclonable encryption. 
\end{theorem}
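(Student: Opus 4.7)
The plan is to use the quantum share as the unclonable ciphertext and the classical shares as the UE secret key. The key observation enabling this is that, by $n$-out-of-$n$ secret-sharing secrecy (which is implicit in the IND-based unclonability of USS: if any strict subset of shares could predict $m$ with non-negligible advantage, those parties could forward their prediction to both Bob and Charlie, violating unclonability), the marginal distribution of the classical shares $(c_2, \ldots, c_n)$ produced by $\USS.\share(1^\lambda, 1^n, m)$ is statistically independent of $m$ up to negligible distance. I would therefore define $\UE.\keygen(1^\lambda)$ to sample $\sk = (c_2^*, \ldots, c_n^*)$ from this $m$-independent marginal, $\UE.\enc(\sk, m)$ to sample $\rho_1$ from the conditional distribution of the quantum share given that the classical shares equal $\sk$ and the message is $m$, and $\UE.\dec(\sk, \rho_{\ct})$ to output $\USS.\reconstruct(\rho_{\ct}, c_2^*, \ldots, c_n^*)$. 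Correctness is immediate from USS correctness.

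For security, given any IND-CPA unclonable-encryption adversary $(\adv, \bob, \charlie)$ with advantage $\varepsilon$, I would construct a USS adversary $(\adv_1', \ldots, \adv_n', \bob', \charlie')$ with disconnected entanglement graph and essentially the same advantage. Each $\adv_i'$ for $i \geq 2$ receives its classical share $c_i$ and copies it to both $\bob'$ and $\charlie'$; no quantum operation and no shared entanglement is required. The quantum shareholder $\adv_1'$ simply runs the UE splitter $\adv$ on $\rho_1$ and forwards the two resulting registers to $\bob'$ and $\charlie'$. Then $\bob'$ reassembles $\sk = (c_2, \ldots, c_n)$ from the classical forwards and runs the UE $\bob$ on that key together with its quantum register, and symmetrically for $\charlie'$. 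Since none of the $\adv_i'$'s share any entanglement, the entanglement graph is the empty graph on $n$ vertices---trivially disconnected with $n$ connected components---so this is a valid attack for the $\uss_2$ security game. By construction, the simulation's view is statistically close to the real UE IND-CPA game, so the $\uss_2$ adversary wins with probability at least $\frac{1}{2} + \varepsilon - \negl(\lambda)$; contrapositively, $\uss_2$ security implies UE security.

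The main technical obstacle is the efficiency of $\UE.\enc$ in the computational setting: sampling $\rho_1$ from the conditional distribution given fixed classical shares need not be efficient in general. I would resolve this by restricting attention to USS schemes in a canonical form where $\USS.\share(m)$ first samples classical randomness determining $(c_2, \ldots, c_n)$ and then computes the quantum share $\rho_1$ as an efficient quantum function of that randomness and $m$; all concrete USS constructions in this paper already have this form, so the resulting UE is efficient. In the information-theoretic setting, no efficiency requirement is needed and the reduction is unconditional. Generalizing from the $2$-party case to $n$-party is automatic, since the reduction treats each classical share identically and never uses any cross-party entanglement.
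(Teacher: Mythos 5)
Your construction differs from the paper's in a way that introduces a genuine gap. You set the UE key equal to (a sample from the marginal of) the classical shares and define $\UE.\enc$ by sampling the quantum share from the \emph{conditional} distribution given those classical shares and the message. But the paper's UE syntax (\Cref{def:ue}) requires $\keygen,\enc,\dec$ to be efficient algorithms, and conditional sampling of $\rho_1$ given fixed classical shares is not efficient for a general $\USS$ scheme; your proposed repair---restricting to schemes in a ``canonical form'' where the classical shares are (essentially) the classical randomness and $\rho_1$ is an efficient function of them and $m$---is an extra structural hypothesis that the theorem does not make, so the statement as written is not established. Moreover, even granting the form, your $\keygen$ needs a well-defined $m$-independent marginal: IND-unclonability only gives that the classical-share marginals under $m_0$ and $m_1$ are computationally indistinguishable (statistically close only in the IT setting), not identical, so both the definition of $\keygen$ (which reference message?) and the claim that your reduction's simulation is ``statistically close'' to the UE game need more care; in the computational setting the closeness argument threatens to become circular, since testing the deviation again requires the inefficient conditional sampler.

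The paper avoids all of this by \emph{not} using the classical shares as the key: $\keygen$ outputs a fresh uniform $\sk$, and $\enc(\sk,m)$ runs $\USS.\share(m)$ honestly and outputs $\ct=(\rho_1,\,s)$ with $s=(y_2,\ldots,y_n)\oplus\sk$, so encryption is trivially efficient, the key is uniform and message-independent by construction, and no subset-secrecy claim is needed. In the reduction the classical shareholders pre-agree on a uniform string $r$ (shared randomness, no entanglement): $\adv_1$ feeds $(\rho_1,r)$ to the UE splitter, each $\adv_i$ ($i\neq 1$) forwards $\sk_i'=y_i\oplus r_i$ to $\bob'$ and $\charlie'$, and since $(r,\,r\oplus(y_2,\ldots,y_n))$ has exactly the distribution of $(s,\sk)$ with $\sk$ uniform, the simulation is \emph{perfect}, giving advantage exactly $\eps$ with no hybrids. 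Your second reduction direction (classical parties copying their shares to both $\bob$ and $\charlie$, quantum party running the splitter, empty entanglement graph) is the right skeleton and matches the paper's; the flaw is entirely in how you build the UE scheme on top of it.
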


\begin{proof}
In the case where only one party has a quantum share, the others classical shares, we can easily modify the above construction to have a $\UE$ scheme from $\USS$:

\begin{enumerate}
    \item $\keygen(1^\lambda, 1^{|m|})$: samples a random $\sk \gets \{0,1\}^{(n-1)\ell}$, where $\ell = \ell(\lambda)$ is the number of qubits/bits in each share generated by $\USS.\share(1^\lambda, 1^{|m|}, \cdot)$.
    Output $\sk$.

    \item $\enc(\sk,m):$ \begin{enumerate}
        \item compute $(\rho_{1}, y_{2},\cdots, y_n) \gets \USS.\share(1^\lambda, 1^{|m|}, m)$. $y_1, \cdots, y_n$ are binary strings.

        \item sample random $\sk \gets \{0,1\}^{(n-1)\ell}$.  Compute $s \gets (y_1, \cdots, y_n) \oplus \sk$
        
        \item Output $\ct = (\rho_{1}, s)$.
    \end{enumerate}

    \item $\dec(\ct, \sk)$: 
    \begin{enumerate}
        \item parse $\ct = (\rho_1, s)$; 
        
        \item compute $(y_1, \cdots, y_n) \gets s \oplus \sk$;
        
        \item output $m \gets \USS.\reconstruct(\rho_1, y_1, \cdots, y_n)$.
    \end{enumerate}
\end{enumerate}

\paragraph{Security}
Suppose we have an $\UE$ adversary $(\adv,\bob,\charlie)$ that wins in the IND-based $\UE$ security game with probability $\frac{1}{2} + \eps$, we construct an adversary $(\adv' = (\adv_1, \cdots \adv_n), \bob',\charlie')$ that wins in the $\USS$ IND-based security game with probability $\frac{1}{2} + \eps$. Thus, if the USS scheme is secure then $\eps$ has to be negligible. We describe $\adv_1,\cdots,\adv_n$ as follows.

Before receiving the shares from the challenger, $\adv_1, \cdots, \adv_n$ agrees on a random string $r \gets \{0,1\}^{(n-1)\ell}$.

$\adv_1$ gives $(\rho_1, r)$ to the $\UE$ adversary $\adv$.
$\adv_i$, for $i \neq 1$, when receiving the classical share $y_i$ from the challenger, computes $\sk_i' \gets y_i \oplus r_i$, where $r_i$ is the $(i-1)$-th block of length-$\ell$ string in $r$.

In the $\USS$ challenge phase, each $\adv_i$, for $i\neq 1$, sends $\sk_i'$ to both $\bob'$ and $\charlie'$. $\adv_1$ sends the bipartite state $\sigma_{\bob,\charlie}$ to $\bob'$ and $\charlie'$, where $\sigma_{\bob,\charlie}$ is the output of $\adv$. 

Then $\bob'$ feeds $\bob$ with $\sk' = (\sk_2', \cdots, \sk_n')$ as the secret key in the $\UE$ security game (and $\charlie'$ feeding $\sk'$ to $\charlie$, respectively), and outputs their output bit $b_\bob, b_\charlie$ as the answer to $\USS$ game. Since the classical part in the unclonable ciphertext is the classical information $(y_2,\cdots, y_n)$ masked by a uniformly random $\sk$, the reduction perfectly simulates the above scheme by first giving the $\UE$ adversary $\adv$ a uniformly random string $r$ and later feeding $\bob,\charlie$ with $r \oplus (y_2,\cdots, y_n)$. Thus, the advantage of $(\adv',\bob',\charlie')$ in breaking the USS security game is precisely the same as the advantage of $(\adv,\bob,\charlie)$ breaking the UE security game.

\end{proof}

\vipul{Is the above immidiate? Don't we need an argument that this ``equivalent" scheme after teleportation is still an uncloneable SS? If this works, wouldn't this end up showing that there is no hope of building uncloneable SS (w/o RO and with entanglement) unless we make progress on uncloneable encryption first? }
\jiahui{This would only imply a barrier for building USS against adversaries with connected-shared entanglement graph unless we build UE first. In the case where there's no connected entanglement, this conversion doesn't hold, so if we have >= two parties receiving quantum shares, they would not imply UE}

\subsection{Search-based USS Implies Position Verification} 
\label{sec:uss_imply_PV}

\qipeng{ Observations:
1. 2-party USS implies 1-dimensional PV. (seems harder than PV due to indistinguishability based security; while PV needs search security only.) 

2. $n$-party USS seems easier than $(n-1)$-dim PV. 
}

\noindent \pnote{it would be good to flesh this out.}

\paragraph{Quantum Position Verification}
We first give a definition of 1-dimensional quantum position verification.

A 1-dimensional quantum position-verification protocol in the vanilla model\footnote{The vanilla model is a model where we do not consider hardware restrictions on any parties. Usually it means that we do not work with bounded storage/bounded retrieval model; all parties have synchronized clocks. } of verifier ($V_0, V_1$) and prover $P$ consists of the following stages:
\begin{enumerate}
    \item Setup: Verifiers $V_0, V_1$ exchange information over another secure (possibly quantum) channel unknown to $P$ to prepare for the (potentially quantum) challenge $(\rho_x,\rho_y)$. $V_0, V_1$ also make sure that they are located on the two different sides of the prover $P$.
    
    \item Challenge: \begin{itemize}
        \item Verifiers $V_0$ sends $\rho_x$ to $P$ and $V_1$ sends $\rho_y$ to $P$ so that the two pieces of information reach $P$ at the claimed position $pos$ at the same time.
        
        \item $P$ computes $\cP(\rho_x,\rho_y)$ for some quantum channel $\cP$ instantaneously and sends the (possibly quantum) answers $\rho_{\ans,0}$ and  $\rho_{\ans,1}$ back to $V_0, V_1$.
        
        \item $V_0, V_1$ check if the answers arrive on the correct time  and if $\cP(\rho_x,\rho_y)$ is computed correctly. If both yes, accept; otherwise if one condition is violated, reject.
    \end{itemize}
\end{enumerate}
For any position $pos$ (within the capability of the verifiers), we want the protocol to satisfy two properties in terms of a security parameter $n$:
\begin{itemize}
    \item \textbf{Correctness}: For any honest prover $P$ at claimed position $pos$, there exists a negligible function $\negl(\cdot )$ such that the probability  that the verifiers accept is at least $(1 - \negl(\lambda))$.
    
    \item \textbf{Soundness}: For any malicious provers $(P_0, P_1, \cdots, P_k)$ (where $k = \poly(\lambda)$), none of which at the claimed position $pos$, there exists a negligible function $\negl(\cdot )$ such that the probability that the verifiers accept is at most $\negl(\lambda)$. 
\end{itemize}
where  the probabilities are taken over the randomness used in the protocol.

It can be observed that we can consider only two malicious provers $(P_0, P_1)$ since adding more provers won't help increase their winning probability.

\paragraph{QPV with Pre-shared Entanglement}
In QPV, we also consider different adversarial setup such as: (1) $(P_0, P_1)$ do not have pre-shared entanglement; (2) $(P_0, P_1)$ can share a bounded/unbounded polynomial amount of entanglement; 
(3) $(P_0, P_1)$ can share unbounded amount of entanglement.
We also divide the settings into computational and information-theoretic.

\begin{theorem}
\label{thm:uss_imply_pv}
   2-party $\USS$(computational/IT resp.) with search-based security (\Cref{def:uss_search}) implies 1-dimensional QPV (computational/IT, resp.), where the two adversarial provers in the QPV protocol pre-share the same amount of entanglement as the two parties in the $\USS$ protocol do.
\end{theorem}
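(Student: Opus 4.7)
} The plan is to directly use the shares of a $2$-party $\USS$ scheme as the two verifier challenges in a $1$-dimensional QPV protocol, and then argue that any successful cheating strategy in QPV can be transformed, in a completely structural way, into a splitting-plus-reconstruction attack on search-based USS. Concretely, I would define the QPV protocol as follows: during the setup phase, $V_0$ and $V_1$ jointly sample a uniformly random string $s$ (using their private classical/quantum channel) and compute $(\rho_0,\rho_1) \gets \USS.\share(1^\lambda, 1^2, s)$; at the challenge phase, $V_0$ sends $\rho_0$ from its side, $V_1$ sends $\rho_1$ from its side, timing them so that both arrive at the claimed position $pos$ at the same instant. The prover, upon receiving both states, runs $\USS.\reconstruct(\rho_0,\rho_1)$ and returns the resulting string $s'$ back to both $V_0$ and $V_1$; the verifiers accept iff both answers arrive within the expected time window and both equal $s$. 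Correctness is immediate from the correctness of $\USS$.

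For soundness, I would consider any two (or more) colluding provers $(P_0, P_1)$ with the same amount of pre-shared entanglement as allowed in the USS security definition, neither of which is located at $pos$. By the standard light-cone argument for $1$-dimensional QPV, such an attack consists of two phases: (i) $P_0$ receives $\rho_0$, $P_1$ receives $\rho_1$, each applies a local quantum channel producing two registers, and then they exchange one simultaneous round of messages; (ii) after this exchange, $P_0$ holds its kept register together with the register received from $P_1$ and must output $s$ back to $V_0$, while $P_1$ symmetrically holds its kept register together with the register received from $P_0$ and must output $s$ back to $V_1$. Crucially, $P_0$ and $P_1$ cannot communicate further before producing their answers, otherwise they would not meet the timing constraint.

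I would then give a black-box reduction from such a $(P_0,P_1)$ to a search-based USS adversary $(\{\alice_1,\alice_2\},\bob,\charlie)$: set $\alice_1 := P_0$'s first-phase channel, and let the bipartite state $\sigma_{\mathbf{X}_1\mathbf{Y}_1}$ that $\alice_1$ produces be defined so that $\mathbf{X}_1$ (sent to $\bob$) is the register $P_0$ keeps locally, and $\mathbf{Y}_1$ (sent to $\charlie$) is the register $P_0$ would ship to $P_1$; symmetrically define $\alice_2 := P_1$'s first-phase channel, with $\mathbf{X}_2$ being the register $P_1$ ships to $P_0$ and $\mathbf{Y}_2$ the register $P_1$ keeps. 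Then $\bob$ runs $P_0$'s second-phase channel on $(\mathbf{X}_1, \mathbf{X}_2)$ and outputs what $P_0$ would send to $V_0$, and $\charlie$ runs $P_1$'s second-phase channel on $(\mathbf{Y}_1,\mathbf{Y}_2)$ and outputs what $P_1$ would send to $V_1$. By construction, the joint distribution of $(\bob,\charlie)$'s outputs is identical to the joint distribution of the two provers' answers, and the pre-shared entanglement between $\alice_1$ and $\alice_2$ is exactly the entanglement between $P_0$ and $P_1$. Hence the probability that both verifiers accept equals the probability that both $\bob$ and $\charlie$ recover the random secret $s$, which by search-based security of $\USS$ is negligible.

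The main obstacle that I anticipate is verifying the \emph{exact} correspondence between the two models' communication patterns: one must check that every piece of quantum information that crosses between $P_0$ and $P_1$ in the QPV attack can be cleanly partitioned into an ``$\alice_1 \to \charlie$'' register and an ``$\alice_2 \to \bob$'' register, with no extra cross-talk and no need for $\bob$ or $\charlie$ to receive a further message (in the most general QPV attack one allows a single simultaneous exchange, which matches USS exactly, but in variants allowing multiple rounds the equivalence would break down). A secondary subtlety is that the QPV soundness definition asks for inverse-polynomial, not merely negligible, advantage to be ruled out; since the search-based USS definition upper-bounds the winning probability by $\negl(\lambda)$ over a sufficiently large secret space, choosing the length of $s$ to be super-logarithmic and invoking the reduction contrapositively yields the required negligible soundness error. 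The computational-versus-information-theoretic distinction follows for free, because the reduction is efficient and uses only the adversary's own circuits as subroutines.
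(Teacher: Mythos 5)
Your proposal matches the paper's proof essentially step for step: the same QPV protocol (shares of a random secret sent as the two challenges, prover runs $\reconstruct$ and returns the secret), the same decomposition of a general one-round QPV attack into local first-phase channels plus a single simultaneous exchange, and the same register-routing reduction in which each prover's kept register goes to one recoverer and its transmitted register goes to the other, with $\bob$ and $\charlie$ running the provers' second-phase measurements. This is correct and is the paper's own argument, including the observation that entanglement and winning probability are preserved exactly by the reduction.
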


\noindent The following theorem demonstrates from another point of view the barrier of constructing secure protocols against entangled adversaries for $\USS$ in the IT setting.
Even if we consider computational asusumptions, the development in building secure QPV protocols against entangled adversaries has been slow, which indicates further evidence on how challenging $\USS$ can be in the entangled setting.
\begin{theorem}[\cite{beigi2011simplified,buhrman2014position}]
\label{thm:pv_impossibility}
    Quantum position verification is impossible in the information theoretic setting if we allow the adversaries to preshare entanglement.
\end{theorem}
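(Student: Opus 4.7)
The plan is to exhibit, for an arbitrary $1$-dimensional QPV protocol with honest prover channel $\cP$, an explicit two-prover adversary $(P_0,P_1)$ that breaks soundness using unbounded pre-shared entanglement. Position $P_0$ just to the left of the claimed location $pos$ (between $V_0$ and $pos$) and $P_1$ just to the right, so that $P_0$ intercepts $\rho_x$ at exactly the time $V_0$'s message would have reached $pos$, and symmetrically $P_1$ intercepts $\rho_y$. Because any signal from $P_0$ to $P_1$ (or back) travels at most the speed of light, the provers effectively have only \emph{one simultaneous round} of message exchange before each must send an answer register back to its nearer verifier on schedule. The task is therefore to simulate $\cP(\rho_x \otimes \rho_y)$, with correct answer halves delivered to the two sides, using a single round of simultaneous communication.

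The core technique is instantaneous non-local computation via \emph{bidirectional port-based teleportation}, recalled in \Cref{sec:port_based_teleport}. First, $P_0$ and $P_1$ pre-share two large port-based teleportation resources, each with $N = O(2^d)$ ports, where $d$ is the total dimension of $(\rho_x,\rho_y)$. When the challenges arrive, $P_1$ port-teleports $\rho_y$ across to $P_0$'s side using one resource, obtaining a classical port index $i_1$; simultaneously $P_0$ port-teleports $\rho_x$ across to $P_1$'s side using the other, obtaining an index $i_0$. The critical feature of port-based teleportation is that no correction unitary is required on the receiving side: one of the $N$ output registers on each side holds the teleported state in the clear, with the index known only to the sender. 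Each prover then applies the honest channel $\cP$ locally to every pair of ports it controls, producing $N$ copies (indexed by its own and a guessed partner index) of candidate answer pairs $(\rho_{\ans,0},\rho_{\ans,1})$; the one indexed by the true $(i_0,i_1)$ equals $\cP(\rho_x\otimes \rho_y)$ up to error $\varepsilon(N)\to 0$.

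During the same simultaneous round, $P_0$ sends $i_0$ (together with the full collection of candidate $\ans_0$-registers, or more efficiently the single selected one) to $V_0$, and $P_1$ sends $i_1$ and the candidate $\ans_1$-registers to $V_1$. Because both provers are equidistant from the claimed position, the announcements reach the verifiers at the honest response time. Each verifier uses the pair $(i_0,i_1)$—obtained from the two adjacent answer packets, or equivalently by the provers forwarding each other's index through the verifiers themselves—to demultiplex the correct answer register. By the port-based teleportation guarantee, the selected answer is $\varepsilon(N)$-close to $\cP(\rho_x\otimes\rho_y)$ in trace distance, so the verifiers accept with probability $1-\negl(\lambda)$ by taking $N$ super-polynomial in $\lambda$, violating soundness.

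The main obstacle is the timing/causality bookkeeping: one has to verify that every classical index one prover needs to produce its final outgoing message is obtainable from information either local to that prover or arriving within the light-cone of the single permitted round. This is handled exactly because port-based teleportation produces its index at the \emph{sender}, while the \emph{receiver} already holds all candidate outputs without needing correction data; the verifiers then combine the two indices at decoding time, which is allowed since the indices only need to reach $V_0$ and $V_1$, not the other prover. A secondary technical point is the exponential entanglement cost in $d$, which is unavoidable for this strategy but acceptable in the information-theoretic setting stated in the theorem; this is precisely the quantitative gap that computational QPV tries to exploit, and it explains why the impossibility is stated only in the IT model.
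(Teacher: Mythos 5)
There is a genuine gap, and it sits at the heart of your attack. First, note that the paper does not prove this theorem at all: it is imported from \cite{beigi2011simplified,buhrman2014position}, and the closest in-house analogue is the attack on information-theoretic USS in \Cref{sec:general_impossibility}. So your proposal should be judged against the known Vaidman/Beigi--K\"onig instantaneous non-local computation attack --- and your ``bidirectional port-based teleportation'' step does not implement it. After $P_0$ port-teleports $\rho_x$ to $P_1$'s side and $P_1$ simultaneously port-teleports $\rho_y$ to $P_0$'s side, teleportation has \emph{consumed} each sender's input: $P_0$ is left holding only $N$ ports one of which contains $\rho_y$, and $P_1$ only $N$ ports one of which contains $\rho_x$. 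At no point does any single local system contain both $\rho_x$ and $\rho_y$, so the instruction ``apply the honest channel $\cP$ locally to every pair of ports it controls'' is vacuous --- there are no such pairs, and the state $\cP(\rho_x\otimes\rho_y)$ is never created anywhere. The verifiers' demultiplexing by $(i_0,i_1)$ cannot rescue this, since there is nothing correct to select. (The variant where $P_0$ keeps $\rho_x$ and tries to pair it with each incoming port also fails: $\rho_x$ is a single unclonable register and can be fed into $\cP$ against at most one guessed port, giving only a $1/N$ success probability.)

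The correct structure is necessarily \emph{asymmetric} and uses two different teleportation primitives in sequence, exactly as in \Cref{sec:general_impossibility}: one prover ordinary-teleports its input to the other, leaving a Pauli one-time pad $X^aZ^b$ known only to the sender; the receiver, who now holds the \emph{entire} (padded) joint input in one place, port-teleports that assembled state (or, in the classical-output case, directly processes it) so that one party ends up with $N$ ports each of which is a candidate copy of the full joint state; that party applies the pad removal and $\cP$ independently to \emph{every} port --- which is possible precisely because the ports are disjoint registers, unlike your single unclonable $\rho_x$ --- and all classical data ($a,b$, the port index, and the $N$ candidate outputs) is exchanged in the single simultaneous round so each verifier can pick out the correct answer. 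Your timing analysis, the observation that port-based teleportation needs no receiver-side correction, and the remark about the exponential entanglement cost are all fine; but without the assemble-then-broadcast asymmetry the attack computes nothing.
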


\paragraph{Proof for \Cref{thm:uss_imply_pv}}

\begin{proof}
    Given a 2-party $\USS$ protocol with search based security, we construct a QPV protocol as follows:

    \begin{enumerate}
        \item Setup: at time $t_0$, verifiers $V_0, V_1$ sample a random secret $s \gets \{0,1\}^n$.
        Run $\USS.\share(1^\lambda, 2, s) \to (\rho_0, \rho_1)$.
        
        \item $V_0$ sends $\rho_0$ to the prover and $V_1$ sends $\rho_1$ to the prover so that $\rho_0$ and $\rho_1$ reach the prover at the same time. Let us denote the time that these two messages arrive at prover's location as $t_1$.

        \item The prover runs $\USS.\reconstruct(\rho_0,\rho_1) \to s'$ (we assume that the reconstruction procedure is instantaneous compared to the travelling time of the message). It sends the recovered secret $s'$ to both $V_0$ and $V_1$.

        \item $V_0$ and $V_1$ check if the message $s'$ from the prover arrive on time, respectively and if the message $s' = s$. If yes, accept; else reject.
    \end{enumerate}
     Suppose there's a pair of malicious provers $(P_0, P_1)$ who are not at the claimed position but make the verfiers accept with non-negligible probability, then there exists some malicious shareholders $(P_0', P_1')$ that break the search-based security of 2-party $\USS$.

We first give a general description that captures all QPV attacks against the above protocol.
\begin{itemize}
    \item     $P_0$ will stand at a location between the left-hand-side verifier $V_0$ and the claimed location $pos$.   $P_1$ will stand at a location between the right-hand-side verifier $V_1$ and the claimed location $pos$.

    \item At time $t_{0,0}$, $P_0$ receives the message $\rho_0$ from $V_0$. $t_{0,0} < t_1$ because  $P_0$ is standing closer to $V_0$ than $pos$ is.

    \item At time $t_{0,1}$, $P_1$ receives $\rho_1$ from $V_1$. For similar reason as above, $t_{0,1} < t_1$.

    \item Without loss of generality, $P_0$ applies some unitary $U_0$ on $\rho_0$ and its own auxiliary state $\aux_0$ (possibly having preshared entanglement with $P_1$) to obtain two states $\rho_{0,L}, \rho_{0,R}$ as outputs.  

    \item $P_1$ applies some unitary $U_1$ on $\rho_1$ and its own auxiliary state $\aux_1$ (possibly having preshared entanglement with $P_0$) to obtain two states $\rho_{1,L}, \rho_{1,R}$ as outputs. 

    \item $P_0$ keeps $\rho_{0,L}$ for itself and sends $\rho_{0,R}$ to $P_1$; $P_1$ keeps $\rho_{1,L}$ for itself and sends $\rho_{1,R}$ to $P_0$.

    \item After $P_0$  receives $\rho_{1,L}$, it applies a POVM $\Pi_0$ on the joint system of $(\rho_{0,L}, \rho_{1,L})$ to get a measurement outcome $s'_0$.

    \item After $P_1$  receives $\rho_{0,R}$, it applies a POVM $\Pi_1$ on the joint system of $(\rho_{0,R}, \rho_{1,R})$  to get a measurement outcome $s'_1$ \footnote{ $U_0, U_1, \Pi_0, \Pi_1$ are all instantaneous compared to message travelling time.}.

    \item $P_0$ sends $s_0'$ to $V_0$; $P_1$ sends $s_1'$ to $V_1$.

\end{itemize}

 Suppose the above attack succeeds with probability $\epsilon$ \footnote{It is guaranteed that using the above attack strategy, the malicious provers' messages will arrive at the verifiers on time. We omit the details here since we do not need this property for attacking $\USS$.}, we construct a pair of $\USS$ adversary $(P_0', P_1')$ that succeeds with probability $\epsilon$ against search based security:

 \begin{itemize}
     \item $P_0'$ and $P_1'$ share the same setup (preshared entanglement/shared randomness) as $P_0, P_1$ do.

     \item When receiving share $\rho_0$, $P_0'$ applies the unitary $U_0$ on $\rho_0$ and its own auxiliary state $\aux_0$ (possibly having preshared entanglement with $P_1$) to obtain two states $\rho_{0,L}, \rho_{0,R}$.  

      \item When receiving share $\rho_1$, $P_1'$ applies the unitary $U_1$ on $\rho_1$ and its own auxiliary state $\aux_1$ (possibly having preshared entanglement with $P_0$) to obtain two states $\rho_{1,L}, \rho_{1,R}$.

      \item In the reconstruction stage: $P_0'$ sends $\rho_{0,L}$ to the recoverer $\bob$ and $\rho_{0,R}$ to $\charlie$.
       $P_1'$ sends $\rho_{1,L}$ to the recoverer $\bob$ and $\rho_{1,R}$ to $\charlie$.
    
      \item  $\bob$ applies the POVM $\Pi_0$ on the joint system of $(\rho_{0,L}, \rho_{1,L})$  to get a measurement outcome $s'_0$.
       $\charlie$ applies the POVM $\Pi_1$ on the joint system of $(\rho_{0,R}, \rho_{1,R})$ to get a measurement outcome $s'_1$.

       \item By our assumption, we have $s_0' = s_1' = s$ with probability $\epsilon$.
 \end{itemize}

\end{proof}

\printbibliography[heading=bibintoc]

\appendix

\section{Additional Preliminaries}

\subsection{Gate Teleportation Protocol}
\label{sec:quantum_gate_tp}

 Suppose Alice has a quantum state $\ket{\psi}$ (without loss of generality, $\ket{\psi}$ is a one qubit state) and Alice and Bob share one half each of an EPR pair. Then Alice can send her state to Bob using the quantum teleportation protocol. This requires Alice to perform a  measurement on the input as well as her half of the EPR pair (let the output of these measurements be $a,b$). Then Bob's part of EPR pair gets transformed to the state $X^b Z^a \ket{\psi}$. 

\par A simple modification to the quantum teleportation protocol allows us to achieve gate teleportation. That is, if we could apply $G$ to Bob's half of the EPR pair and then apply the quantum teleportation circuit, Bob gets the state $G\left(X^bZ^a\ket{\psi}\right)$.
 To obtain the correct outcome $G\ket{\psi}$, Bob needs to compute an update function  $f$ that helps him obtain $(a', b') \gets f(a,b)$ where the inputs are the Pauli correction $(a,b)$ ; Bob then applies $X^{a'} Z^{b'}$ on his register that hold $G\left(X^bZ^a\ket{\psi}\right)$.

\subsubsection{Update function}
\label{sec:update_function}

 We consider the following identities (ignoring the global phase) verbatim from~\cite{BK21}. Let $\ket{\psi}$ be a 1-qubit state and $\ket{\phi}$ be a 2-qubit state. 
As introduced in \cite{BK21}, we would like to obtain $G\left(\ket{\psi}\right)$ from $G\left(X^bZ^a\ket{\psi}\right)$ (output of gate teleportation). To look at how we can get $G\left(\ket{\psi}\right)$ from $G\left(X^bZ^a\ket{\psi}\right)$, we look at the case when $G$ is a 1-qubit gate and $\ket{\psi}$ is a 1-qubit state. For this we get the following identities (ignoring the global phase):
 \begin{align*}
     \X\left(\X^a\Z^b\right)\ket{\psi} & =  \left(\X^a\Z^b\right)\X\ket{\psi} \\
    \Z\left(\X^a\Z^b\right)\ket{\psi} & =  \left(\X^a\Z^b\right)\Z\ket{\psi} \\
   \Hgate\left(\X^a\Z^b\right)\ket{\psi} & =  \left(\X^b\Z^a\right)\Hgate\ket{\psi} \\
    \Pgate\left(X^aZ^b\right)\ket{\psi} & =  \left(\X^{a}\Z^{a \oplus b}\right)\Pgate\ket{\psi} \\
     \CNOT\left(\X^{a_1}\Z^{b_1}\otimes \X^{a_2}\Z^{b_2}\right)\ket{\phi} & =  \left(\X^{a_1}\Z^{b_1\oplus b_2}\otimes \X^{a_1\oplus a_2}\Z^{b_2}\right)\CNOT\ket{\phi} \\
     \T\left(\X^a\Z^b\right)\ket{\psi} & =  \left(\X^a \Z^{b\oplus a}\Pgate^a\right)\T\ket{\psi}
 \end{align*}
 
 From the above rules, we can see that for \emph{Clifford circuits}, we can prepare a classical update funtion $f_G$ according to the quantum circuit description $G$, apply $X^{a'}$ and $Z^{b'}$, where $(a',b') \gets f_G(a,b)$, to $G(X^bZ^a\ket{\psi})$ and get $G(\ket{\psi})$.

 \paragraph{Update function for Any Quantum Circuits}
 Additionally, as discussed in \cite{BK21}: suppose a state $\ket{\psi}$ is QOTP-ed using the keys $(a_1,b_1,\ldots,a_n,b_n)$. Then, for any quantum circuit $G$ (not necessarily Clifford) applied on the QOTP-ed state, there exists a correction unitary, expressed as a linear combination of Paulis, that when applied on the QOTP-ed state yields the state $G(\ket{\psi})$. Note that computing such an update function for arbitrary quantum circuits may not be efficient, depending on the number of \T{} gates.

\subsection{Search-Based Security and Collusion-Resistant Security}
\label{sec:uss_defs_search_collusion}

\paragraph{Search-Based Security}

In this paragraph, we briefly define search-based security, a weakening of the indistinguishability definition. The security guarantees that for a random message $m \gets \{0,1\}^\lambda$, no two reconstructing parties can simultaneously recover the secret $m$, given their set of respective cloned shares. 

The security game is the same as ~\cref{def:uss_ind_security} except that we replace the 1-bit mssage $b$ with the $\lambda$-bit message $m$.

Accordingly, the security definitions are:
\begin{definition}[Search-based Information-theoretic Unclonable Secret Sharing]
\label{def:uss_search}
An $n$-party unclonable secret sharing scheme $(\share,\reconstruct)$ satisfies search-based unpredictability if for any non-uniform adversary $\adversary=\left(\{\alice_i\}_{i \in [n]},\bob,\charlie,\xi \right)$, the following holds:
$$\prob\left[1 \leftarrow \expt_{\left( \{\alice_i\},\bob,\charlie, \xi\right)} \right] \leq \negl(\secparam) $$
\end{definition}

\begin{definition}[Search-based Computational Unclonable Secret Sharing]
An $n$-party unclonable secret sharing scheme $(\share,\reconstruct)$ satisfies search-based unpredictability if for any non-uniform QPT adversary $\adversary=\left(\{\alice_i\}_{i \in [n]},\bob,\charlie,\xi \right)$, the following holds:
$$\prob\left[1 \leftarrow \expt_{\left( \{\alice_i\},\bob,\charlie,\xi \right)} \right] \leq \negl(\secparam) $$
\end{definition}

\paragraph{Collusion-Resistant $\USS$ Security}
$t$-collusion resistant $\USS$ 
has the same security game as in \ref{def:uss_ind_security}, except that we allow an  adversarially and adaptively chosen partition of parties into size no larger than $t$ to collude: that is, in the share-phase, the shareholders can be partitioned into groups of size at most $t$, and within each group, the shareholders can communicate and output one bipartite state $\sigma_{\bob\charlie}$ to send $\bob,\charlie$ before the reconstruct stage. 
\vipul{ in addition, each group  outputs to registers: one for B and another for C}

\begin{claim}
\label{claim:2uss_imply_collusion}
    The existence of 2-party $\USS$ unconditionally implies $n$-party $\USS$ with any $t \, (t \leq n-1)$-collusion resistance.
\end{claim}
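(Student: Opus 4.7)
The plan is to construct the $n$-party USS by running one instance of the given 2-party USS scheme (call its procedures $\share_2$ and $\reconstruct_2$) for every unordered pair $\{p,q\}\subseteq[n]$. On input $m$, the dealer samples independent bits $r^{pq}$ subject to $m=\bigoplus_{p<q}r^{pq}$ (draw all but one uniformly and solve for the last), runs $(\alpha^{pq},\beta^{pq})\gets\share_2(r^{pq})$, and delivers $\alpha^{pq}$ to party $p$ and $\beta^{pq}$ to party $q$. Party $i$'s share is thus the bundle $\{\alpha^{iq}:q>i\}\cup\{\beta^{pi}:p<i\}$, and reconstruction applies $\reconstruct_2$ to every matching pair and XORs the resulting bits. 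Correctness is immediate.

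For security against any collusion partition of $[n]$ into groups of size at most $n-1$, the key observation is that no single group can contain all $n$ parties, so every such partition has at least two groups; a short counting argument shows that in any such partition, at least $n-1$ of the $\binom{n}{2}$ pairs $\{p,q\}$ are \emph{split} across two different groups (the minimum $n-1$ is attained only by the $(n-1,1)$ partition). For any split pair $\{p^*,q^*\}$, the shares $(\alpha^{p^*q^*},\beta^{p^*q^*})$ lie in two non-communicating super-parties, which is exactly the 2-party USS setting for $r^{p^*q^*}$. Because the remaining $r^{pq}$ bits information-theoretically mask $m$, unclonability of $r^{p^*q^*}$ lifts to unclonability of $m$.

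The formal reduction runs as follows. Upon receiving $(m_0,m_1)$ from the $n$-party adversary, uniformly guess a candidate split pair $\{p^*,q^*\}$, sample fresh $r^{pq}$ for all other pairs, set $X=\bigoplus_{\{p,q\}\ne\{p^*,q^*\}}r^{pq}$, and submit $(m_0\oplus X,m_1\oplus X)$ to the external 2-party USS challenger. Using randomness shared through $\xi$, $\mathcal{A}_1^{2\text{party}}$ simulates every collusion group sitting on the $p^*$-side (installing the received $\alpha^*$ in place of $\alpha^{p^*q^*}$) while $\mathcal{A}_2^{2\text{party}}$ simulates all remaining groups (installing $\beta^*$ in place of $\beta^{p^*q^*}$). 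Since at least $n-1$ of the $\binom{n}{2}$ pairs are split in every admissible partition, the guess is correct with probability $\Omega(1/n)$, and when correct, the simulation is perfect and the $n$-party adversary's $\epsilon$-advantage carries through.

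The main obstacle is the bad case where the guess is wrong, i.e., $p^*$ and $q^*$ end up inside the same group: neither super-party can then coherently simulate that group, and a naive fallback that makes $\bob^{2\text{party}},\charlie^{2\text{party}}$ output independent random bits contributes only $1/4$ to the success probability, which overwhelms the $\Omega(\epsilon/n)$ gain from the good case. The fix is to pre-sample a common bit $b^*\in\{0,1\}$ inside $\xi$; when the $\mathcal{A}_i^{2\text{party}}$'s detect from the adversary's choice of partition that the guess was wrong, each classically embeds $b^*$ into its output registers so that $\bob^{2\text{party}}$ and $\charlie^{2\text{party}}$ both decode $b^*$, agreeing with the challenge bit with probability exactly $1/2$. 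The bad case then contributes $1/2$ to the total success probability, so the overall 2-party USS advantage is $\Omega(\epsilon/n)$, contradicting 2-party USS security whenever $\epsilon$ is non-negligible.
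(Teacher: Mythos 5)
Your construction and reduction follow essentially the same route as the paper: share $m$ additively across one $2$-party $\USS$ instance per unordered pair (the paper phrases this as a classical $\binom{n}{2}$-out-of-$\binom{n}{2}$ secret sharing of the secret), observe that every admissible partition leaves at least one pair split across groups, and reduce by guessing a split pair while simulating all other instances yourself. Your counting (at least $n-1$ split pairs, minimized by the $(n-1,1)$ partition) is correct, and your write-up is considerably more explicit than the paper's sketch, which only argues for recovering the secret (i.e., search-style security).

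The gap is in the final advantage accounting for the indistinguishability game. The paper's definition lets the partition be chosen adversarially and adaptively, so the number of split pairs $|S|$ is a random variable between $n-1$ and $K=\binom{n}{2}$ that the adversary can correlate with whether it wins. With your shared-bit fallback, your reduction's advantage is exactly $\mathbb{E}\bigl[(|S|/K)\,(W-\tfrac12)\bigr]$, where $W$ is the $n$-party win indicator; since $W-\tfrac12$ can be negative, this is \emph{not} bounded below by $(2/n)\varepsilon$. Concretely, an adversary that mounts its real attack only under an $(n-1,1)$ partition (only $n-1$ split pairs) and, under the all-singletons partition (all $K$ pairs split), deliberately makes $\bob$ and $\charlie$ agree with $b$ slightly less than half the time, can have non-negligible overall advantage $\varepsilon$ while driving $\mathbb{E}[(|S|/K)(W-\tfrac12)]$ to zero; the $b^*$ trick neutralizes the bad case but cannot repair this correlation. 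Note the problem vanishes for search-based security --- which is what the paper's sketch implicitly uses --- because there the bad case contributes $0$ and the pointwise bound $|S|\ge n-1$ gives advantage at least $(2/n)\Pr[\text{win}]$ with no independence assumption. To salvage your indistinguishability statement you need either partitions fixed independently of the challenge (then $|S|$ is constant and your bound is fine), a conditioning argument over partitions (fine when their number is polynomial), or to pass through search security. A smaller point to state explicitly: since the two halves of your reduction cannot communicate, both must be able to determine locally, from $\xi$ and an announced partition, which groups each simulates and who holds which pre-generated halves of the other pairs' shares; this presumes the partition is declared before the shares are dealt and should be said.
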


\begin{proof}
    First, let $k = \binom{n}{2}$.
    In the collusion resistance protocol, we first run a classical information theoretic $k$-out-of-$k$ secret sharing protocol on the original share $x$  to obtain classical shares $x_1,\cdots, x_k$.
    Next, for each
    every 2 party among all the $n$ parties for the collusion resistance adversary,
    run a 2-party $\USS$ protocol on secret $x_i, i\in [k]$. Thus, no matter how the adversary partitions the $n$ parties into groups, with each group having size $t < n$, there will always be at least a pair two parties apart in two different groups. 
    If the $n$-party collusion resistance protocol is insecure, then there exists a reduction that embeds a 2-party $\USS$ challenge in the secret sharing (by guessing correctly a separated pair of shareholders with inverse polynomial probability).
    The reduction can simulate the remaining $(k-1)$ runs of the two party $\USS$ protocols on its own. Suppose the collusion resistance adversary can output the correct secret in the end, then the reduction can recover the secret of the two party $\USS$ challenge as well.
    
    \end{proof}

\noindent \pnote{for eurocrypt, we might have to provide more details for the above claim: are we talking about indistinguishability or search security? what type of entanglement graphs would the collusion resistant protocol handle?}

\begin{corollary}
\label{cor:collusion_security_qrom}
 There exists a $n$-party $t$-collusion resistant $\USS$ protocol with indistinguishability-based security  against adversaries sharing an arbitrary amount of (connected) entanglement in QROM, for any polynomial $n \geq 2$ and any $t < n$.
\end{corollary}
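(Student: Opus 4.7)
The plan is to simply combine the two preceding results: \Cref{thm:uss_qrom} provides a $2$-party $\USS$ scheme in the QROM with indistinguishability-based security against adversaries sharing an arbitrary (connected) amount of entanglement, and \Cref{claim:2uss_imply_collusion} shows (unconditionally) how to bootstrap any $2$-party $\USS$ into an $n$-party $t$-collusion resistant $\USS$ for any $t < n$.

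First I would instantiate the $\USS_1$ scheme of \Cref{thm:uss_qrom} with $n = 2$, obtaining a pair $(\share_2, \reconstruct_2)$ secure in QROM against entangled adversarial shareholders. Then I would apply the composition of \Cref{claim:2uss_imply_collusion}: on input a message $m$, the dealer first runs an information-theoretic $k$-out-of-$k$ classical secret sharing scheme with $k = \binom{n}{2}$ to obtain $x_1, \ldots, x_k$, and then for each unordered pair $\{i,j\} \subseteq [n]$ (indexed by some $\ell \in [k]$), runs an independent instance of $\share_2(1^\lambda, x_\ell)$ whose two shares are handed to $\mathcal{P}_i$ and $\mathcal{P}_j$. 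Reconstruction is done by each party first running $\reconstruct_2$ on its paired shares (after gathering together) and then XOR-ing all recovered $x_\ell$'s.

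The security reduction follows the argument in \Cref{claim:2uss_imply_collusion}: any adversarial partition of the $n$ parties into groups of size at most $t < n$ must leave at least one pair $\{i^*, j^*\}$ split between two distinct groups. A reduction guesses such a pair with probability $\geq 1/k = 1/\poly(\lambda)$, embeds the $2$-party $\USS$ challenge (under \Cref{def:uss_ind_security}) into the shares associated with $\{i^*, j^*\}$, and simulates the remaining $k-1$ instances honestly (possible since they are independent). The hidden $x_{\ell^*}$ perfectly masks the underlying secret $m$ from the $k$-out-of-$k$ classical sharing, so any indistinguishability adversary for the composed scheme translates into an indistinguishability adversary against the underlying $2$-party $\USS_1$ in QROM, contradicting \Cref{thm:uss_qrom}.

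The only mildly delicate point is checking that the composed scheme preserves the entanglement model: the adversarial shareholders can still share unbounded (connected) entanglement because the reduction merely re-routes two of the shares to the external challenger, and the classical secret shares $x_\ell$ for $\ell \neq \ell^*$ can be simulated locally without touching the adversary's entanglement. Aside from this sanity check, the corollary is essentially immediate from chaining \Cref{thm:uss_qrom} with \Cref{claim:2uss_imply_collusion}.
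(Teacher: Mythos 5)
Your proposal is correct and matches the paper's argument, which likewise obtains the corollary by chaining \Cref{claim:2uss_imply_collusion} (the $\binom{n}{2}$ pairwise composition with a classical $k$-out-of-$k$ sharing and a pair-guessing reduction) with the $2$-party instantiation of \Cref{thm:uss_qrom}; the paper simply states this chaining without spelling out the details you fill in.
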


\noindent The corollary easily follows from the above and \Cref{thm:uss_qrom}. 

\pnote{should we remove this? adding a remark might also be good}
\jiahui{will just write in a paragraph or two}

\jiahui{
Observation: 1.$n$ party USS implies $(n-1)$ party USS

2. $2$-party USS can be turned into n party USS with collusion resistance.
Our construction in QROM satisfies 2-party security thus collusion resistant
}
\vipul{We can also consider a stronger notion where up to t-1 parties collude: they pool their shares together and then output cloned shares. I haven't thought about it too much but it seems that our constructions might be generalized to achieve this stronger notion?}

\vipul{Self reminder: add some discussion about non-malleable and leakage-resiliant secret sharing models}


\end{document}